\documentclass[11pt,a4paper]{article}

\usepackage[margin=1in]{geometry}
\usepackage[T1]{fontenc}
\usepackage{lmodern}
\usepackage{microtype}
\usepackage{setspace}
\usepackage{amsmath,amssymb,amsthm,mathtools,bm}
\usepackage{graphicx}
\usepackage{xcolor}
\usepackage{booktabs}
\usepackage{threeparttable}
\usepackage[authoryear,round,longnamesfirst]{natbib}
\usepackage{hyperref}
\usepackage[nameinlink,noabbrev]{cleveref}
\usepackage{fancyhdr}

\graphicspath{{figures/}}

\newcommand{\rev}[1]{#1}

\newcommand{\note}[1]{}

\newcommand{\papertitle}{Random Cap: Optimal Informationally Robust Delegation}

\newif\ifblindreview
\blindreviewfalse

\ifblindreview
  \author{Anonymous}
  \hypersetup{pdftitle={\papertitle},pdfauthor={Anonymous}}
\else
  \author{Jia Zhiyuan\\[0.35em]
    \normalsize Department of Economics, National University of Singapore\\
    \normalsize \href{mailto:e1374371@u.nus.edu}{e1374371@u.nus.edu}}
  \hypersetup{pdftitle={\papertitle},pdfauthor={Jia Zhiyuan}}
\fi

\title{\papertitle\thanks{I am deeply indebted to my supervisor, Yi-Chun Chen, for continuous guidance throughout this project. I also thank Tan Gan, Zhengqing Gui, Zhenyu Hu, Nenad Kos, Fei Li, Jiangtao Li, Yingkai Li and Xiangqian Yang for helpful discussions.  All errors are my own.}}
\date{August 24, 2026}

\hypersetup{
  colorlinks=true,
  linkcolor=blue!45!black,
  citecolor=blue!45!black,
  urlcolor=blue!55!black
}

\ifblindreview
\else
\fi
\newif\ifshowlinenumbers
\showlinenumbersfalse
\ifshowlinenumbers
  \usepackage[mathlines,switch]{lineno}
  \modulolinenumbers[5]
\fi

\newtheorem{theorem}{Theorem}
\newtheorem{proposition}{Proposition}
\newtheorem{lemma}{Lemma}
\newtheorem{corollary}{Corollary}
\theoremstyle{definition}
\newtheorem{definition}{Definition}
\newtheorem{assumption}{Assumption}

\theoremstyle{remark}
\newtheorem{remark}{Remark}

\crefname{theorem}{Theorem}{Theorems}
\Crefname{theorem}{Theorem}{Theorems}
\crefname{assumption}{Assumption}{Assumptions}
\Crefname{assumption}{Assumption}{Assumptions}
\crefname{proposition}{Proposition}{Propositions}
\Crefname{proposition}{Proposition}{Propositions}
\crefname{lemma}{Lemma}{Lemmas}
\Crefname{lemma}{Lemma}{Lemmas}
\crefname{corollary}{Corollary}{Corollaries}
\Crefname{corollary}{Corollary}{Corollaries}
\crefname{definition}{Definition}{Definitions}
\Crefname{definition}{Definition}{Definitions}

\numberwithin{equation}{section}

\DeclareMathOperator{\Var}{Var}

\DeclareMathOperator*{\argmax}{arg\,max}
\DeclareMathOperator*{\argmin}{arg\,min}

\begin{document}

\maketitle

\begin{abstract}
Are simple delegation rules optimal under ambiguity? We study delegation when
the principal knows the mean, but not the distribution, of the agent's private
information. In a parsimonious quadratic constant-bias environment, the robustly optimal
randomized mechanism is a random cap: the principal draws and reveals an upper
bound, below which the agent chooses freely. Randomization strictly outperforms
every deterministic cap by hedging against cap-specific worst-case
distributions. We characterize random caps through a nondecreasing and concave
expected-action rule and construct the solution using a saddle-point approach.
The worst-case distribution features an exponential survival function over its
continuous region and an atom at the upper endpoint. Under regularity
conditions, the result extends to convex-order ambiguity. Moreover, when the
mean is below the agent's bias, an optimum can be implemented by supplementing the random cap with an incentive-neutral outcome lottery, while pure random caps are strictly suboptimal.
\end{abstract}

\noindent\textbf{Keywords:} delegation; distributional robustness; random
mechanisms.

\noindent\textbf{JEL classification:} C61, D82, D83.

\vspace{1.25em}

\ifshowlinenumbers
  \linenumbers
\fi

\section{Introduction}\label{sec:introduction}

Information and control rarely reside in the same hands. In many
organizations, the party best informed about the appropriate action is also
biased in how that action is chosen \citep{holmstrom1978incentives}. Citizens
and legislatures cannot condition fiscal policy on every contemporaneous
spending need, while elected officials who observe those needs may be tempted
to overspend. Trade agreements cannot prescribe the appropriate tariff
response to every possible market condition, while national governments have
protectionist incentives to set tariffs too high. These environments share a
basic commitment--flexibility tradeoff: centralizing the decision preserves
control but wastes private information, whereas granting full discretion uses
that information but exposes the principal to the agent's bias.

Delegation offers a middle ground between centralized decision making and
unrestricted discretion. The principal commits ex ante to a set of permissible
actions and allows the privately informed agent to choose from that set. A
canonical example is a cap: when the agent is biased toward higher actions, she
retains discretion only up to a fixed upper bound. Depending on the
application, such a cap may take the form of a spending limit
\citep{amador2006commitment}, a price cap \citep{alonso2008optimal}, a deficit
ceiling \citep{halac2014fiscal}, or a tariff binding
\citep{amador2013theory}. Beyond illustrating the range of applications, these
studies derive conditions under which caps, or more generally interval
delegation rules, are optimal. Such conditions typically impose joint
restrictions on the parties' preferences and on the distribution of the
agent's private information. 

For our purposes, the key implication, holding preferences fixed, is that the
optimality of a deterministic cap is generally not distribution-free. This
raises the question of why cap-like rules should remain attractive across
heterogeneous information environments. Moreover, existing work gives
randomization a limited role, albeit in two distinct ways. Much of the
delegation literature restricts attention from the outset to deterministic
delegation sets. \citet{kovavc2009stochastic}, by contrast, allow for stochastic
mechanisms but show that, under quadratic payoffs and certain conditions, an
optimal mechanism can be chosen to be deterministic.

To address this problem, we consider a quadratic constant-bias
environment.\footnote{Section~\ref{sec:environment} provides a simple example
showing that, even in this stylized environment, the optimal delegation rule is
sensitive to the state distribution and need not be a cap.}
The agent's ideal action is \(\theta\), whereas the principal's ideal action is
\(\theta-b\), where \(b>0\) measures the agent's upward bias. We introduce
robustness by allowing nature to choose any distribution \(G\) over
\(\theta\in[0,1]\) with known mean \(\mu>b\). The principal commits to an
incentive-compatible randomized direct mechanism
\(\pi\in\Pi^{\mathrm{IC}}\) before nature selects \(G\), and evaluates the
mechanism by its worst-case expected payoff:
\[
\sup_{\pi\in\Pi^{\mathrm{IC}}}
\inf_{G\in\mathcal G_\mu}
\mathbb E_{G,\pi}\!\left[v(a,\theta)\right],
\qquad
\mathcal G_\mu
:=
\left\{
G\in\Delta([0,1]):
\mathbb E_G[\theta]=\mu
\right\}.
\]

\rev{We refer to this criterion as informational robustness. In the spirit of
Wilson's \citeyearpar{wilson1987trading} critique, it evaluates the worst-case
guarantee of mechanisms that are detail-free and therefore portable across
informational environments; see \citet{brooksdu2025simplicity} for a survey.}
One interpretation is that the principal knows an aggregate statistic but not
the full information environment. This is natural when a common regulatory
rule or organizational policy must be applied across jurisdictions, divisions,
or periods with different state distributions, while frequent policy
adjustment is costly.

Our main result is that the robust optimum over all randomized
incentive-compatible mechanisms can be implemented by a \emph{random cap}. The
principal commits ex ante to a distribution over caps. After nature selects
the state distribution, a cap \(Z\) is drawn independently of the state and
revealed to the agent, and the principal delegates the interval \([0,Z]\).
The agent then optimally chooses
\[
a=\min\{\theta,Z\}.
\]
Thus, every realization of the mechanism is an ordinary deterministic cap;
the only additional instrument is ex ante randomization over the cap level.

Literal lotteries over legal caps are uncommon, but the relevant
implementation technology is familiar. Commitment products impose hard
restrictions on access to funds \citep{ashraf2006tying}, while randomized
credit-line extensions have been implemented in large-scale bank field
experiments \citep{aydin2022consumption}. More generally, a random cap can be
implemented by combining familiar request-and-truncation procedures with a
publicly committed lottery over approval ceilings. In applications, this may
correspond to randomized spending allowances, tiered approval of price
increases, supplementary borrowing authorizations, or safeguard and
tariff-binding regimes.

Technically, randomization and robustness reinforce each other. Randomization
protects the principal's robust guarantee by limiting nature's ability to
target a particular cap. This hedging is substantive: the optimal random cap
strictly outperforms every deterministic cap. Robustness, in turn, protects the
simplicity of the mechanism. Although the principal is free to use any
randomized incentive-compatible mechanism, finely tailored state-contingent
action lotteries provide no additional robust value. An optimum can instead be
implemented by randomizing only over ordinary caps.

\rev{The comparative statics reveal a sharp economic separation. A higher
mean expands the agent's full-discretion, or safe-harbor, region and shifts
the cap distribution upward while reducing its variance. By contrast, a
larger bias shifts the cap distribution downward, increases its variance, and
makes the cap bind more often and more tightly.}

Random caps also admit a simple characterization. We show that their
conditional moments are pinned down by a nondecreasing and concave
expected-action rule, with the cap distribution uniquely recoverable from its
slope. Mathematically, the associated restrictions parallel limited liability
and two-sided monotonicity in security design, with concavity selecting exactly
those schedules that can be represented as mixtures of standard debt.

Our methodology is based on the economics of a saddle point constructed
through a double-indifference argument. First, nature's worst-case
distribution, the shifted exponential-tail distribution \(G^*\), makes the
principal indifferent among all deterministic caps in \([\mu-b,1-b]\).
Second, the cap distribution \(R^*\) makes nature indifferent among a
corresponding class of mean-\(\mu\) distributions. Under \(R^*\), the
principal's interim payoff is affine on the region where \(G^*\) places its
continuous mass, and its affine extension also contacts the payoff at the
remaining atom. Each deterministic cap in the support of \(R^*\) is therefore
optimal against \(G^*\), while randomization across those caps protects the
principal against deviations from \(G^*\).
Randomization is valuable not because it improves the principal's payoff under
the least-favorable distribution, but because it reshapes the state-by-state
payoff so that no other feasible distribution can lower the principal's
expected payoff. Thus, the double-indifference construction converts an
infinite-dimensional max--min problem into two matching equalization problems,
with the monotonicity implied by incentive compatibility controlling the
remaining tail.

The method also extends beyond the benchmark model. We replace the mean-only
ambiguity set with a convex-order ambiguity set
\(G\preceq_{\mathrm{cx}}F\) and establish the optimality of a random cap under
regularity conditions. Convex-order ambiguity has a natural learning
interpretation because distributions of posterior means are mean-preserving
contractions of the prior. In delegation, however, this interpretation leads
to a distinct signal-design problem: posterior uncertainty remains payoff
relevant and adds a variance term to the objective. We solve that problem as a
comparison with the main model, thereby clarifying the distinction between
direct distributional ambiguity and full signal design. In the latter problem,
the information designer can fully pool the agent's information and eliminate
the commitment--flexibility tradeoff.

\paragraph{Related literature.}
Our paper contributes first to the theory of delegation. Classic work studies
how a principal restricts the discretion of a privately informed and biased
agent and provides conditions under which an interval is optimal among
deterministic delegation rules
\citep{holmstrom1978incentives,melumad1991communication,
amador2006commitment,alonso2008optimal,amador2013theory,
halac2020commitment}.
Fewer papers study stochastic delegation. Under their maintained regularity
conditions, \citet{kovavc2009stochastic} characterize stochastic
incentive-compatible mechanisms and show that an optimum can be chosen to be
deterministic in quadratic settings. Subsequent work develops majorization
methods for stochastic mechanisms and delegation
\citep{kleiner2021extreme,kolotilin2025persuasion}. We depart from this
literature by taking a robustness approach.

Second, the paper contributes to robust mechanism design. A large literature
studies mechanisms that perform well under uncertainty about distributions or
information structures. \citet{carroll2015robustness} studies robust moral
hazard, while \citet{bergemann2011robust,roesler2017buyer} study informational
robustness in monopoly problems, with extensions to multiple-buyer settings
\citep{du2018robust,brooks2024structure,zhang2022random,chen2023information} and multidimensional
environments \citep{deb2021multi,che2021robustly}. More closely related to our
formulation, \citet{carrasco2018optimal,he2024interim,chen2024screening} study
ambiguity sets defined by means or other moments. We bring this moment-based
approach to a mechanism-design problem without transfers. The absence of
transfers makes incentive compatibility a substantive restriction on
state-contingent actions, while the mean constraint gives the adversary's
problem a useful affine-duality structure.

Finally, our paper relates to recent work on robust delegation.
\citet{frankel2014aligned} studies uncertainty about the agent's preferences
and randomization over delegation sets in a multitask environment, while
\citet{AlonsoGanHu2026} study preference robustness and establish the
optimality of convex delegation sets within a class of deterministic compact
sets. In both papers, robustness concerns how the agent ranks available
actions. We instead hold preferences fixed and introduce ambiguity about the
distribution of the agent's information.

Most closely related, \citet{HuLi2023} consider mean-constrained
distributional ambiguity in delegation but restrict attention to deterministic
rules. Parallel ongoing work by \citet{HeLiLiYe2026} study
randomized delegation when the prior distribution is unrestricted. To our
knowledge, our paper is the first to establish the optimality of a random cap
among all stochastic incentive-compatible mechanisms, and our saddle-point
argument permits an extension to convex-order ambiguity.

The remainder of the paper proceeds as follows.
Section~\ref{sec:environment} introduces the environment and characterizes
random-cap implementation.
Section~\ref{sec:heuristic} constructs the candidate saddle point through
double indifference.
Section~\ref{sec:main} proves global optimality over all randomized
incentive-compatible mechanisms.
Section~\ref{sec:extensions} extends the analysis to
mean-preserving-contraction ambiguity, distinguishes the direct distributional
problem from adversarial signal design, quantifies the value of randomization,
and studies the boundary regime.
Section~\ref{sec:discussion} develops the financial interpretation and
discusses the scope and limitations of the analysis.

\section{Environment}\label{sec:environment}

We study a one-shot delegation problem with quadratic loss. The state is
\(\theta\in[0,1]\). The agent privately observes the realized state, while the
principal commits ex ante to an incentive-compatible direct mechanism that
governs the agent's action. There are neither transfers nor participation
constraints.

\subsection{Preferences and feasibility}\label{subsec:prefs}

The action space is \(A:=\mathbb R\). The parameter \(b>0\) measures the
agent's upward bias relative to the principal. Their ex post payoffs are
\[
v(a,\theta)=-\bigl(\theta-(a+b)\bigr)^2,
\qquad
u(a,\theta)=-(\theta-a)^2.
\]

\subsection{Mechanisms and incentive compatibility}\label{subsec:mech}

By the revelation principle, it is without loss of generality to consider
direct mechanisms. A direct randomized mechanism assigns to every report
\(\hat\theta\in[0,1]\) a lottery over actions with a finite second moment. We
represent this lottery directly by its conditional cdf
\[
\pi(a\mid\hat\theta)
:=\Pr(\widetilde a\leq a\mid\hat\theta),
\qquad a\in A,
\]
where \(\widetilde a\) denotes the induced random action. Under quadratic
preferences, it is useful to define the induced conditional mean and variance,
\[
m^\pi(\hat\theta)
:=\int_A a\,\pi(da\mid\hat\theta),
\qquad
q^\pi(\hat\theta)
:=\int_A\bigl(a-m^\pi(\hat\theta)\bigr)^2
\,\pi(da\mid\hat\theta).
\]

If the true state is \(\theta\) and the agent reports \(\hat\theta\), the
agent's and principal's expected payoffs are, respectively,
\[
\begin{aligned}
U^\pi(\theta,\hat\theta)
&:=\int_A u(a,\theta)\,\pi(da\mid\hat\theta)
=-\bigl(\theta-m^\pi(\hat\theta)\bigr)^2-q^\pi(\hat\theta),
\\
V^\pi(\theta,\hat\theta)
&:=\int_A v(a,\theta)\,\pi(da\mid\hat\theta)
=-\bigl(\theta-b-m^\pi(\hat\theta)\bigr)^2-q^\pi(\hat\theta).
\end{aligned}
\]

Let \(\Pi^{IC}\) denote the set of direct randomized mechanisms satisfying
\begin{equation}\label{eq:IC}
U^\pi(\theta,\theta)
\geq U^\pi(\theta,\hat\theta)
\qquad
\text{for all }\theta,\hat\theta\in[0,1].
\end{equation}
We write
\(U^\pi(\theta):=U^\pi(\theta,\theta)\) and
\(V^\pi(\theta):=V^\pi(\theta,\theta)\) for the truthful-reporting payoffs.

\subsection{Random caps and their characterization}
\label{subsec:randomcap}

Optimal deterministic delegation rules often take an interval form
(e.g., \citealp{holmstrom1980theory,alonso2008optimal,amador2013theory,kolotilin2025persuasion}).
When the agent is biased toward higher actions, the interval typically has an
upper endpoint and therefore induces a cap, often interpreted as a spending
limit \citep{amador2006commitment,amador2013theory,halac2020commitment}. This
conclusion, however, is not distribution-free: existing results impose shape
restrictions involving both preferences and the distribution of the agent's
private information. The two-point example below shows that, after a change in
the prior, a disconnected delegation set can strictly dominate every cap.

For a deterministic cap \(z\in[0,1]\), the principal delegates the interval
\([0,z]\), and the agent optimally chooses
\(a_z(\theta)=\min\{\theta,z\}\).

\begin{definition}[Random cap mechanism]\label{def:randomcap}
Let \(R\) be a cdf supported on \([0,1]\). A \emph{random cap mechanism} draws
\(Z\sim R\), independently of \(\theta\), reveals \(Z\) to the agent, and
delegates the interval \([0,Z]\). The induced action is
\[
a(\theta,Z)=\min\{\theta,Z\}.
\]
\end{definition}

A random cap is incentive compatible. Indeed, conditional on \(Z=z\), a
report \(\hat\theta\) induces the action \(\min\{\hat\theta,z\}\), while a
truthful report induces the projection of \(\theta\) onto \([0,z]\) and hence
maximizes the agent's payoff. This argument applies for every realized \(z\).

Let \(\Pi_{\mathrm{RC}}\subseteq\Pi^{IC}\) denote the set of random-cap
mechanisms. The corresponding direct mechanism has conditional cdf
\begin{equation}\label{eq:pi-R}
\pi(a\mid\hat\theta)
=\Pr\bigl(\min\{\hat\theta,Z\}\leq a\bigr)
=
\begin{cases}
R(a), & a<\hat\theta,\\[4pt]
1, & a\geq\hat\theta.
\end{cases}
\end{equation}

For a known prior, the Bayesian payoff generated by a random cap is an average
of the payoffs generated by deterministic caps. Thus, within the random-cap
class, randomization cannot strictly improve upon the best deterministic cap.
This observation helps explain why random caps have received little attention
in the Bayesian delegation literature. Our first result instead gives a
mechanism-design characterization of this class.

We say that a direct mechanism \(\pi\) is \emph{implementable by a random cap}
if a random cap reproduces \((m^\pi,q^\pi)\) at every state. Because both
players have quadratic payoffs, this is equivalent to state-by-state payoff
equivalence for both players.

\begin{proposition}[Random-cap representability]\label{prop:imple}
Let \(m,q:[0,1]\to\mathbb R\) be Borel measurable. There exists a random
variable \(Z\) supported on \([0,1]\) such that
\[
m(\theta)=\mathbb E\bigl[\min\{\theta,Z\}\bigr],
\qquad
q(\theta)=\operatorname{Var}\bigl(\min\{\theta,Z\}\bigr)
\qquad
\text{for every }\theta\in[0,1]
\]
if and only if the following conditions hold:
\begin{enumerate}
    \item[(a)] \(m(0)=0\);
    \item[(b)] \(m\) is nondecreasing and concave on \([0,1]\), and
    \(m(\theta)\leq\theta\) for every \(\theta\in[0,1]\);
    \item[(c)] for every \(\theta\in[0,1]\),
    \[
    q(\theta)
    =2\theta m(\theta)-2\int_0^\theta m(t)\,dt-m(\theta)^2.
    \]
\end{enumerate}
Moreover, the distribution of \(Z\) is unique among distributions supported on
\([0,1]\). Its cdf is
\[
R(z)=
\begin{cases}
0, & z<0,\\[3pt]
1-m'_+(z), & 0\leq z<1,\\[3pt]
1, & z\geq1,
\end{cases}
\]
where \(m'_+\) denotes the right derivative of \(m\).
\end{proposition}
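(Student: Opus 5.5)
The plan is to work with the identity \(\min\{\theta,Z\}=\int_0^\theta \mathbf 1\{Z>t\}\,dt\). This gives two formulas,
\[
\mathbb E[\min\{\theta,Z\}]=\int_0^\theta \Pr(Z>t)\,dt=\int_0^\theta\bigl(1-R(t)\bigr)\,dt,
\]
and, for the second moment,
\[
\mathbb E\bigl[\min\{\theta,Z\}^2\bigr]=\int_0^\theta 2t\,\Pr(Z>t)\,dt .
\]
Both directions of the equivalence and the uniqueness claim will be read off these two formulas.

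\emph{Necessity.} Suppose \(Z\) is supported on \([0,1]\) and write \(S(t):=1-R(t)\) for \(t\in[0,1)\). Then \(m(\theta)=\int_0^\theta S(t)\,dt\).
\begin{enumerate}
\item[(i)] Condition (a) holds trivially, since \(m(0)=0\).
\item[(ii)] For (b), note that \(S\) takes values in \([0,1]\) and is nonincreasing. So \(m\) is nondecreasing, concave, and satisfies \(m(\theta)\le\theta\).
\item[(iii)] For (c), integrate by parts: \(\int_0^\theta 2t\,S(t)\,dt=2\theta m(\theta)-2\int_0^\theta m(t)\,dt\). This uses \(m'=S\) almost everywhere, together with \(m\) being absolutely continuous. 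Subtracting \(m(\theta)^2\) gives the formula for \(q\).
\end{enumerate}

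\emph{Sufficiency and uniqueness.} Suppose (a) and (b) hold. Since \(m\) is concave on \([0,1]\), it is continuous on \((0,1]\); right-continuity at \(0\) follows from \(0\le m(\theta)\le\theta\). Hence \(m\) is absolutely continuous, and its right derivative \(m'_+\) exists on \([0,1)\), is nonincreasing, and is right-continuous. I then bound \(m'_+\) to show \(R\) is a valid cdf.
\begin{enumerate}
\item[(i)] \(m'_+\ge0\), because \(m\) is nondecreasing.
\item[(ii)] \(m'_+(0)\le1\), because \(m(\theta)\le\theta\) and \(m(0)=0\). Concavity then gives \(m'_+(z)\le m'_+(0)\le1\) for all \(z\).
\end{enumerate}
Define \(R\) as in the statement. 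It is nondecreasing, right-continuous, takes values in \([0,1]\), and jumps to \(1\) at \(z=1\). So it is a cdf supported on \([0,1]\), possibly with atoms at \(0\) and at \(1\). Draw \(Z\sim R\). Then \(\Pr(Z>t)=m'_+(t)\) for \(t\in[0,1)\), and by absolute continuity
\[
\mathbb E[\min\{\theta,Z\}]=\int_0^\theta m'_+(t)\,dt=m(\theta)-m(0)=m(\theta).
\]
The second-moment computation from the necessity step then shows that \(\operatorname{Var}(\min\{\theta,Z\})\) equals the right-hand side of (c), which is \(q(\theta)\) by hypothesis.

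For uniqueness, suppose two distributions on \([0,1]\) produce the same \(m\). Then their survival functions integrate to the same function on every \([0,\theta]\), so they agree almost everywhere on \([0,1)\). Both survival functions are right-continuous, so they agree everywhere on \([0,1)\). The support restriction fixes the value at \(1\). Differentiating from the right identifies the survival function as \(m'_+\), which gives the stated formula for \(R\).

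\emph{Main obstacle.} The delicate step is the boundary regularity at \(\theta=0\) and \(\theta=1\). I need continuity of \(m\) at \(0\) (handled by the bound \(m\le\theta\)) to get absolute continuity, and I need to be careful that an atom of \(R\) at \(1\) causes no problem in the integral formulas. It causes none, because only \(t<\theta\le1\) enters them. The integration by parts behind (c) is routine once absolute continuity is in hand.
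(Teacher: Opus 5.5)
Your proposal is correct and follows essentially the same route as the paper: the layer-cake formulas for the first and second moments of \(\min\{\theta,Z\}\), integration by parts for (c), the cdf \(R=1-m'_+\) justified by \(0\le m'_+\le m'_+(0)\le1\), and uniqueness read off from \(\Pr(Z>t)=m'_+(t)\) on \([0,1)\). One small fix: concavity alone does not give continuity at \(\theta=1\), since a concave function may drop at the right endpoint, so you need monotonicity there as well; once you have it, the bounds \(0\le m'_+\le1\) make \(m\) \(1\)-Lipschitz on \([0,1]\) and hence absolutely continuous.
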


Appendix~\ref{app:random-cap-proof} proves the proposition. The standard
envelope argument characterizes incentive-compatible mechanisms in terms of
monotonicity of \(m\), nonnegativity of \(q\), and an envelope identity; see
\citet{kovavc2009stochastic} and Lemma~\ref{lem:IC}.

This proposition also links random-cap mechanisms to the
costly-state-verification literature and to financial contracting. This
connection provides a natural economic interpretation of random caps, which
we develop in detail in Section~\ref{subsec:CSV}.

\subsection{Information and robustness}\label{subsec:robust}

After the principal commits to a mechanism, nature selects a
distribution \(G\) over \(\theta\) subject to a mean constraint. We consider the
ambiguity set
\[
\mathcal G_\mu
:=\left\{
G\in\Delta([0,1])
:\int_0^1\theta\,dG(\theta)=\mu
\right\}.
\]
A more general distributional constraint is considered in
Section~\ref{sec:extensions}.

\begin{assumption}\label{ass:mu-range}
\(0<b<\mu<1\).
\end{assumption}

Assumption~\ref{ass:mu-range} isolates the main interior regime, in which
the bias is smaller than the mean state. In the uniform-prior Bayesian
benchmark, \(\mu=1/2\), and \(b\geq\mu\) leads to full restriction within the
nonnegative cap class. We later relax the assumption in
Section~\ref{subsec:boundary}.

The main robustness problem we will study is
\begin{equation}\label{eq:maxmin}
\sup_{\pi\in\Pi^{IC}}\ \inf_{G\in\mathcal G_\mu}\
\mathbb E_{\theta\sim G}\bigl[V^\pi(\theta)\bigr].
\end{equation}
We also study the \emph{min--max} benchmark
\begin{equation}\label{eq:minmax}
\inf_{G\in\mathcal G_\mu}\ \sup_{\pi\in\Pi^{IC}}\
\mathbb E_{\theta\sim G}\bigl[V^\pi(\theta)\bigr],
\end{equation}
in which nature moves first and the inner supremum is the principal's best
response to the known distribution \(G\).

To illustrate why robustness matters, first consider a Bayesian benchmark with
a uniform prior and bias \(b=0.1\). As shown by
\citet{kovavc2009stochastic}, the optimal delegation rule in this case is the
cap \([0,z_U]\), where \(z_U=1-2b=0.8\). Now replace the uniform prior by the
two-point distribution
\[
F=\frac{1}{2}\delta_{1/2}+\frac{1}{2}\delta_1,
\]
while keeping preferences unchanged, and consider the disconnected delegation
set
\[
D=\{1/2-b,\,1-b\}=\{0.4,0.9\}.
\]
The low type chooses \(0.4\), the high type chooses \(0.9\), and the
principal's ideal action is implemented in both states. By contrast, no cap
can implement both ideal actions. A cap high enough to implement \(0.9\) at
\(\theta=1\) leaves the type \(\theta=1/2\) unconstrained at \(1/2\),
whereas a cap low enough to implement \(0.4\) at \(\theta=1/2\) also
constrains the high type. Indeed, the best cap is \([0,0.9]\): it implements
the principal's ideal action for the high type but generates a loss of
\(b^2\) for the low type, and hence an expected loss of
\(\frac{1}{2}b^2=0.005\).

Moreover, the failure of cap optimality is not an artifact of atomic
distributions; it persists under strictly positive, absolutely continuous
densities. Consider a sequence of atomless, full-support priors
\(\{F_n\}_{n\geq 2}\), each with a strictly positive and absolutely continuous
density on \([0,1]\), such that
\[
F_n
\Rightarrow
\frac{1}{2}\delta_{1/2}+\frac{1}{2}\delta_1
\qquad\text{as } n\to\infty.
\]
Such a sequence can be constructed explicitly.\footnote{For example, let
\[
f_n(\omega)
=
\frac{1}{n}
+
\frac{1-\frac{1}{n}}{2}
\left[
\frac{\omega^{n-1}(1-\omega)^{n-1}}{B(n,n)}
+
n\omega^{n-1}
\right],
\qquad \omega\in[0,1],
\]
where \(B(\cdot,\cdot)\) denotes the beta function. The two terms in brackets
are the densities of a \(\operatorname{Beta}(n,n)\) distribution and a
\(\operatorname{Beta}(n,1)\) distribution, respectively. The former
concentrates at \(1/2\), while the latter concentrates at \(1\), as
\(n\to\infty\). The first term, \(1/n\), is a vanishing uniform component
that ensures \(f_n(\omega)>0\) for every \(\omega\in[0,1]\). Since
\[
\int_0^1 f_n(\omega)\,d\omega
=
\frac{1}{n}
+
\frac{1-\frac{1}{n}}{2}(1+1)
=
1,
\]
\(f_n\) is a density. Moreover, each \(f_n\) is continuously differentiable,
and hence absolutely continuous, on \([0,1]\). It follows that
\[
F_n
\Rightarrow
\frac{1}{2}\delta_{1/2}+\frac{1}{2}\delta_1.
\]}
Under the disconnected set \(D=\{0.4,0.9\}\), types near \(1/2\) choose
\(0.4\), while types near \(1\) choose \(0.9\). The principal's expected
loss under \(D\) therefore converges to zero as \(n\to\infty\). By contrast,
the minimum expected loss attainable by a cap converges to \(0.005\), the
minimum cap loss under the limiting two-point prior. To see this, note that
the loss induced by a cap is bounded and jointly continuous in the state and
the cap level. Weak convergence of the priors therefore implies uniform
convergence of cap losses over the compact set of cap levels
\(z\in[0,1]\). The strict payoff gap consequently persists for all
sufficiently large \(n\): the disconnected set \(D\) strictly outperforms
every cap.

\section{A heuristic construction via double indifference}
\label{sec:heuristic}

This section uses two complementary indifference conditions to construct a
candidate saddle-point pair. The derivation is deliberately heuristic: on the
intervals where derivatives are taken, we temporarily suppose that the
relevant distribution and interim payoff are sufficiently regular. These
auxiliary smoothness assumptions serve only to reveal the construction. The
lemmas below state the resulting properties exactly, while
Section~\ref{sec:main} establishes global optimality over the full class of
incentive-compatible mechanisms and verifies the saddle point.

\paragraph{Heuristic derivation}
Consider first the min--max problem in \eqref{eq:minmax}. Fix a distribution
\(G\) and temporarily restrict the principal to deterministic caps. Under a
cap \(z\in[0,1]\), the agent chooses \(a=\min\{\theta,z\}\). Replacing the
full mechanism class with deterministic caps gives the benchmark
\begin{equation}\label{eq:minmax-det-cap}
\inf_{G\in\mathcal G_\mu}\ \sup_{z\in[0,1]}\ \phi_z(G),
\end{equation}
where
\begin{align}
\phi_z(G)
&:=-\mathbb E_G\!\left[
\bigl(\min\{\theta,z\}-(\theta-b)\bigr)^2
\right]
\label{eq:phi-def}\\
&=-b^2G(z)
-\int_{(z,1]}(z+b-\theta)^2\,dG(\theta).
\nonumber
\end{align}
Suppose, for the purpose of this calculation, that \(G\) has no atom on the
interior region over which \(z\) varies and that \(\phi_z(G)\) is
differentiable there. Then
\begin{align}
\frac{\partial}{\partial z}\phi_z(G)
&=-2\int_{(z,1]}(z+b-\theta)\,dG(\theta)\nonumber\\
&=-2[1-G(z)]
\left(z+b-\mathbb E_G[\theta\mid\theta>z]\right).
\label{eq:FOC}
\end{align}
Thus, to make the principal indifferent over an interval of interior caps, we
seek a distribution satisfying
\begin{equation}\label{eq:heuristic-tail-condition}
z+b=\mathbb E_G[\theta\mid\theta>z]
\quad\Longleftrightarrow\quad
\int_z^1[1-G(x)]\,dx=b[1-G(z)]
\end{equation}
throughout that interval. That is, the current cap equals the principal's
average ideal action among the types constrained by the cap. Formally
differentiating the second equality gives an exponential survival function.

For the complementary equalizer, fix a mechanism \(\pi\) and temporarily
suppose that its interim payoff \(V^\pi\) is continuously differentiable.
Integration by parts gives
\begin{equation}\label{eq:payoff-tail}
\mathbb E_G[V^\pi(\theta)]
=V^\pi(0)+\int_0^1(V^\pi)'(x)[1-G(x)]\,dx,
\end{equation}
whereas the mean constraint is
\begin{equation}\label{eq:mean-tail}
\int_0^1[1-G(x)]\,dx=\mu.
\end{equation}
Hence, if \((V^\pi)'\) is constant on an interval, redistributions of the
survival function within that interval leave expected payoff unchanged as
long as they preserve its contribution to the mean. Equivalently,
\(V^\pi\) should be affine on the region over which nature is to be made
indifferent. Because the candidate distribution suggested above also has an
isolated upper-support point at \(1\), the affine segment must extend to the
point \((1,V^\pi(1))\).

The same observation has a useful dual interpretation. Nature's problem
against a fixed mechanism is a linear moment problem whose dual searches for
an affine lower bound
\[
\ell(\theta)=\alpha+\beta\theta\leq V^\pi(\theta).
\]
Such a bound implies
\[
\mathbb E_G[V^\pi(\theta)]\geq\ell(\mu)
\qquad\text{for every }G\in\mathcal G_\mu.
\]
If \(V^\pi\) coincides with \(\ell\) on a set \(S\), every mean-\(\mu\)
distribution supported on \(S\) attains the same payoff \(\ell(\mu)\).
This dual geometry motivates a mechanism whose interim payoff is affine on
\([\mu-b,1-b]\) and whose affine extension also contacts the payoff at
\(\theta=1\). Section~\ref{sec:main} verifies that the resulting affine
function is indeed a global lower bound.

\paragraph{The candidate shifted exponential-tail distribution}
The first heuristic yields the following exact construction.

\begin{lemma}[Equalizing conditional mean]\label{lem:Gstar-FOC}
Define
\begin{equation}\label{eq:Gstar-x0}
G^*(\theta)=
\begin{cases}
0, & \theta<\mu-b,\\[4pt]
1-e^{-\frac{\theta-\mu+b}{b}},
& \mu-b\leq\theta<1-b,\\[8pt]
1-e^{-\frac{1-\mu}{b}},
& 1-b\leq\theta<1,\\[8pt]
1, & \theta\geq1.
\end{cases}
\end{equation}
Then \(G^*\in\mathcal G_\mu\), and
\[
\mathbb E_{G^*}[\theta\mid\theta>z]=z+b
\qquad
\text{for every }z\in[\mu-b,1-b].
\]
\end{lemma}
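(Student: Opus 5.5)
The plan is to reduce both claims to the tail-integral identity. For any distribution on \([0,1]\), Fubini gives \(\mathbb E_G[\theta]=\int_0^1[1-G(x)]\,dx\). If \(1-G(z)>0\), we also have
\[
\mathbb E_G[\theta\mid\theta>z]=z+\frac{\int_z^1[1-G(x)]\,dx}{1-G(z)}.
\]
So the conditional-mean statement is equivalent to the second form of \eqref{eq:heuristic-tail-condition}, namely \(\int_z^1[1-G^*(x)]\,dx=b[1-G^*(z)]\) for \(z\in[\mu-b,1-b]\). First I check that \(G^*\) is a valid cdf. It vanishes below \(\mu-b\ge0\) by Assumption~\ref{ass:mu-range}. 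It is nondecreasing and right-continuous. It has a jump of size \(e^{-(1-\mu)/b}>0\) at \(\theta=1\), and it equals \(1\) from there on. Hence \(G^*\in\Delta([0,1])\).

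Next I compute the survival integral. Write \(S(x)=1-G^*(x)\). We have \(S=1\) on \([0,\mu-b)\), \(S(x)=e^{-(x-\mu+b)/b}\) on \([\mu-b,1-b)\), and \(S\equiv e^{-(1-\mu)/b}\) on \([1-b,1)\). For \(z\in[\mu-b,1-b]\),
\[
\int_z^1 S(x)\,dx=\int_z^{1-b}e^{-(x-\mu+b)/b}\,dx+b\,e^{-(1-\mu)/b}
= b\bigl(e^{-(z-\mu+b)/b}-e^{-(1-\mu)/b}\bigr)+b\,e^{-(1-\mu)/b}=b\,S(z).
\]
Here I use that the flat segment has length exactly \(b\), which is what makes the atom contribution cancel. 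At \(z=1-b\) the formula still gives \(S(1-b)=e^{-(1-\mu)/b}\), consistent with continuity at that point. Since \(S(z)>0\) on this range, the conditional-mean identity \(\mathbb E_{G^*}[\theta\mid\theta>z]=z+b\) follows.

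For the mean, I take \(z=\mu-b\) in the identity just proved. Then \(\int_0^1 S=(\mu-b)+\int_{\mu-b}^1 S=(\mu-b)+b\,S(\mu-b)=(\mu-b)+b=\mu\), because \(S(\mu-b)=1\). This shows \(G^*\in\mathcal G_\mu\).

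There is no real obstacle here. The only care needed is with the boundary conventions: the strict inequality \(\theta>z\) together with right-continuity of \(G^*\), and the endpoint cases \(z=\mu-b\) and \(z=1-b\). At these endpoints the closed-form survival function agrees with the piecewise definition, so the computation goes through unchanged.
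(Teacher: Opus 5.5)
Your proposal is correct and follows essentially the paper's argument. Both verify the tail identity \(\int_z^1[1-G^*(x)]\,dx=b[1-G^*(z)]\) on \([\mu-b,1-b]\), read off the conditional mean \(z+b\) from it, and obtain \(\mathbb E_{G^*}[\theta]=\int_0^1[1-G^*(x)]\,dx=\mu\); evaluating the identity at \(z=\mu-b\), as you do, is just a compact way to carry out this last computation.
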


Indeed, for every \(z\in[\mu-b,1-b]\), the survival function of \(G^*\)
satisfies
\[
\int_z^1[1-G^*(x)]\,dx=b[1-G^*(z)],
\]
which gives the stated conditional mean. Moreover,
\[
\mathbb E_{G^*}[\theta]
=\int_0^1[1-G^*(x)]\,dx
=\mu,
\]
so \(G^*\) satisfies the mean restriction. Assumption~\ref{ass:mu-range}
ensures that \(0<\mu-b<1-b\), and hence that the construction lies in the
state space.

Under \(G^*\), the principal's payoff from a deterministic cap is
\begin{equation}\label{eq:phi-under-Gstar}
\phi_z(G^*)=
\begin{cases}
-b^2\!\left(1-e^{-\frac{1-\mu}{b}}\right)
-\bigl(z-(\mu-b)\bigr)^2,
& 0\leq z\leq\mu-b,\\[6pt]
-b^2\!\left(1-e^{-\frac{1-\mu}{b}}\right),
& \mu-b\leq z\leq1-b,\\[6pt]
-b^2\!\left(1-e^{-\frac{1-\mu}{b}}\right)
-e^{-\frac{1-\mu}{b}}\bigl(z-(1-b)\bigr)^2,
& 1-b\leq z\leq1.
\end{cases}
\end{equation}
Thus, the principal is indifferent among all deterministic caps in
\([\mu-b,1-b]\). We take
\begin{equation}\label{eq:V-underbar}
V^*
:=-\operatorname{Var}_{G^*}(\theta)
=-b^2\!\left(1-e^{-\frac{1-\mu}{b}}\right)
\end{equation}
as the candidate robust value.

\paragraph{The candidate mechanism}
The integration-by-parts calculation and its affine-dual interpretation
motivate the following exact statement.

\begin{lemma}[Affine equalization]\label{lem:pistar-const}
Define the conditional cdf
\begin{equation}\label{eq:pistar}
\pi^*(a\mid\theta)=
\begin{cases}
0, & a<\min\{\theta,\mu-b\},\\[6pt]
\dfrac{1}{2}e^{\frac{a+b-1}{b}},
& \mu-b\leq a<\min\{\theta,1-b\},\\[10pt]
1, & a\geq\min\{\theta,1-b\}.
\end{cases}
\end{equation}
Then \(V^{\pi^*}\) is affine on \([\mu-b,1-b]\), and the affine extension
of this segment passes through \((1,V^{\pi^*}(1))\).
\end{lemma}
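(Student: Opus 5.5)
The plan is to recognize \(\pi^*\) as a random cap, compute \(V^{\pi^*}\) in closed form through the representation in Proposition~\ref{prop:imple}, and then read off affinity from a second-derivative computation.

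First, by \eqref{eq:pi-R}, the cdf \eqref{eq:pistar} is exactly the random cap with
\[
R^*(z)=0\ (z<\mu-b),\qquad R^*(z)=\tfrac12 e^{\frac{z+b-1}{b}}\ (\mu-b\le z<1-b),\qquad R^*(z)=1\ (z\ge 1-b).
\]
Thus \(R^*\) has an atom \(\tfrac12 e^{\frac{\mu-1}{b}}\) at \(\mu-b\) and an atom \(\tfrac12\) at \(1-b\). Hence \(\pi^*\in\Pi_{\mathrm{RC}}\subseteq\Pi^{IC}\), and by Proposition~\ref{prop:imple} (necessity direction) we have
\[
m(\theta)=\int_0^\theta[1-R^*(z)]\,dz
\]
together with \(q\) given by condition (c). Substituting these into \(V^{\pi^*}(\theta)=-(\theta-b-m)^2-q\), the terms \(m^2\) and \(2\theta m\) cancel, leaving
\[
V^{\pi^*}(\theta)=-(\theta-b)^2-2b\,m(\theta)+2\int_0^\theta m(t)\,dt .
\]
This expression makes the structure transparent. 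Since \(m\) is Lipschitz with right derivative \(1-R^*\), \(V^{\pi^*}\) is absolutely continuous with right derivative
\[
(V^{\pi^*})'_+(\theta)=-2(\theta-b)-2b\,[1-R^*(\theta)]+2m(\theta).
\]

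Second, I check affinity on \([\mu-b,1-b]\). On the open interval \((\mu-b,1-b)\), \(R^*\) is smooth with \((R^*)'=R^*/b\), so
\[
(V^{\pi^*})''=-2+2b(R^*)'+2(1-R^*)=2b(R^*)'-2R^*=0 .
\]
The derivative is therefore constant, equal to some \(\beta\), on this interval. The atom at \(\mu-b\) only affects the derivative at the left endpoint, and continuity of \(V^{\pi^*}\) then gives affinity on the closed interval. (I will record \(\beta\) explicitly, since Section~\ref{sec:main} will need it.)

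Third, the tail \([1-b,1]\). At \(1-b\), \(R^*\) jumps from \(\tfrac12\) to \(1\), so \(1-R^*\) falls by \(\tfrac12\) and the derivative formula jumps up by \(2b\cdot\tfrac12=b\). Thus \((V^{\pi^*})'_+(1-b)=\beta+b\). On \((1-b,1]\) we have \(R^*=1\) and \(m'=0\), so \((V^{\pi^*})'=-2(\theta-b)+2m\) has slope \(-2\). Integrating over an interval of length \(b\) gives
\[
V^{\pi^*}(1)-V^{\pi^*}(1-b)=b(\beta+b)-\tfrac12\cdot 2\cdot b^2=b\beta .
\]
This is exactly the increment of the affine extension with slope \(\beta\), so \((1,V^{\pi^*}(1))\) lies on it.

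The main obstacle is handling the kinks correctly. The atom at \(1-b\) must be exactly \(\tfrac12\) so that the derivative jump of \(b\) is offset by the concave quadratic decline over \([1-b,1]\). Similarly, the atom at \(\mu-b\) must not break continuity of \(V^{\pi^*}\). I would verify both using right derivatives and absolute continuity rather than classical second derivatives.
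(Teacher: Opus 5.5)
Your proof is correct and follows essentially the same route as the paper, which carries out this computation in Step~4 of the proof of Proposition~\ref{prop:general-saddle}. There, the slope on the middle region is shown to be constant because $bR^{*\prime}=R^*$ (your $\beta$ equals $b\kappa^*$). On $[1-b,1]$, the difference from the affine function vanishes at $1-b$, has right derivative $b$ and second derivative $-2$, so it equals $(\theta-(1-b))(1-\theta)$ and vanishes at $\theta=1$. Your use of condition (c) to write $V^{\pi^*}=-(\theta-b)^2-2bm+2\int_0^\theta m$ is only a cosmetic variant of the paper's representation through the first and second moments of $\min\{\theta,Z\}$.
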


Let \(\ell\) denote this affine extension. The lemma implies that
\(V^{\pi^*}=\ell\) on \([\mu-b,1-b]\cup\{1\}\). Consequently, all
\(G\in\mathcal G_\mu\) supported on this set yield the same expected payoff:
\[
\mathbb E_G[V^{\pi^*}(\theta)]
=\mathbb E_G[\ell(\theta)]
=\ell(\mu).
\]
The formal payoff calculation and the global minorant inequality are deferred
to Section~\ref{sec:main}.

The next corollary uses Proposition~\ref{prop:imple} to express \(\pi^*\) as a
random cap.

\begin{corollary}[Implementation]
\label{cor:pistar-as-RC}
The mechanism \(\pi^*\) in Lemma~\ref{lem:pistar-const} is implemented by
drawing a cap \(Z\) from the cdf
\begin{equation}\label{eq:Rstar}
R^*(z)=
\begin{cases}
0, & z<\mu-b,\\[4pt]
\dfrac{1}{2}e^{\frac{z+b-1}{b}},
& \mu-b\leq z<1-b,\\[10pt]
1, & z\geq1-b,
\end{cases}
\end{equation}
and setting \(a=\min\{\theta,Z\}\).
\end{corollary}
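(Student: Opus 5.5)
The plan is to verify directly that the random cap \(Z\sim R^*\) produces, state by state, the same conditional action distribution as \(\pi^*\). This suffices because implementation by a random cap requires only matching \((m^{\pi^*},q^{\pi^*})\), and matching the full conditional cdf is stronger. By \eqref{eq:pi-R}, the random cap with cdf \(R^*\) induces
\[
\Pr\bigl(\min\{\theta,Z\}\le a\bigr)=
\begin{cases}
R^*(a), & a<\theta,\\
1, & a\ge\theta.
\end{cases}
\]
I will compare this with \eqref{eq:pistar} case by case in \(a\), for each fixed \(\theta\in[0,1]\).

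The comparison splits into four ranges of \(a\).
\begin{enumerate}
\item[(i)] If \(a<\min\{\theta,\mu-b\}\), then \(a<\theta\) and \(a<\mu-b\). Hence \(R^*(a)=0\), which agrees with \(\pi^*\).
\item[(ii)] If \(\mu-b\le a<\min\{\theta,1-b\}\), then \(a<\theta\). Hence the random cap gives \(R^*(a)=\tfrac12 e^{(a+b-1)/b}\), which is exactly the middle branch of \(\pi^*\).
\item[(iii)] If \(a\ge\min\{\theta,1-b\}\), there are two subcases. When \(a\ge\theta\), the random cap gives \(1\). When instead \(1-b\le a<\theta\), we have \(R^*(a)=1\) because \(R^*\) jumps to \(1\) at \(1-b\). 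Either way this agrees with \(\pi^*\).
\item[(iv)] The remaining configuration is \(\theta\le a<\mu-b\), which arises when \(\theta<\mu-b\). Here \(a\ge\theta=\min\{\theta,\mu-b\}\), so the first branch of \(\pi^*\) does not apply. One must check that \(\pi^*\) assigns value \(1\), by reading the branches as the degenerate case where the middle interval is empty and \(\min\{\theta,1-b\}=\theta\le a\). The random cap also gives \(1\).
\end{enumerate}
Thus the two conditional cdfs coincide for all \(a\), and hence so do \(m\) and \(q\).

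As a cross-check through Proposition~\ref{prop:imple}, I would compute \(m^{\pi^*}(\theta)=\int_0^\theta[1-\pi^*(a\mid\theta)]\,da\), using the fact that actions lie in \([0,\theta]\). This gives \(m'_+(z)=1-R^*(z)\) on \([0,1)\). The resulting \(m\) has slope \(1\) below \(\mu-b\), slope \(1-\tfrac12 e^{(z+b-1)/b}\) on \([\mu-b,1-b)\), and slope \(0\) afterwards. So \(m\) is nondecreasing and concave with \(m(0)=0\) and \(m\le\theta\). Condition (c) then follows from the variance formula of the cap implementation. By uniqueness in Proposition~\ref{prop:imple}, the recovered cap distribution is \(R^*\).

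The main obstacle is only bookkeeping. First, the piecewise definition \eqref{eq:pistar} must be read consistently when the branches overlap or are empty, namely when \(\theta<\mu-b\) or \(\theta\le 1-b\). Second, one must note that \(R^*\) has an atom of size \(\tfrac12\) at \(\mu-b\) and an atom of size \(1-\tfrac12 e^{(\mu-1)/b}\) at \(1-b\). This atom at \(1-b\) is exactly what produces the value \(1\) in case (iii). I would handle the first point by organizing the cases by the position of \(\theta\) relative to \(\mu-b\) and \(1-b\).
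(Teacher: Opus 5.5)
Your proof is correct, but its main line differs from the paper's. The paper gets the corollary from Proposition~\ref{prop:imple}: it takes the expected-action rule of $\pi^*$ and reads the cap distribution off its slope, $R^*=1-m'_+$. That is essentially your cross-check. Your main argument instead compares \eqref{eq:pistar} with \eqref{eq:pi-R} branch by branch. This is more elementary and gives a stronger conclusion: the two mechanisms induce identical conditional action distributions at every $\theta$, not merely the same pair $(m^{\pi^*},q^{\pi^*})$. It also needs neither concavity nor the envelope structure. The Proposition~\ref{prop:imple} route, in turn, gives uniqueness of the implementing cap distribution and illustrates the general inversion recipe.

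Two small remarks on the case analysis. First, for every $\theta$ the three branches of \eqref{eq:pistar} are disjoint and together cover $\mathbb R$, so cases (i)--(iii) already suffice. Case (iv) is just the subcase $a\geq\theta=\min\{\theta,1-b\}$ of (iii), and no degenerate reading is needed. Second, if the cross-check is meant to stand on its own, verify condition (c) by applying the layer-cake formula to $\pi^*(\cdot\mid\theta)$ itself. Citing the cap implementation there would be circular.

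The atom sizes in your last paragraph are wrong. Since $R^*(\mu-b)=\tfrac12 e^{-(1-\mu)/b}$, the atom at $\mu-b$ has size $\tfrac12 e^{-(1-\mu)/b}$. The jump at $1-b$ is $1-\lim_{z\uparrow 1-b}\tfrac12 e^{(z+b-1)/b}=\tfrac12$. Your figures would give total atomic mass $\tfrac32-\tfrac12 e^{-(1-\mu)/b}>1$, which is impossible. The slip does not affect the proof, because cases (i)--(iii) use only the values of $R^*$ and the fact that it jumps to $1$ at $1-b$.
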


An alternative derivation explains the exponential shape of the cap
distribution. Let \(R\) be an arbitrary cap distribution. At any state
\(\theta\) at which \(R\) is differentiable and has no atom, the principal's
interim payoff is
\[
V^R(\theta)
=-\int_{[0,\theta)}(z+b-\theta)^2\,dR(z)
-b^2[1-R(\theta)].
\]
Differentiating twice gives
\[
(V^R)''(\theta)
=2\bigl[bR'(\theta)-R(\theta)\bigr].
\]
Hence, requiring the interim payoff to be affine on the equalization region
imposes the differential equation \(bR'(\theta)=R(\theta)\), whose solutions
are exponential. The candidate \(R^*\) satisfies this equation on
\((\mu-b,1-b)\).

\paragraph{The double-indifference logic}
Figure~\ref{fig:saddle} plots the cdfs of \(G^*\) and \(R^*\) for
\((\mu,b)=(0.5,0.25)\). Both distributions have lower support point
\(\mu-b\). The cdf \(G^*\) rises continuously and concavely on
\([\mu-b,1-b]\), remains constant on \([1-b,1)\), and has an atom of size
\(e^{-\frac{1-\mu}{b}}\) at \(1\). By contrast, \(R^*\) has an atom of size
\(\tfrac12e^{-\frac{1-\mu}{b}}\) at \(\mu-b\), rises continuously and
convexly on \((\mu-b,1-b)\), and has an atom of size \(1/2\) at \(1-b\).

These shapes encode the two candidate equalizers. First,
\eqref{eq:phi-under-Gstar} shows that every deterministic cap in
\([\mu-b,1-b]\) yields \(V^*\) under \(G^*\). By linearity, the same is true
of every random cap supported on this interval, including \(\pi^*\). Second,
Lemma~\ref{lem:pistar-const} shows that all mean-\(\mu\) distributions
supported on \([\mu-b,1-b]\cup\{1\}\), including \(G^*\), yield the same
payoff under \(\pi^*\). Since \(\pi^*\) yields \(V^*\) under \(G^*\), this
common payoff is \(V^*\).

These two equalizing identities identify \((G^*,\pi^*)\) as the candidate
saddle-point pair. They do not by themselves establish global optimality:
Section~\ref{sec:main} proves that no incentive-compatible mechanism can yield
more than \(V^*\) under \(G^*\), and that no distribution in \(\mathcal G_\mu\)
can reduce the payoff of \(\pi^*\) below \(V^*\). Those two bounds complete
the saddle-point argument.

\begin{figure}[t]
\centering
\includegraphics[width=0.8\linewidth]{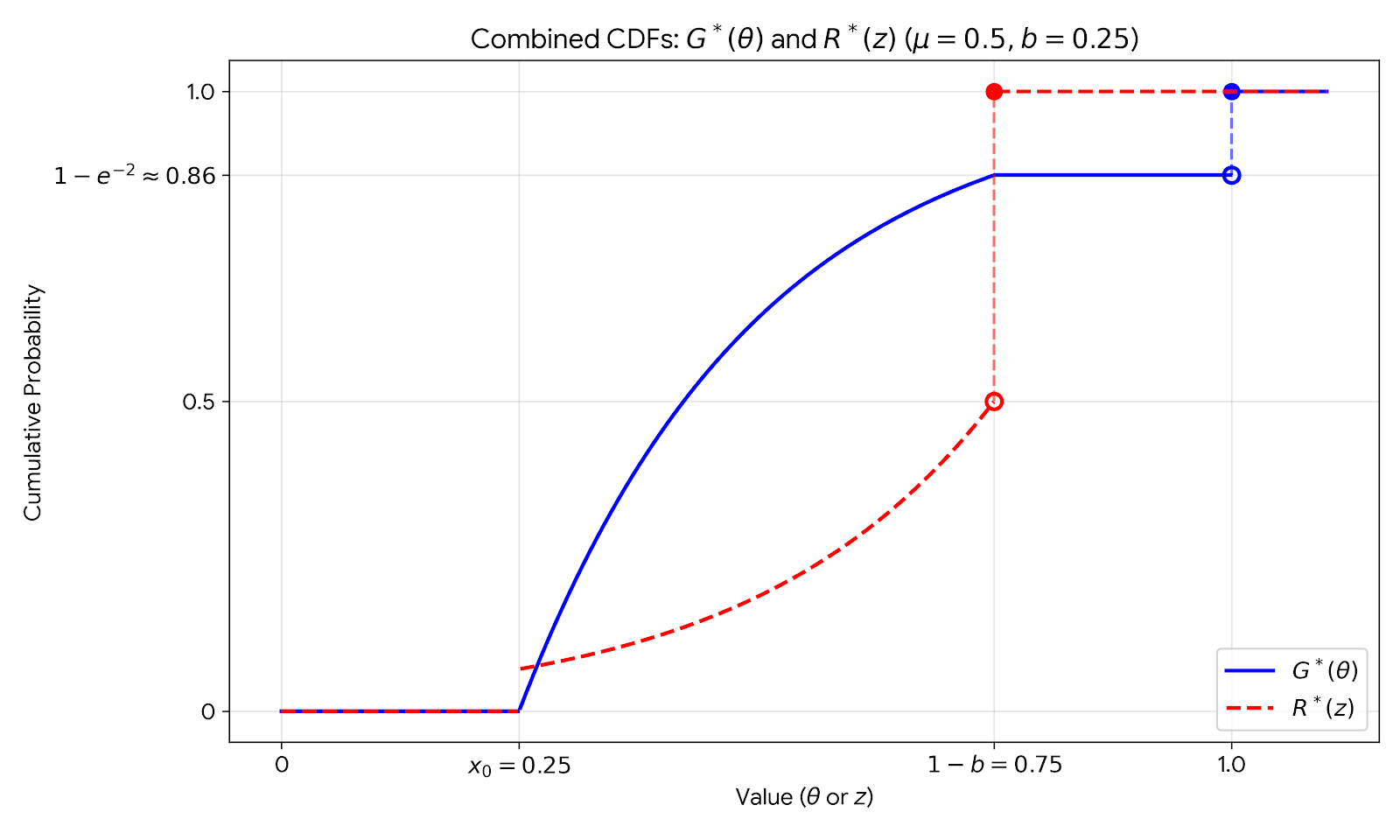}
\caption{The candidate equalizers \(G^*\) and \(R^*\)}
\label{fig:saddle}
\end{figure}

\section{Main result: a saddle point for robust delegation}
\label{sec:main}

We now verify that the candidates constructed in
Section~\ref{sec:heuristic} form a saddle point for the unrestricted
incentive-compatible problem. Let \(G^*\) be the shifted exponential-tail
distribution defined in \eqref{eq:Gstar-x0}, and let \(\pi^*\) be the
mechanism defined in \eqref{eq:pistar}. For any \(\pi\in\Pi^{IC}\) and
\(G\in\mathcal G_\mu\), define the principal's ex ante payoff by
\begin{equation}\label{eq:W-def}
W(\pi,G)
:=\int_{[0,1]}V^\pi(\theta)\,dG(\theta)
=\int_{[0,1]}\int_A v(a,\theta)\,d\pi(a\mid\theta)\,dG(\theta).
\end{equation}
For brevity, write
\[
\kappa^*:=e^{-\frac{1-\mu}{b}}.
\]
Recall from \eqref{eq:V-underbar} that the candidate value is
\begin{equation}\label{eq:Vstar-main}
V^*
:=-\operatorname{Var}_{G^*}(\theta)
=-b^2(1-\kappa^*).
\end{equation}

\begin{theorem}[Saddle point]\label{thm:saddle}
Under Assumption~\ref{ass:mu-range}, the pair \((\pi^*,G^*)\) satisfies
\begin{equation}\label{eq:saddle-ineq}
W(\pi^*,G)
\geq W(\pi^*,G^*)
\geq W(\pi,G^*)
\qquad
\text{for every }\pi\in\Pi^{IC}\text{ and }G\in\mathcal G_\mu.
\end{equation}
Consequently,
\begin{equation}\label{eq:val}
\sup_{\pi\in\Pi^{IC}}\inf_{G\in\mathcal G_\mu}W(\pi,G)
=\inf_{G\in\mathcal G_\mu}\sup_{\pi\in\Pi^{IC}}W(\pi,G)
=V^*.
\end{equation}
Moreover, the robustly optimal mechanism can be implemented as a random
cap with cap distribution \(R^*\) defined in \eqref{eq:Rstar}.
\end{theorem}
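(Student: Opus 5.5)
The plan is to prove the two inequalities in \eqref{eq:saddle-ineq} separately. The equalities in \eqref{eq:val} then follow from the standard sandwich: for any \(\pi\), \(\inf_G W(\pi,G)\le W(\pi,G^*)\le W(\pi^*,G^*)\), and for any \(G\), \(\sup_\pi W(\pi,G)\ge W(\pi^*,G)\ge W(\pi^*,G^*)\). Combined with weak duality, \(\sup\inf\le\inf\sup\), this yields the value \(W(\pi^*,G^*)\). The implementation claim is \Cref{cor:pistar-as-RC}.

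\emph{The middle value.} By \Cref{cor:pistar-as-RC}, \(\pi^*\) is the random cap with distribution \(R^*\), so \(W(\pi^*,G^*)=\int\phi_z(G^*)\,dR^*(z)\). Since \(R^*\) is supported on \([\mu-b,1-b]\), \eqref{eq:phi-under-Gstar} gives \(W(\pi^*,G^*)=V^*\).

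\emph{Left inequality (nature cannot hurt \(\pi^*\)).} I will compute \(V^{\pi^*}=V^{R^*}\) explicitly from
\[
V^R(\theta)=-\int_{[0,\theta)}(z+b-\theta)^2\,dR(z)-b^2[1-R(\theta)].
\]
This gives \(V^{\pi^*}(\theta)=-b^2\) on \([0,\mu-b]\), the affine function \(\ell\) on \([\mu-b,1-b]\) (\Cref{lem:pistar-const}), and an explicit expression on \([1-b,1]\), where every cap binds and \(V(\theta)=-\mathbb E_{R^*}[(Z+b-\theta)^2]\) is a concave quadratic in \(\theta\). I then check that \(\ell\le V^{\pi^*}\) on all of \([0,1]\).
\begin{itemize}
\item On \([0,\mu-b]\), this needs \(\ell\) to be nondecreasing with \(\ell(\mu-b)=V^{\pi^*}(\mu-b)=-b^2\). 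I will verify that the slope of \(\ell\) is nonnegative, using the atom of \(R^*\) at \(\mu-b\).
\item On \([1-b,1]\), a concave function lies above the chord joining its endpoint values. Both endpoints lie on \(\ell\): at \(1-b\) by continuity, and at \(1\) by \Cref{lem:pistar-const}. Hence \(V^{\pi^*}\ge\ell\) there.
\end{itemize}
Then, for any \(G\in\mathcal G_\mu\), \(W(\pi^*,G)\ge\mathbb E_G[\ell]=\ell(\mu)\). Since \(G^*\) is supported where \(V^{\pi^*}=\ell\), we also have \(\ell(\mu)=W(\pi^*,G^*)=V^*\).

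\emph{Right inequality (no IC mechanism beats \(V^*\) under \(G^*\)).} This is the main obstacle, because \(\pi\) ranges over all stochastic IC mechanisms. I will use \Cref{lem:IC}: \(m\) is nondecreasing, \(q\ge0\), and the envelope identity \(U^\pi(\theta)=U^\pi(1)+2\int_\theta^1(t-m(t))\,dt\) holds. I then write
\[
V^\pi=U^\pi-b^2+2b(\theta-m),
\]
so that
\[
W(\pi,G^*)=U^\pi(1)-b^2+2b\mu-2b\,\mathbb E_{G^*}[m]+2\int_0^1(t-m(t))G^*(t)\,dt.
\]
This is linear in \(m\) apart from \(U^\pi(1)\le-(1-m(1))^2\). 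The objective is unbounded in \(m\) without the pointwise constraints \(q(\theta)=-U^\pi(\theta)-(\theta-m(\theta))^2\ge0\), so I will attach a Lagrange multiplier measure \(\Lambda\ge0\) to these constraints. The first choice to try puts mass only at the endpoints \(\mu-b\) and \(1\) of \(\operatorname{supp}G^*\). I will then decompose a nondecreasing \(m\) into steps \(\mathbf 1\{t\ge s\}\) and show the following:
\begin{itemize}
\item Each step's net coefficient in the Lagrangian vanishes for \(s\in[\mu-b,1-b]\). This is exactly the conditional-mean identity of \Cref{lem:Gstar-FOC}.
\item The net coefficient is nonpositive otherwise, in the appropriate direction.
\item The remaining quadratic terms in the level of \(m\) at the endpoints are maximized at \(m=\mu-b\) and \(m=1-b\), respectively.
\end{itemize}
This bounds \(W(\pi,G^*)\) by \(V^*\).

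If the multiplier bookkeeping proves delicate, my fallback is more direct. Since \(G^*\) puts no mass below \(\mu-b\), I would reduce to the Bayesian problem on \(\operatorname{supp}G^*\) and invoke the Kov\'a\v{c}--Mylovanov style result that an optimal Bayesian mechanism can be taken deterministic, hence a cap. The cap comparison \eqref{eq:phi-under-Gstar} then shows that no cap exceeds \(V^*\). This route requires verifying their regularity condition for \(G^*\), which has an atom at \(1\).
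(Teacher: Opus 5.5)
Your sandwich argument, the middle-value computation, the implementation claim and the left inequality all follow the paper. The affine minorant $\ell$ with contact set $[\mu-b,1-b]\cup\{1\}$ is exactly Proposition~\ref{prop:Gmin-under-pistar}, and your chord argument on $[1-b,1]$ reaches the same conclusion as the paper's formula $V^{\pi^*}-\ell=(\theta-(1-b))(1-\theta)$.

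\textbf{The gap is in the right inequality: the multiplier you propose would fail.} $W(\pi,G^*)$ contains $+U^\pi(1)$, and through the envelope identity every constraint $q(\theta)\ge 0$ contains $-U^\pi(1)$. So the Lagrangian $W+\int q\,d\Lambda$ is bounded above in $U^\pi(1)$ only if $\Lambda$ has total mass one. Suppose $\lambda:=\Lambda(\{1\})>0$ and the remaining $1-\lambda$ sits at $\mu-b$. Evaluating the Lagrangian at $\pi^*$ itself gives $V^*+\lambda\Var(Z^*)>V^*$, because $q^{\pi^*}(\mu-b)=0$ while $q^{\pi^*}(1)=\Var(Z^*)>0$. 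Hence no such Lagrangian can be bounded by $V^*$. Concretely, the extra term $-2\lambda\int_{\mu-b}^1 m(t)\,dt$ gives each step at $s\in(\mu-b,1-b]$ the coefficient $-2\lambda(1-s)\neq 0$. That breaks your first bullet. The quadratic $-\lambda(1-m(1))^2$ that you plan to maximize at $m(1)=1-b$ has no counterpart in any valid bound.

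\textbf{The fix.} Complementary slackness with $\pi^*$ forces $\Lambda$ to live where $q^{\pi^*}=0$, i.e.\ on $[0,\mu-b]$, and the right choice is $\Lambda=\delta_{\mu-b}$. Then $U^\pi(1)$ cancels, and the Lagrangian becomes $-(\mu-b-m(\mu-b))^2+(\mu-b)^2+2\int_{[\mu-b,1]}m\,d\nu_{G^*}-\mathbb E_{G^*}[(\theta-b)^2]$. This is Lemma~\ref{lem:EV-formula} anchored at $x_0=\mu-b$, with $U^\pi(\mu-b)\le 0$ applied. Your step analysis now goes through:
\begin{itemize}
\item $2\nu_{G^*}([s,1])$ vanishes on $(\mu-b,1-b]$ by Lemma~\ref{lem:Gstar-FOC}.
\item It equals $2\kappa^*(1-b-s)<0$ on $(1-b,1)$ and $-2b\kappa^*$ at $s=1$.
\item The level $m(\mu-b)$ enters only through the quadratic, because $\nu_{G^*}([\mu-b,1])=0$.
\item $m(1)$ is left free; indeed $m^{\pi^*}(1)=\mathbb E[Z^*]<1-b$.
\end{itemize}
This is exactly the proof of Proposition~\ref{prop:interval-optimal-gstar}. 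Anchoring the envelope identity at $\mu-b$ from the start avoids multipliers entirely.

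\textbf{Drop the fallback.} The Kov\'a\v{c}--Mylovanov deterministic-optimality result relies on regularity that $G^*$ violates: it has an atom at $1$ and zero density on $(1-b,1)$. Even a deterministic optimum need not be a cap. The paper's two-point example shows that a disconnected delegation set can strictly beat every cap under an atomic prior. Your step \emph{hence a cap} would therefore need a separate interval-optimality argument, which is precisely what the decomposition supplies.
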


\paragraph{Roadmap}
The proof proceeds in two steps. First, we show that \(G^*\) limits the payoff
of every incentive-compatible mechanism to at most \(V^*\). Second, we show
that \(\pi^*\) guarantees at least \(V^*\) against every distribution in
\(\mathcal G_\mu\).

The variance representation
\(V^*=-\operatorname{Var}_{G^*}(\theta)\) also highlights the distinction
between one-sided best responses and saddle-point strategies. Facing the
worst-case shifted exponential-tail distribution \(G^*\), every deterministic
cap in \([\mu-b,1-b]\) attains \(V^*\). Conversely, facing the robustly optimal
mechanism \(\pi^*\), every distribution supported on
\([\mu-b,1-b]\cup\{1\}\) with mean \(\mu\) attains the same value. When
\(\mu\leq1-b\), this class includes the degenerate distribution
\(\delta_\mu\). These simple one-sided best responses do not themselves form
a saddle point. In particular, once the distribution is known to be
\(\delta_\mu\), the principal can implement the constant action \(\mu-b\) and
obtain zero, which strictly exceeds \(V^*\). Likewise, although deterministic
caps are best responses to \(G^*\), they do not guarantee \(V^*\) uniformly
over \(\mathcal G_\mu\). Within the class of cap mechanisms, a saddle point
therefore requires dispersion on both sides: a nondegenerate worst-case
distribution and a lottery over caps. We return to this point in
Section~\ref{subsec:value-randomness}, where we show that every deterministic
cap yields a strictly lower robust payoff than the optimal random cap.

\subsection{Step 1: optimality against
\texorpdfstring{\(G^*\)}{G*}}
\label{subsec:step1}

We begin with a payoff decomposition that applies to every
incentive-compatible mechanism.

\begin{lemma}[Payoff decomposition]\label{lem:EV-formula}
Fix \(x_0\in[0,1]\), let \(G\in\Delta([0,1])\) satisfy
\(\operatorname{supp}(G)\subseteq[x_0,1]\),
and fix \(\pi\in\Pi^{IC}\). Define the signed measure on
\([x_0,1]\) by
\begin{equation}\label{eq:nu-def}
\nu_G(dx)
:=[1-G(x)]\,dx-b\,dG(x).
\end{equation}
Then
\begin{equation}\label{eq:EV-general-x0}
\begin{aligned}
W(\pi,G)
={}&U^\pi(x_0)+x_0^2
+2\int_{[x_0,1]}m^\pi(x)\,\nu_G(dx)\\
&\quad-\mathbb E_G\!\left[(\theta-b)^2\right].
\end{aligned}
\end{equation}
\end{lemma}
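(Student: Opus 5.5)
The plan is to write the principal's interim payoff in terms of the agent's interim payoff. Then use the envelope identity for incentive-compatible mechanisms (Lemma~\ref{lem:IC}) to express the agent's payoff through $m^\pi$. Finally, integrate against $G$ and apply Fubini.

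\emph{Step 1 (pointwise identity).} From the expressions for $U^\pi$ and $V^\pi$ in Section~\ref{subsec:mech}, expanding $-(\theta-b-m^\pi(\theta))^2$ gives
\[
V^\pi(\theta)=U^\pi(\theta)+2b\theta-b^2-2b\,m^\pi(\theta).
\]
The variance term $q^\pi$ enters both payoffs identically and cancels.

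\emph{Step 2 (envelope).} Incentive compatibility implies that $U^\pi(\theta)=\sup_{\hat\theta}U^\pi(\theta,\hat\theta)$ is a supremum of functions of $\theta$ with derivative $-2(\theta-m^\pi(\hat\theta))$. Along truthful reports this derivative is $2(m^\pi(\theta)-\theta)$. The envelope identity of Lemma~\ref{lem:IC} (equivalently \citealp{kovavc2009stochastic}) then gives, for $\theta\ge x_0$,
\[
U^\pi(\theta)=U^\pi(x_0)+2\int_{x_0}^{\theta}m^\pi(t)\,dt-(\theta^2-x_0^2).
\]
Here $m^\pi$ is nondecreasing, hence bounded and Borel on $[0,1]$, so the integral is well defined. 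Substituting into Step~1 and completing the square $-\theta^2+2b\theta-b^2=-(\theta-b)^2$ yields
\[
V^\pi(\theta)=U^\pi(x_0)+x_0^2-(\theta-b)^2+2\int_{x_0}^{\theta}m^\pi(t)\,dt-2b\,m^\pi(\theta).
\]

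\emph{Step 3 (integration).} Integrate this identity against $G$, which is supported on $[x_0,1]$. The constant term contributes $U^\pi(x_0)+x_0^2$, and the square term contributes $-\mathbb E_G[(\theta-b)^2]$. For the double integral, Tonelli/Fubini applies since $m^\pi$ is bounded:
\[
\int_{[x_0,1]}\int_{x_0}^{\theta}m^\pi(t)\,dt\,dG(\theta)=\int_{x_0}^{1}m^\pi(t)\,G\bigl((t,1]\bigr)\,dt=\int_{x_0}^1 m^\pi(t)[1-G(t)]\,dt.
\]
The possible discrepancy between $G(t)$ and $G(t^-)$ occurs on at most countably many $t$ and is Lebesgue-null. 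The last term is $-2b\int m^\pi\,dG$. Combining the two gives exactly $2\int_{[x_0,1]}m^\pi\,d\nu_G$, which is \eqref{eq:EV-general-x0}.

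\emph{Main obstacle.} The only nontrivial ingredient is the envelope identity in Step~2 for randomized mechanisms. It requires $U^\pi$ to be absolutely continuous with derivative $2(m^\pi-\theta)$ a.e. I would take this directly from Lemma~\ref{lem:IC}. If a self-contained argument were needed, I would note that $U^\pi$ is convex-plus-quadratic, since $U^\pi(\theta)+\theta^2$ is a supremum of affine functions of $\theta$. The standard Milgrom--Segal argument then applies, with monotonicity of $m^\pi$ supplying integrability.
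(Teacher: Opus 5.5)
Your proposal is correct and follows essentially the same route as the paper: the identity $V^\pi=U^\pi+2b(\theta-m^\pi)-b^2$, the envelope identity from Lemma~\ref{lem:IC}, and Fubini against $G$. The only difference is bookkeeping. You substitute the envelope formula pointwise and complete the square before integrating. The paper instead integrates $U^\pi$ against $G$ first and then recombines the terms using the tail formula $\mathbb E_G[\theta^2]=x_0^2+2\int_{x_0}^1x[1-G(x)]\,dx$.
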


The proof is given in Appendix~\ref{app:payoff-decomposition-proof}.

In the spirit of the payoff decomposition in
\citet{kovavc2009stochastic}, Lemma~\ref{lem:EV-formula} rewrites the
principal's ex ante payoff in terms of the agent's interim utility and expected
action. The signed measure \(\nu_{G^*}\) is the measure-valued analogue of the
virtual coefficient multiplying the expected action in their reduction. Our
argument then exploits the additional structure of the equalizing distribution
\(G^*\). In particular, \(\nu_{G^*}\) vanishes on the principal's
indifference region. Only the upper-tail term remains, where \(\nu_{G^*}\)
consists of a positive Lebesgue component on \([1-b,1)\) and a negative atom at
\(1\). This cancellation turns the general payoff decomposition into the sharp
upper bound below.

\begin{proposition}[Optimality against \(G^*\)]
\label{prop:interval-optimal-gstar}
For every \(\pi\in\Pi^{IC}\),
\[
W(\pi,G^*)\leq-\operatorname{Var}_{G^*}(\theta)=V^*.
\]
Equality is attained by every deterministic cap in \([\mu-b,1-b]\) and by
every random cap supported on that interval.
\end{proposition}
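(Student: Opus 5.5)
The plan is to apply Lemma~\ref{lem:EV-formula} with \(G=G^*\) and the natural lower endpoint \(x_0:=\mu-b\). This is legitimate because \(\operatorname{supp}(G^*)\subseteq[\mu-b,1]\) and Assumption~\ref{ass:mu-range} gives \(x_0\in(0,1)\). The first step is to compute the signed measure \(\nu_{G^*}\) explicitly on \([x_0,1]\):
\begin{itemize}
\item \(G^*\) has no atom at \(x_0\), since \(G^*(x_0)=0\).
\item On \([x_0,1-b)\), \(1-G^*(x)=e^{-(x-\mu+b)/b}\) and \(dG^*(x)=\tfrac1b e^{-(x-\mu+b)/b}\,dx\), so the two components of \(\nu_{G^*}\) cancel exactly. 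This is the measure-theoretic form of the indifference in Lemma~\ref{lem:Gstar-FOC}.
\item On \([1-b,1)\), \(G^*\) is flat at \(1-\kappa^*\), so \(\nu_{G^*}=\kappa^*\,dx\) there.
\item At \(\theta=1\), \(G^*\) has an atom of size \(\kappa^*\), so \(\nu_{G^*}\) has an atom \(-b\kappa^*\) at \(1\).
\end{itemize}
Substituting into \eqref{eq:EV-general-x0} gives, for every \(\pi\in\Pi^{IC}\),
\[
W(\pi,G^*)=U^\pi(x_0)+x_0^2+2\kappa^*\Bigl[\int_{1-b}^1 m^\pi(x)\,dx-b\,m^\pi(1)\Bigr]-\mathbb E_{G^*}\bigl[(\theta-b)^2\bigr].
\]

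The second step bounds each mechanism-dependent term. First, \(U^\pi(x_0)=-(x_0-m^\pi(x_0))^2-q^\pi(x_0)\leq0\). Second, incentive compatibility implies that \(m^\pi\) is nondecreasing; I would invoke Lemma~\ref{lem:IC}, or the standard single-crossing argument from adding the two IC inequalities. Monotonicity gives \(\int_{1-b}^1 m^\pi\,dx\leq b\,m^\pi(1)\), so the bracketed term is nonpositive. Hence
\[
W(\pi,G^*)\leq x_0^2-\mathbb E_{G^*}\bigl[(\theta-b)^2\bigr].
\]
Since \(\mathbb E_{G^*}[\theta-b]=\mu-b=x_0\), the right-hand side equals \(-\operatorname{Var}_{G^*}(\theta-b)=-\operatorname{Var}_{G^*}(\theta)=V^*\).

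For the equality claim, take a deterministic cap \(z\in[\mu-b,1-b]\). Then \(m(x_0)=x_0\) and \(q(x_0)=0\) because \(x_0\leq z\), so \(U(x_0)=0\). Moreover \(m\equiv z\) on \([1-b,1]\), so the bracket vanishes, and both inequalities hold with equality. This is consistent with \eqref{eq:phi-under-Gstar}. For a random cap supported on \([\mu-b,1-b]\), equality follows by linearity of \(W(\cdot,G^*)\) in the cap distribution, since each realized cap attains \(V^*\).

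The main obstacle I expect is technical rather than conceptual: justifying the use of Lemma~\ref{lem:EV-formula} for a general incentive-compatible \(\pi\). This requires that \(m^\pi\) be measurable and bounded on \([x_0,1]\), so that \(\int m^\pi\,d\nu_{G^*}\) is well defined, and that the atom at \(1\) be paired with the value \(m^\pi(1)\) rather than a one-sided limit. I would handle this by noting that monotonicity makes \(m^\pi\) bounded between \(m^\pi(0)\) and \(m^\pi(1)\) and Borel measurable. I would also note that the decomposition is stated with \(dG\) including atoms on the closed interval, so the endpoint \(1\) is treated correctly.
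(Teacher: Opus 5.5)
Your proposal is correct and follows essentially the same route as the paper: apply the payoff decomposition with $x_0=\mu-b$, observe that $\nu_{G^*}$ reduces to $\kappa^*\mathbf 1_{[1-b,1)}\,dx-b\kappa^*\delta_1$, and then bound $U^\pi(\mu-b)\leq 0$ and the bracketed tail term using monotonicity of $m^\pi$. The only minor difference is in the equality case for random caps. You obtain it by linearity of $W(\cdot,G^*)$ in the cap distribution, whereas the paper checks directly that any such random cap satisfies $U^\pi(\mu-b)=0$ and has $m^\pi$ constant on $[1-b,1]$; both arguments are immediate.
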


\begin{proof}
Substituting \(G^*\) into \eqref{eq:nu-def} gives
\begin{equation}\label{eq:nu-Gstar}
\nu_{G^*}(dx)
=\kappa^*\mathbf 1_{[1-b,1)}(x)\,dx
-b\kappa^*\delta_1(dx).
\end{equation}
Indeed, the absolutely continuous components of
\([1-G^*(x)]\,dx\) and \(b\,dG^*(x)\) cancel on
\([\mu-b,1-b)\); above \(1-b\), the survival function equals \(\kappa^*\),
and \(G^*\) has an atom of size \(\kappa^*\) at \(1\). It follows from
Lemma~\ref{lem:EV-formula} that
\begin{equation}\label{eq:EV-Gstar-tail}
\begin{aligned}
W(\pi,G^*)
={}&U^\pi(\mu-b)+(\mu-b)^2\\
&+2\kappa^*
\left[
\int_{1-b}^1m^\pi(x)\,dx-bm^\pi(1)
\right]
-\mathbb E_{G^*}\!\left[(\theta-b)^2\right].
\end{aligned}
\end{equation}
Incentive compatibility implies that \(m^\pi\) is nondecreasing. Therefore,
\[
\int_{1-b}^1m^\pi(x)\,dx-bm^\pi(1)
=\int_{1-b}^1\bigl[m^\pi(x)-m^\pi(1)\bigr]\,dx
\leq0.
\]
In addition, \(U^\pi(\mu-b)\leq0\), because the agent's payoff is the
negative of a quadratic loss. Hence
\begin{align}
W(\pi,G^*)
&\leq(\mu-b)^2
-\mathbb E_{G^*}\!\left[(\theta-b)^2\right]\nonumber\\
&=-\operatorname{Var}_{G^*}(\theta)
=V^*,
\label{eq:Gstar-upper-bound}
\end{align}
where the second line uses \(\mathbb E_{G^*}[\theta]=\mu\).

Both inequalities are equalities whenever
\[
U^\pi(\mu-b)=0
\quad\text{and}\quad
m^\pi(x)=m^\pi(1)
\quad\text{for almost every }x\in[1-b,1].
\]
Every random cap supported on \([\mu-b,1-b]\) satisfies these conditions:
it implements the action \(\mu-b\) at state \(\mu-b\), and its expected
action is constant for all states weakly above \(1-b\). Deterministic caps in
the same interval are special cases.
\end{proof}

The variance representation in \eqref{eq:Vstar-main} has a simple
interpretation. Under the cap \(z=\mu-b\), the support of \(G^*\) lies weakly
above the cap, so the induced action is constant at \(\mu-b\). The
principal's realized loss is therefore
\[
\bigl((\mu-b)-(\theta-b)\bigr)^2=(\theta-\mu)^2,
\]
and its expected payoff is \(-\operatorname{Var}_{G^*}(\theta)=V^*\). The
equalizing property of \(G^*\) extends the same payoff to every cap in
\([\mu-b,1-b]\).

This result may be of independent interest because the optimality of a
single interval is not automatic when the state distribution has atoms or
lacks full support, as the example above illustrates. Standard
interval-optimality results commonly impose density and support conditions
\citep{alonso2008optimal,amador2013theory}. By contrast, \(G^*\) has both a
zero-density region and an atom, so proving that a single interval remains
optimal is nontrivial.

\subsection{Step 2: worst cases against
\texorpdfstring{\(\pi^*\)}{pi*}}
\label{subsec:step2}

We next characterize all of nature's best responses to \(\pi^*\).

\begin{proposition}[Worst-case distributions under \(\pi^*\)]
\label{prop:Gmin-under-pistar}
Define
\[
\mathcal G_\mu^*
:=\left\{
G\in\mathcal G_\mu:
\operatorname{supp}(G)
\subseteq[\mu-b,1-b]\cup\{1\}
\right\}.
\]
Then
\begin{equation}\label{eq:Gstar-argmin}
\operatorname*{argmin}_{G\in\mathcal G_\mu}W(\pi^*,G)
=\mathcal G_\mu^*,
\qquad
\min_{G\in\mathcal G_\mu}W(\pi^*,G)=V^*.
\end{equation}
\end{proposition}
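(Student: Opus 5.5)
The plan is to run the affine-minorant (dual) argument sketched in Section~\ref{sec:heuristic}. Let \(\ell(\theta)=\alpha+\beta\theta\) be the affine extension from Lemma~\ref{lem:pistar-const}, so that \(V^{\pi^*}=\ell\) on \([\mu-b,1-b]\cup\{1\}\). I will prove the pointwise inequality \(V^{\pi^*}(\theta)\geq\ell(\theta)\) for all \(\theta\in[0,1]\), and show it is strict off \([\mu-b,1-b]\cup\{1\}\). Given this, every \(G\in\mathcal G_\mu\) satisfies
\[
W(\pi^*,G)\geq\mathbb E_G[\ell(\theta)]=\ell(\mu),
\]
with equality if and only if \(G\) puts no mass on the set where the inequality is strict, that is, if and only if \(G\in\mathcal G_\mu^*\). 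To identify the constant, I use that \(G^*\in\mathcal G_\mu^*\) and that \(\pi^*\) is, by Corollary~\ref{cor:pistar-as-RC}, a random cap supported on \([\mu-b,1-b]\). Proposition~\ref{prop:interval-optimal-gstar} then gives \(\ell(\mu)=W(\pi^*,G^*)=V^*\). Together these yield both the minimum value and the argmin in \eqref{eq:Gstar-argmin}.

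To verify the minorant, I work with the random-cap form \(R^*\) and the interim payoff
\[
V^{R^*}(\theta)=-\int_{[0,\theta)}(z+b-\theta)^2\,dR^*(z)-b^2[1-R^*(\theta)].
\]
The check splits into three regions.

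\emph{Lower region \([0,\mu-b)\).} Here \(R^*\) places no mass below \(\theta\), so the action is \(\theta\) for sure and \(V^{\pi^*}(\theta)=-b^2\). At \(\theta=\mu-b\) the action is also deterministic, so \(\ell(\mu-b)=V^{\pi^*}(\mu-b)=-b^2\). It therefore suffices to show that the slope \(\beta\) of \(\ell\) is strictly positive. Differentiating the display just to the right of \(\mu-b\), the boundary terms \(\pm b^2R^{*\prime}\) cancel, leaving
\[
(V^{\pi^*})'(\theta)=2\int_{[0,\theta)}(z+b-\theta)\,dR^*(z).
\]
As \(\theta\downarrow\mu-b\), only the atom of \(R^*\) at \(\mu-b\), of size \(\tfrac12\kappa^*\), contributes, so \(\beta=b\kappa^*>0\). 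Hence \(\ell(\theta)<-b^2=V^{\pi^*}(\theta)\) for all \(\theta<\mu-b\).

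\emph{Upper region \((1-b,1)\).} Here \(R^*\) has exhausted its mass, so the conditional mean \(m\) and variance \(q\) of the action are constant in \(\theta\). Thus \(V^{\pi^*}(\theta)=-(\theta-b-m)^2-q\) is a strictly concave quadratic on \([1-b,1]\). Since \(\ell\) agrees with it at both endpoints \(1-b\) and \(1\) (by continuity at \(1-b\) and by Lemma~\ref{lem:pistar-const} at \(1\)), the chord lies strictly below the concave function on the open interval.

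\emph{Middle region \([\mu-b,1-b]\).} Equality \(V^{\pi^*}=\ell\) holds there directly by Lemma~\ref{lem:pistar-const}.

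The main obstacle is making the derivative and slope computation rigorous despite the atoms of \(R^*\) at \(\mu-b\) and at \(1-b\). The same issue arises in confirming continuity of \(V^{\pi^*}\) at \(1-b\), which is needed for the chord argument. I would handle both by computing \(V^{\pi^*}\) in closed form on each region directly from \eqref{eq:pistar}, using the mean–variance formula \(V=-(\theta-b-m)^2-q\) and Proposition~\ref{prop:imple}(c), rather than differentiating through the atoms. Strictness on \([0,\mu-b)\) and on \((1-b,1)\) is exactly what delivers the equality \(\operatorname{argmin}=\mathcal G_\mu^*\) rather than only an inclusion.
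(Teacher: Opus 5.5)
Your proposal is correct and follows essentially the paper's argument. Both use the affine minorant $\ell(\theta)=-b^2+b\kappa^*(\theta-(\mu-b))$ with contact set $[\mu-b,1-b]\cup\{1\}$, strict inequality off that set, and integration against mean-$\mu$ distributions to get both the value and the exact argmin. The only differences are cosmetic: the paper writes $V^{\pi^*}-\ell$ in closed form on each region (e.g.\ $(\theta-(1-b))(1-\theta)$ above $1-b$) and evaluates $\ell(\mu)=-b^2+b^2\kappa^*$ directly, whereas you argue via the sign of the slope $b\kappa^*$ and a strict-concavity chord argument, and you pin down $\ell(\mu)$ through Proposition~\ref{prop:interval-optimal-gstar}; all of these steps check out.
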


\begin{proof}
Consider the affine function
\begin{equation}\label{eq:ell-main}
\ell(\theta)
:=-b^2+b\kappa^*\bigl(\theta-(\mu-b)\bigr).
\end{equation}
Using the interim payoff induced by \(\pi^*\), a direct calculation gives
\begin{equation}\label{eq:Vminusell}
V^{\pi^*}(\theta)-\ell(\theta)=
\begin{cases}
b\kappa^*(\mu-b-\theta),
&0\leq\theta<\mu-b,\\[4pt]
0,
&\mu-b\leq\theta\leq1-b,\\[4pt]
(\theta-(1-b))(1-\theta),
&1-b<\theta\leq1.
\end{cases}
\end{equation}
Thus, \(V^{\pi^*}(\theta)\geq\ell(\theta)\) on \([0,1]\), with equality
exactly on \([\mu-b,1-b]\cup\{1\}\). For every \(G\in\mathcal G_\mu\),
\begin{align}
W(\pi^*,G)
&\geq\int_{[0,1]}\ell(\theta)\,dG(\theta)\nonumber\\
&=-b^2+b\kappa^*\bigl(\mu-(\mu-b)\bigr)\nonumber\\
&=-b^2+b^2\kappa^*
=V^*.
\label{eq:pistar-lower-bound}
\end{align}
Equality holds if and only if \(G\) is concentrated on the contact set
\([\mu-b,1-b]\cup\{1\}\), which is equivalent to \(G\in\mathcal G_\mu^*\).
Because \(G^*\in\mathcal G_\mu^*\), the lower bound is attained.
\end{proof}

\begin{proof}[Proof of Theorem~\ref{thm:saddle}]
Proposition~\ref{prop:interval-optimal-gstar} gives
\[
W(\pi,G^*)\leq V^*
\qquad
\text{for every }\pi\in\Pi^{IC},
\]
whereas Proposition~\ref{prop:Gmin-under-pistar} gives
\[
W(\pi^*,G)\geq V^*
\qquad
\text{for every }G\in\mathcal G_\mu.
\]
Applying the first inequality to \(\pi^*\) and the second to \(G^*\) yields
\(W(\pi^*,G^*)=V^*\). The saddle inequalities in
\eqref{eq:saddle-ineq} follow immediately. They imply both sides of the value
identity \eqref{eq:val}, and \(\pi^*\) and \(G^*\) attain the respective
outer extrema.

Finally, \eqref{eq:Rstar} gives the stated random-cap implementation of
\(\pi^*\).
\end{proof}

{
\subsection{Monotone comparative statics of the random cap}
\label{subsec:random-cap-comparative-statics}

Let $Z^*(\mu,b)$ denote a cap drawn from the optimal cdf $R^*$ in
\eqref{eq:Rstar}. Integrating its survival function gives
\begin{equation}
\mathbb E[Z^*(\mu,b)]
=1-\frac{3}{2}b+\frac{b}{2}e^{-(1-\mu)/b}.
\label{eq:mean-optimal-cap}
\end{equation}
Its variance is
\begin{equation}
\Var\bigl(Z^*(\mu,b)\bigr)
=b^2\left[
\frac34-\left(\frac{1-\mu}{b}+\frac12\right)e^{-(1-\mu)/b}
-\frac14e^{-2(1-\mu)/b}
\right].
\label{eq:variance-optimal-cap}
\end{equation}
Direct differentiation gives
\[
\frac{\partial}{\partial\mu}\Var\bigl(Z^*(\mu,b)\bigr)<0,
\qquad
\frac{\partial}{\partial b}\Var\bigl(Z^*(\mu,b)\bigr)>0.
\]
Figure~\ref{fig:random-cap-comparative-statics} illustrates the associated
first-order stochastic shifts in the cap distribution.

\begin{figure}[htbp]
\centering
\includegraphics[width=\linewidth]{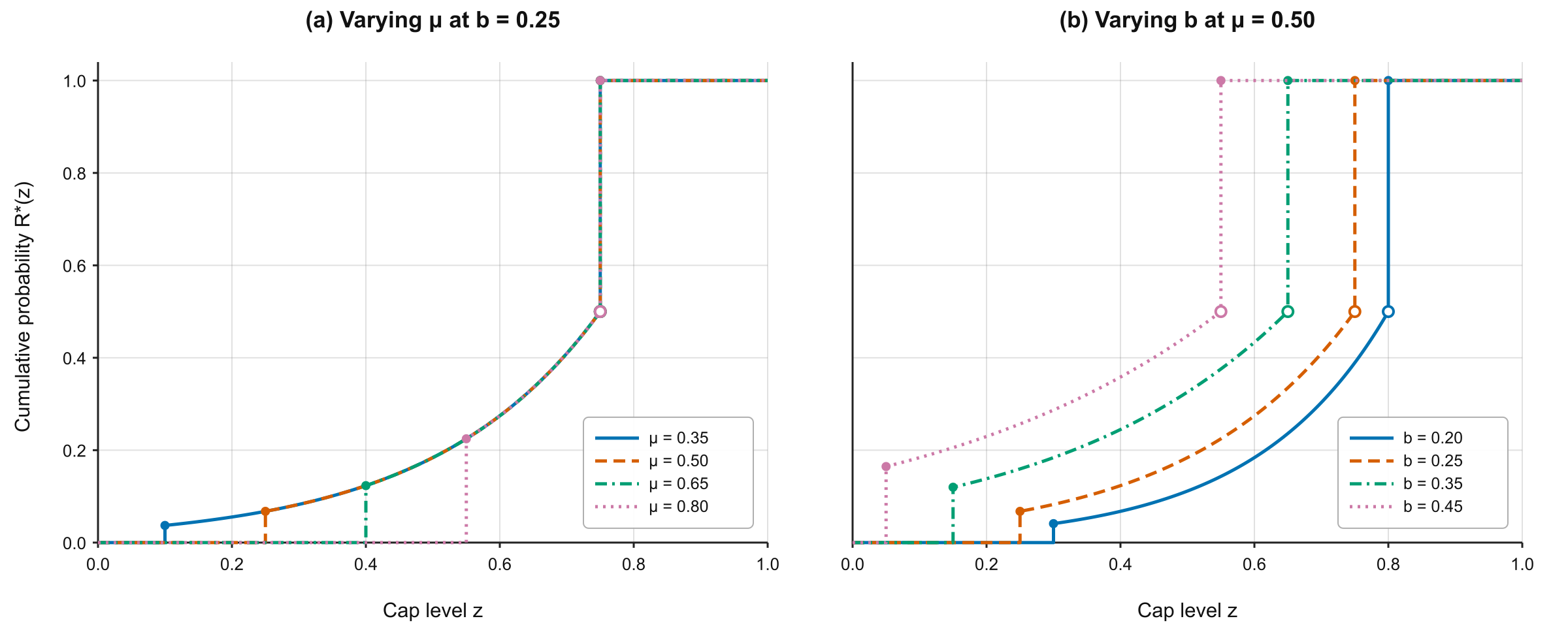}
\caption{Monotone comparative statics of the optimal cap cdf $R^*$. Left:
varying $\mu$ at $b=0.25$. Right: varying $b$ at $\mu=0.50$. Vertical
segments display the endpoint atoms.}
\label{fig:random-cap-comparative-statics}
\end{figure}

\paragraph{The mean $\mu$.}
Holding $b$ fixed, a higher mean shifts $Z^*(\mu,b)$ upward in the
first-order stochastic-dominance order and reduces its variance. Moreover,
\[
\frac{\partial}{\partial\mu}\mathbb E[Z^*(\mu,b)]
=\frac12e^{-(1-\mu)/b}>0.
\]
At the allocation level, the full-delegation region $[0,\mu-b]$ expands, the
stochastic-intervention region $(\mu-b,1-b)$ contracts from the left, and the
always-capped region $(1-b,1]$ is unchanged. For states that remain in the
common stochastic region, the probability that the cap binds is unchanged
because the interior cdf does not depend on $\mu$; intervention nevertheless
becomes less severe because the lowest caps have been raised. Under the
monotone coupling, $a=\min\{\theta,Z^*(\mu,b)\}$ therefore rises state by state.
Economically, a higher mean makes adverse low states less important and allows
the principal to enlarge the safe-harbor region without changing the local
form of discipline. In this sense, $\mu$ is primarily a boundary, or
discretion, parameter. Consistently,
\[
\frac{\partial V^*(\mu,b)}{\partial\mu}
=be^{-(1-\mu)/b}>0,
\]
and $Z^*(\mu,b)$ converges in distribution to the deterministic cap $1-b$ as
$\mu\uparrow1$.

\paragraph{The bias $b$.}
Holding $\mu$ fixed, a larger bias shifts $Z^*(\mu,b)$ downward in the
first-order stochastic-dominance order and increases its variance.
Equation~\eqref{eq:mean-optimal-cap} gives
\[
\frac{\partial}{\partial b}\mathbb E[Z^*(\mu,b)]
=-\frac32+\frac12\left(1+\frac{1-\mu}{b}\right)
e^{-(1-\mu)/b}
\in\left(-\frac32,-1\right).
\]
The full-delegation region $[0,\mu-b]$ contracts, the stochastic-intervention
region $(\mu-b,1-b)$ shifts to the left without changing its length, and the
always-capped region $(1-b,1]$ expands downward. For states that remain in the
common stochastic region, the cap binds more frequently and, conditional on
intervention, the realized cap is lower. Thus, a larger bias tightens the
mechanism along both the extensive margin---the set and probability of
intervention---and the intensive margin---the severity of the restriction.
Economically, stronger preference conflict makes high actions more costly and
calls for a globally tighter system of discipline. Accordingly, $b$ is a
global discipline parameter, and
\[
\frac{\partial V^*(\mu,b)}{\partial b}
=b\left[\left(2+\frac{1-\mu}{b}\right)e^{-(1-\mu)/b}-2\right]<0.
\]
}

\section{Extensions}\label{sec:extensions}
This section develops three extensions of the baseline random-cap analysis.
First, we replace the mean-only ambiguity set with
mean-preserving-contraction (MPC) ambiguity and provide conditions under which
random-cap optimality continues to hold. Second, we solve the robust problem
within the class of deterministic mechanisms to quantify the value of
randomization. Finally, we extend the saddle-point construction to the boundary
regime \(\mu<b\) and study the benchmark in which nature's mean is unrestricted.
Together, these extensions clarify both the scope and the limits of random-cap
optimality.

\subsection{Mean-preserving-contraction (MPC) ambiguity}
\label{sec:MPC}

Convex-order ambiguity provides a natural refinement of the mean-only
ambiguity set studied above. Fix a prior \(F\in\Delta([0,1])\) with mean
\(\mu\), and write \(G\preceq_{\mathrm{cx}}F\) when \(G\) is dominated by
\(F\) in convex order, or equivalently, when \(G\) is a mean-preserving
contraction of \(F\). The MPC problem is
\begin{equation}\label{eq:MPCP}
\sup_{\pi\in\Pi^{IC}}
\inf_{G\preceq_{\mathrm{cx}}F}W(\pi,G),
\qquad
W(\pi,G)
:=-\mathbb E_{\substack{\theta\sim G\\
a\sim\pi(\cdot\mid\theta)}}
\bigl[(\theta-b-a)^2\bigr].
\end{equation}

We impose the following regularity condition on the prior.

\begin{assumption}[Regularity]\label{ass:regu}
The distribution \(F\) is absolutely continuous on \([0,1]\), has full
support, and admits a continuous and strictly positive density \(f\) on
\((0,1)\). Its hazard rate
\[
h_F(x):=\frac{f(x)}{1-F(x)}
\]
is strictly increasing on \((0,1)\).
\end{assumption}

The absolute-continuity requirement is useful for stating the construction
without qualifications concerning atoms. \rev{The Bernoulli prior
\(F^B=(1-\mu)\delta_0+\mu\delta_1\) does not satisfy
Assumption~\ref{ass:regu}, but it provides a useful boundary comparison: its
convex-order lower set consists of all distributions on \([0,1]\) with mean
\(\mu\), so it recovers the mean-only ambiguity set of the baseline
model.\footnote{\rev{This recovery is a separate atomic construction, not an
application of Lemma~\ref{lem:MPC-feasibility}. Let
\(\kappa_B:=e^{-(1-\mu)/b}\) and set
\(\bar z_B=1-b\), \(\tau_B=1\), and \(\eta_B=1/2\). The cutoff is reached
through the endpoint jump:
\(1-F^B(1)=0<\kappa_B<1-F^B(1-)=\mu\), while
\(\tau_B-\bar z_B=b\). Thus, the contact equality and strict gap in
\eqref{eq:MPC-cutoff-properties} do not hold. The resulting distribution is
\(G^*\) in \eqref{eq:Gstar-x0}, with an endpoint atom of size \(\kappa_B\) at
one, and \(\eta_B=1/2\) yields the cap distribution \(R^*\) in
\eqref{eq:Rstar}. The saddle-point conclusion follows from
Theorem~\ref{thm:saddle}, rather than from
Theorem~\ref{thm:MPC-saddle}.}}}

For \(z\in[\mu-b,1]\), define
\begin{equation}\label{eq:H-def}
H_F(z)
:=\int_z^1
\min\!\left\{1,
e^{\frac{z-(\mu-b)}{b}}[1-F(x)]\right\}dx-b.
\end{equation}
The next lemma constructs the candidate worst-case distribution.

\begin{lemma}[MPC benchmark distribution]\label{lem:MPC-feasibility}
Under Assumptions~\ref{ass:mu-range} and~\ref{ass:regu}, there is a unique
\(\bar z_F\in(\mu-b,1)\) satisfying
\begin{equation}\label{eq:zbar-def}
H_F(\bar z_F)=0.
\end{equation}
Define
\begin{equation}\label{eq:tau-def}
\tau_F
:=\inf\left\{x\in[\bar z_F,1]:
1-F(x)\leq
e^{-\frac{\bar z_F-(\mu-b)}{b}}
\right\}.
\end{equation}
Then
\begin{equation}\label{eq:MPC-cutoff-properties}
\bar z_F<\tau_F<1,
\qquad
1-F(\tau_F)
=e^{-\frac{\bar z_F-(\mu-b)}{b}},
\qquad
0<\tau_F-\bar z_F<b.
\end{equation}
Moreover, the distribution \(G^F\) with survival function
\begin{equation}\label{eq:GF-def}
1-G^F(x)=
\begin{cases}
1,
&x<\mu-b,\\[4pt]
e^{-\frac{x-(\mu-b)}{b}},
&\mu-b\leq x<\bar z_F,\\[9pt]
e^{-\frac{\bar z_F-(\mu-b)}{b}},
&\bar z_F\leq x<\tau_F,\\[4pt]
1-F(x),
&\tau_F\leq x\leq1
\end{cases}
\end{equation}
has mean \(\mu\) and satisfies \(G^F\preceq_{\mathrm{cx}}F\).
\end{lemma}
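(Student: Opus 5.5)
The plan has three parts. First, existence and uniqueness of the root \(\bar z_F\). Second, the cutoff properties \eqref{eq:MPC-cutoff-properties}. Third, the mean and convex-order claims for \(G^F\).

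Throughout, write \(c:=\mu-b\in(0,1)\), \(K(z):=e^{(z-c)/b}\) and \(S:=1-F\). Under Assumption~\ref{ass:regu}, \(S\) is continuous and strictly decreasing from \(S(0)=1\) to \(S(1)=0\). Its increasing hazard rate means \(\log S\) is strictly concave on \((0,1)\). The key geometric fact I will use repeatedly is this: the set \(\{x: S(x)\geq e^{-(x-c)/b}\}\), where a log-concave survival function lies above an exponential, is an interval. Outside that interval the exponential lies above \(S\).

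\textbf{Step 1: existence.} At \(z=c\) we have \(K=1\), so
\[
H_F(c)=\int_c^1 S-b=\mu-\int_0^c S-b>\mu-c-b=0,
\]
because \(S<1\) on \((0,c]\) by full support. At \(z=1\), \(H_F(1)=-b<0\). The integrand of \(H_F\) is jointly continuous, so \(H_F\) is continuous and the intermediate value theorem gives a root in \((c,1)\).

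\textbf{Step 1: uniqueness.} Let \(\tau(z)\) denote the crossing point where \(K(z)S\) falls to \(1\), with \(\tau(z)=z\) if \(K(z)S(z)\leq1\). On the region \(\tau(z)>z\) we have \(H_F(z)=\tau(z)-z+K(z)\int_{\tau(z)}^1S-b\). Differentiating, the boundary terms at \(\tau(z)\) cancel because \(K(z)S(\tau(z))=1\). This gives
\[
H_F'(z)=-1+b^{-1}K(z)\textstyle\int_{\tau(z)}^1S .
\]
At any root this equals \(-(\tau(z)-z)/b<0\).

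It remains to exclude roots in the region \(K(z)S(z)\leq1\), i.e.\ where \(S(z)\leq e^{-(z-c)/b}\). I expect this to be the main obstacle. My plan uses the comparison function
\[
D(z):=K(z)^{-1}\bigl(H_F(z)+b\bigr)-b\,e^{-(z-c)/b},
\]
which has the same sign as \(H_F\).
\begin{itemize}
\item On the exponential-above-\(S\) region, \(D'(z)=-S(z)+e^{-(z-c)/b}\geq0\).
\item Near \(z=c\), \(S\) lies below the exponential, since \(S(c)<1\).
\item By the interval property, the region where \(S\) dominates the exponential is a single interval \([z_1,z_2]\).
\item On \([c,z_1]\), \(D\) starts positive and is increasing, so it has no root.
\item On \([z_2,1]\), \(D\) is increasing and ends at \(D(1)=-be^{-(1-c)/b}<0\), so \(D<0\) there and again there is no root.
\end{itemize}
Hence every root lies where \(\tau(z)>z\). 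There \(H_F\) crosses zero only downward, and a continuous function with negative derivative at every zero has at most one zero. If this region argument becomes delicate, I would fall back on the mean-residual-life formulation: log-concavity of \(S\) makes \(\int_z^1S/S(z)\) decreasing.

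\textbf{Step 2: cutoff properties.} Set \(\bar K:=K(\bar z_F)\). By Step~1, \(\bar K S(\bar z_F)>1\), so \(\tau_F>\bar z_F\). Since \(S(1)=0<\bar K^{-1}\) and \(S\) is continuous and strictly decreasing, \(\tau_F<1\) and \(S(\tau_F)=\bar K^{-1}\). The root equation reads
\[
\tau_F-\bar z_F=b-\bar K\textstyle\int_{\tau_F}^1S .
\]
The integral is strictly positive because \(\tau_F<1\), so \(\tau_F-\bar z_F<b\).

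\textbf{Step 3: mean of \(G^F\).} The survival function in \eqref{eq:GF-def} is nonincreasing, equals \(1\) below \(c\) and \(0\) at \(1\), so it defines a distribution. Its mean is
\[
\int_0^1(1-G^F)=c+b(1-\bar K^{-1})+(\tau_F-\bar z_F)\bar K^{-1}+\int_{\tau_F}^1S .
\]
Multiplying the root equation by \(\bar K^{-1}\) shows that the last two terms sum to \(b\bar K^{-1}\). The total is therefore \(c+b=\mu\).

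\textbf{Step 3: convex order.} With equal means, \(G^F\preceq_{\mathrm{cx}}F\) is equivalent to
\[
\Psi(x):=\int_x^1\bigl[S-(1-G^F)\bigr]\geq0\quad\text{for all }x,
\]
with \(\Psi(0)=\Psi(1)=0\) and \(\Psi'(x)=(1-G^F(x))-S(x)\). I will read off the sign pattern of \(\Psi'\) region by region.
\begin{itemize}
\item On \([0,c)\), \(\Psi'=1-S\geq0\).
\item On \([c,\bar z_F)\), \(\Psi'=e^{-(x-c)/b}-S\). By the interval property, this is \(\leq0\) exactly on a subinterval. That subinterval must extend to \(\bar z_F\), because \(S(\bar z_F)>\bar K^{-1}\).
\item On \([\bar z_F,\tau_F)\), \(\Psi'=\bar K^{-1}-S\leq0\), by the definition of \(\tau_F\).
\item On \([\tau_F,1]\), \(\Psi'=0\).
\end{itemize}
So \(\Psi'\) is nonnegative and then nonpositive: \(\Psi\) rises from \(\Psi(0)=0\) and then falls to \(\Psi(1)=0\). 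Hence \(\Psi\geq0\) everywhere, which proves \(G^F\preceq_{\mathrm{cx}}F\).
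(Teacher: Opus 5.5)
Your proof is correct, and it takes a genuinely different route from the paper's.

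\emph{The paper's route.} The paper argues in the reverse order. It first obtains $\tau_F$ as the unique point on the decreasing branch of $\Gamma_b(x)=x+e_F(x)+b\log S_F(x)$ with $\Gamma_b(\tau_F)=\mu$ and $e_F(\tau_F)<b$. It then \emph{defines} $\bar z_F:=\mu-b-b\log S_F(\tau_F)$ and checks that $H_F(\bar z_F)=0$, so existence is constructive rather than by the intermediate value theorem. Uniqueness comes from strict single-peakedness of $J=H_F+b$, since the sign of $J'$ is that of $e_F(\max\{z,y(z)\})-b$. Convex order is verified on call functions. On $[\mu-b,\bar z_F]$ the comparison reduces to $e^{t/b}C_F(t)\geq be^{(\mu-b)/b}$, which is checked at the endpoints and extended by single-peakedness of $Q_b$.

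\emph{Your route.} You let one geometric fact drive everything: a log-concave survival function lies above $e^{-(x-c)/b}$ on a single interval. For uniqueness, roots inside that interval are strict down-crossings, and $D=e^{-(z-c)/b}H_F$ is increasing off it. The same monotonicity also shows the interval is nonempty and nondegenerate, a point you left implicit. For convex order, you use a one-crossing, cut-criterion argument on $\Psi'$ that never computes $C_{G^F}$. Your version is shorter and more elementary.

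\emph{What the paper's route buys.} First, generality. All uses of the hazard-rate assumption are funneled into two reduced-form properties that depend only on the sign of $e_F-b$. Hence Part~2 goes through under the weaker decreasing-mean-residual-life condition, where log-concavity, and with it your single-crossing fact, is no longer available. Second, reuse. It isolates the identity $\tau_F-\bar z_F+e_F(\tau_F)=b$ and the monotonicity of $Q_b$ beyond $\tau_F$, and both are used again in the proof of Theorem~\ref{thm:MPC-saddle}. Your root equation is in fact the same identity, since $e^{(\bar z_F-(\mu-b))/b}\int_{\tau_F}^1 S_F=e_F(\tau_F)$.
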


The construction combines three pieces. Starting at \(\mu-b\), the survival
function falls exponentially until \(\bar z_F\), remains constant until
\(\tau_F\), and then coincides with the prior's survival function. The root
condition \eqref{eq:zbar-def} makes these pieces satisfy the mean constraint;
the increasing-hazard-rate assumption ensures both uniqueness and convex-order
feasibility.

Define
\begin{equation}\label{eq:etaF-def}
\eta_F
:=1-\frac{\tau_F-\bar z_F}{2b}.
\end{equation}
By \eqref{eq:MPC-cutoff-properties}, \(\eta_F\in(1/2,1)\). Consider the
direct mechanism whose conditional cdf is
\begin{equation}\label{eq:piF}
\pi^F(a\mid\theta)=
\begin{cases}
0,
&a<\min\{\theta,\mu-b\},\\[6pt]
\eta_F e^{\frac{a-\bar z_F}{b}},
&\mu-b\leq a<\min\{\theta,\bar z_F\},\\[9pt]
1,
&a\geq\min\{\theta,\bar z_F\}.
\end{cases}
\end{equation}

\begin{theorem}[Robust random cap under MPC ambiguity]
\label{thm:MPC-saddle}
Under Assumptions~\ref{ass:mu-range} and~\ref{ass:regu}, the pair
\((\pi^F,G^F)\) is a saddle point of \eqref{eq:MPCP}: for every
\(\pi\in\Pi^{IC}\) and every \(G\preceq_{\mathrm{cx}}F\),
\begin{equation}\label{eq:MPC-saddle-ineq}
W(\pi^F,G)
\geq W(\pi^F,G^F)
\geq W(\pi,G^F).
\end{equation}
Consequently, the value of \eqref{eq:MPCP} is
\begin{equation}\label{eq:MPC-value}
W(\pi^F,G^F)=-\operatorname{Var}_{G^F}(\theta).
\end{equation}
The mechanism \(\pi^F\) is implemented by drawing a cap \(Z\) from the cdf
\begin{equation}\label{eq:mpc-Rstar}
R^F(z)=
\begin{cases}
0,
&z<\mu-b,\\[4pt]
\eta_F e^{\frac{z-\bar z_F}{b}},
&\mu-b\leq z<\bar z_F,\\[9pt]
1,
&z\geq\bar z_F,
\end{cases}
\end{equation}
and setting \(a=\min\{\theta,Z\}\).
\end{theorem}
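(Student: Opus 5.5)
The plan is to repeat the two-step saddle-point argument of Theorem~\ref{thm:saddle}. Step 1 shows $W(\pi,G^F)\le -\operatorname{Var}_{G^F}(\theta)$ for every $\pi\in\Pi^{IC}$. Step 2 shows $W(\pi^F,G)\ge W(\pi^F,G^F)$ for every $G\preceq_{\mathrm{cx}}F$. The implementation by $R^F$ follows from Proposition~\ref{prop:imple}, exactly as in Corollary~\ref{cor:pistar-as-RC}: $1-m'_+$ of the expected-action rule of $\pi^F$ reproduces \eqref{eq:mpc-Rstar}, and $\eta_F>1/2>0$ guarantees that this is a genuine cdf.

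\textbf{Step 1.} Apply Lemma~\ref{lem:EV-formula} with $x_0=\mu-b$, which is allowed since $\operatorname{supp}(G^F)\subseteq[\mu-b,1]$. Compute $\nu_{G^F}$ from \eqref{eq:GF-def}:
\begin{itemize}
\item it vanishes on $[\mu-b,\bar z_F)$, by the exponential form;
\item it equals $\kappa_F\,dx$ on $[\bar z_F,\tau_F)$, where $\kappa_F:=e^{-(\bar z_F-\mu+b)/b}$;
\item it equals $(1-F(x))(1-b\,h_F(x))\,dx$ on $[\tau_F,1]$.
\end{itemize}
There are no atoms, because the survival function is continuous at $\bar z_F$ and, by \eqref{eq:MPC-cutoff-properties}, at $\tau_F$.

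The root condition \eqref{eq:zbar-def} says exactly that $\int_{\bar z_F}^1(1-G^F)\,dx=b\kappa_F$, so $\nu_{G^F}([\bar z_F,1])=0$. For nondecreasing $m^\pi$, write
\[
\int m^\pi\,d\nu=m^\pi(\bar z_F)\,\nu([\bar z_F,1])+\int\nu([x,1])\,dm^\pi(x).
\]
It therefore suffices to show that the tail masses $T(x):=\int_x^1(1-G^F)\,dt-b(1-G^F(x))$ are nonpositive.
\begin{itemize}
\item On $[\bar z_F,\tau_F]$, $T$ is affine and decreasing in $x$, and $T(\bar z_F)=0$.
\item On $[\tau_F,1]$, $T(x)=(1-F(x))\bigl(\mathrm{MRL}_F(x)-b\bigr)$. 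The increasing hazard rate makes the mean residual life nonincreasing, so $\mathrm{MRL}_F(x)\le\mathrm{MRL}_F(\tau_F)=b-(\tau_F-\bar z_F)<b$, using \eqref{eq:MPC-cutoff-properties}.
\end{itemize}
Together with $U^\pi(\mu-b)\le0$, this yields $W(\pi,G^F)\le(\mu-b)^2-\mathbb E_{G^F}[(\theta-b)^2]=-\operatorname{Var}_{G^F}(\theta)$. Equality holds for $\pi^F$: its caps lie in $[\mu-b,\bar z_F]$, so $U^{\pi^F}(\mu-b)=0$ and $m^{\pi^F}$ is constant on $[\bar z_F,1]$.

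\textbf{Step 2.} Let $\ell$ be the affine function agreeing with $V^{\pi^F}$ on $[\mu-b,\bar z_F]$. Affinity there comes from the ODE $bR'=R$, just as in Lemma~\ref{lem:pistar-const}. Below $\mu-b$, $V^{\pi^F}-\ell$ is linear and nonnegative, as in \eqref{eq:Vminusell}. For $\theta\ge\bar z_F$ the action distribution no longer depends on $\theta$, so $V^{\pi^F}(\theta)=-(\theta-b-\mathbb E Z)^2-\operatorname{Var}Z$ is a concave quadratic.

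The key claim is that $\eta_F$ in \eqref{eq:etaF-def} is calibrated so that $V^{\pi^F}(\tau_F)=\ell(\tau_F)$. This should be an explicit computation, paralleling contact at $1$ when $\tau=1$ and $\eta=1/2$ in the baseline. Given the contact at both $\bar z_F$ and $\tau_F$, concavity puts the quadratic above its chord on $[\bar z_F,\tau_F]$, so its slope at $\tau_F$ lies below the slope of $\ell$.

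Define $\Phi:=\ell$ on $[0,\tau_F]$ and $\Phi:=V^{\pi^F}$ on $[\tau_F,1]$. Then $\Phi$ is concave and $\Phi\le V^{\pi^F}$ everywhere. For $G\preceq_{\mathrm{cx}}F$ this gives
\[
W(\pi^F,G)\ge\mathbb E_G\Phi\ge\mathbb E_F\Phi .
\]
Write $\Phi=\ell-\psi$, where $\psi$ is supported on $[\tau_F,1]$. The terms $\mathbb E\ell$ agree under $F$ and $G^F$ because both have mean $\mu$, and the terms $\mathbb E\psi$ agree because $F$ and $G^F$ coincide on $[\tau_F,1]$ with no atom at $\tau_F$. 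Hence $\mathbb E_F\Phi=\mathbb E_{G^F}\Phi$. Finally, $G^F$ puts no mass on $(\bar z_F,\tau_F)$ or below $\mu-b$, so $\Phi=V^{\pi^F}$ $G^F$-a.s. This gives $\mathbb E_{G^F}\Phi=W(\pi^F,G^F)$, which closes the saddle.

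The main obstacle is the contact identity $V^{\pi^F}(\tau_F)=\ell(\tau_F)$ in Step 2, i.e.\ verifying that the specific $\eta_F$ makes the chord condition hold. I would first compute $V^{\pi^F}-\ell$ on $[\bar z_F,1]$ as $c(\theta-\bar z_F)(\tau-\theta)$, expecting $c=1$ from the quadratic's curvature, and then solve for the $\eta$ that places the second root at $\tau_F$. If that check fails, the fallback is to take $\Phi$ as the concave envelope from below that uses $\ell$ up to the tangency point, but then the equality $\mathbb E_F\Phi=\mathbb E_{G^F}\Phi$ would need to be re-examined.
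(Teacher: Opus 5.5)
Your proposal is correct and follows the paper's proof essentially step for step. Step 1 uses the same sign argument for the tail masses $\nu_{G^F}([x,1])$, and Step 2 uses the same concave minorant $L_F-(\theta-\bar z_F)(\theta-\tau_F)_+$ together with $\mathbb E_F\psi_F=\mathbb E_{G^F}\psi_F$. The contact identity you flag as the main obstacle does hold. The atom of mass $1-\eta_F$ at $\bar z_F$ makes the right slope of $V^{\pi^F}$ exceed the slope of $\ell$ by $2b(1-\eta_F)=\tau_F-\bar z_F$, and $V^{\pi^F}-\ell$ has second derivative $-2$ on $[\bar z_F,1]$. Hence $V^{\pi^F}-\ell=(\theta-\bar z_F)(\tau_F-\theta)$ there, so $c=1$ with second root $\tau_F$, exactly as in \eqref{eq:mpc-Vstar}.
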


The proof is given in Appendix~\ref{app:MPC-proofs}. The upper saddle bound
extends the signed-measure payoff decomposition used in the baseline model.
The lower bound combines an affine function with a convex correction, thereby
using exactly the additional restriction imposed by convex-order dominance.

\subsubsection{A signal-design interpretation}
\label{subsec:MPC-signal-foundation}

Mean-preserving contractions also arise naturally in information design
\citep{kamenica2011bayesian,gentzkow2016rothschild,chen2023information}. The
connection to the direct MPC benchmark, however, is only partial. When the
relevant type is a posterior mean, residual uncertainty about the latent state
remains payoff relevant and introduces an additional variance term into the
principal's objective.

Let the latent state \(\theta\) have prior \(F\). A signal realization \(s\)
induces a posterior \(P_s\in\Delta([0,1])\), with posterior mean and variance
\[
x_s:=\mathbb E_{P_s}[\theta],
\qquad
\sigma_s^2:=\operatorname{Var}_{P_s}(\theta).
\]
Conditional on \(s\), the principal's and the agent's expected payoffs from
action \(a\) are
\begin{equation}\label{eq:posterior-payoffs}
\begin{aligned}
v_s(a)&=-\sigma_s^2-(x_s-b-a)^2,\\
u_s(a)&=-\sigma_s^2-(x_s-a)^2.
\end{aligned}
\end{equation}
{
The posterior-variance term is independent of the action. It therefore drops
out of the agent's reporting incentives but remains part of the principal's ex
ante payoff. Consequently, posteriors with the same mean induce identical
preferences over action lotteries. Under weak incentive compatibility,
however, this observation does not require a mechanism to assign the same
lottery to all such posteriors: different lotteries may give the agent the same
expected quadratic loss while giving the principal different payoffs.

Let \(\tau\in\Delta(\Delta([0,1]))\) denote the distribution of posteriors
induced by a signal structure. Bayesian plausibility requires
\begin{equation}\label{eq:Bayes-plausibility}
\int_{\Delta([0,1])}P\,d\tau(P)=F.
\end{equation}
For a posterior \(P\), write \(x_P:=\mathbb E_P[\theta]\). An posterior-report mechanism \(\widetilde\pi(\cdot\mid P)\) assigns a lottery
over actions to every reported posterior. It is incentive compatible if
\begin{equation}\label{eq:signal-IC}
\int_A(x_P-a)^2\,d\widetilde\pi(a\mid P)
\leq
\int_A(x_P-a)^2\,d\widetilde\pi(a\mid\widehat P)
\qquad
\forall P,\widehat P\in\Delta([0,1]).
\end{equation}
Let \(\widetilde\Pi^{IC}\) denote this mechanism class.

The adversarial signal-design problem is
\begin{equation}\label{eq:signal-maxmin}
\sup_{\widetilde\pi\in\widetilde\Pi^{IC}}
\inf_{\substack{\tau\in\Delta(\Delta([0,1])):\\
\int P\,d\tau(P)=F}}
\mathbb E_{\substack{P_s\sim\tau\\
a\sim\widetilde\pi(\cdot\mid P_s)}}
\bigl[v_s(a)\bigr].
\end{equation}

\begin{theorem}[Signal-design reduction and robust optimum]
\label{thm:MPC-signal}
Suppose that \(F\in\Delta([0,1])\) has mean \(\mu\) and that
\(0<b<\mu<1\). For \(\pi\in\Pi^{IC}\) and
\(\bar G\preceq_{\mathrm{cx}}F\), define
\begin{equation}\label{eq:signal-objective-decomposition}
\mathcal J(\pi,\bar G)
:=-\operatorname{Var}_F(\theta)
+\operatorname{Var}_{\bar G}(x)
-\mathbb E_{\substack{x\sim\bar G\\
a\sim\pi(\cdot\mid x)}}
\left[(x-b-a)^2\right].
\end{equation}
Then the signal-design problem \eqref{eq:signal-maxmin} is
value-equivalent to
\begin{equation}\label{eq:signal-MPC}
\sup_{\pi\in\Pi^{IC}}
\inf_{\bar G\preceq_{\mathrm{cx}}F}
\mathcal J(\pi,\bar G).
\end{equation}
Their common max--min value is
\[
-\operatorname{Var}_F(\theta).
\]
The fully uninformative signal \(\tau^{\mathrm{FP}}=\delta_F\), which induces
the distribution \(\bar G=\delta_\mu\) of posterior means, is a worst-case
signal for both mechanisms described below. The singleton delegation rule
\(a=\mu-b\) and the posterior-mean cap
\(a(P)=\min\{x_P,\mu-b\}\) are both robustly optimal within \(\widetilde\Pi^{IC}\).
\end{theorem}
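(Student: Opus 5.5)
The plan is to compute both max--min values directly. I will show that each equals \(-\operatorname{Var}_F(\theta)\) by pairing a universal upper bound, obtained from the uninformative signal, with a universal lower bound, obtained from the constant action \(\mu-b\). The first step is the variance decomposition behind \eqref{eq:signal-objective-decomposition}. For any signal \(\tau\) and any \(\widetilde\pi\in\widetilde\Pi^{IC}\), \eqref{eq:posterior-payoffs} gives
\[
\mathbb E[v_s(a)]=-\mathbb E_\tau[\sigma_s^2]-\mathbb E\bigl[(x_s-b-a)^2\bigr].
\]
By the law of total variance and Bayes plausibility \eqref{eq:Bayes-plausibility}, \(\mathbb E_\tau[\sigma_s^2]=\operatorname{Var}_F(\theta)-\operatorname{Var}_{\bar G}(x)\), where \(\bar G\) is the induced distribution of posterior means. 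It is standard that \(\bar G\preceq_{\mathrm{cx}}F\), and conversely every such \(\bar G\) is induced by some \(\tau\).

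Whenever \(\widetilde\pi(\cdot\mid P)\) depends on \(P\) only through \(x_P\), the payoff therefore equals \(\mathcal J(\pi,\bar G)\). Moreover, every \(\pi\in\Pi^{IC}\) embeds into \(\widetilde\Pi^{IC}\) via \(\widetilde\pi(\cdot\mid P):=\pi(\cdot\mid x_P)\), because \eqref{eq:signal-IC} depends on \(P\) only through \(x_P\). I will not try to show that general \(\widetilde\pi\) reduce to \(\Pi^{IC}\). As the text preceding the theorem warns, weak IC allows different lotteries for posteriors with the same mean. Instead, value equivalence will follow because both problems have the same value.

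For the upper bound, fix any \(\widetilde\pi\in\widetilde\Pi^{IC}\) and let nature choose \(\tau^{\mathrm{FP}}=\delta_F\). There is then a single posterior \(F\) with \(\sigma^2=\operatorname{Var}_F(\theta)\), so the payoff is
\[
-\operatorname{Var}_F(\theta)-\mathbb E_{a\sim\widetilde\pi(\cdot\mid F)}\bigl[(\mu-b-a)^2\bigr]\leq-\operatorname{Var}_F(\theta).
\]
The same bound holds for \eqref{eq:signal-MPC} with \(\bar G=\delta_\mu\), since \(\operatorname{Var}_{\delta_\mu}=0\). For the lower bound, take the singleton rule \(a\equiv\mu-b\), which is trivially IC. For any \(\bar G\) with mean \(\mu\), the loss term is \(\mathbb E_{\bar G}[(x-\mu)^2]=\operatorname{Var}_{\bar G}(x)\). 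This cancels the \(+\operatorname{Var}_{\bar G}\) term exactly, so \(\mathcal J=-\operatorname{Var}_F(\theta)\) for every \(\bar G\), and likewise for every \(\tau\). Both sup--inf values therefore equal \(-\operatorname{Var}_F(\theta)\), which gives value equivalence. It also shows that \(\delta_F\) is a worst case against the singleton rule, since every signal yields the same payoff.

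The remaining step, and the only one needing a real inequality, is the posterior-mean cap \(a(P)=\min\{x_P,\mu-b\}\). It is IC because, given the reported menu, the agent's loss \((x_P-a)^2\) is minimized by projecting \(x_P\) onto \([0,\mu-b]\). Its payoff is
\[
-\operatorname{Var}_F+\operatorname{Var}_{\bar G}-\mathbb E_{\bar G}\bigl[b^2\mathbf 1\{x<\mu-b\}+(x-\mu)^2\mathbf 1\{x\geq\mu-b\}\bigr].
\]
Writing \(\operatorname{Var}_{\bar G}=\mathbb E_{\bar G}[(x-\mu)^2]\), this equals
\[
-\operatorname{Var}_F+\mathbb E_{\bar G}\bigl[\bigl((x-\mu)^2-b^2\bigr)\mathbf 1\{x<\mu-b\}\bigr]\geq-\operatorname{Var}_F,
\]
because \(x<\mu-b\) implies \(\mu-x>b\). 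Hence the cap is also robustly optimal. Under \(\delta_F\), with \(x=\mu\), the cap binds at \(\mu-b\) and the payoff equals \(-\operatorname{Var}_F\), so \(\delta_F\) is a worst case for the cap as well. I expect the main care to be needed in the posterior-variance bookkeeping and in stating value equivalence without asserting a pointwise reduction of \(\widetilde\Pi^{IC}\) to \(\Pi^{IC}\).
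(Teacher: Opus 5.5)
Your proposal is correct and follows the paper's proof essentially step by step. The uninformative signal (equivalently \(\bar G=\delta_\mu\)) gives the upper bound \(-\operatorname{Var}_F(\theta)\) in both problems. The constant action \(\mu-b\) attains this bound against every signal via the law of total variance, value equivalence comes from the two values coinciding rather than from a pointwise reduction of \(\widetilde\Pi^{IC}\) to \(\Pi^{IC}\), and the posterior-mean cap is handled by the pointwise inequality \(b^2\le(x-\mu)^2\) for \(x<\mu-b\).
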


\noindent
The proof is provided in Appendix~\ref{app:proof-MPC-signal}. Full pooling gives
the same upper bound in the signal-design problem and in
\eqref{eq:signal-MPC}, while the constant action \(a=\mu-b\) attains that bound
under every signal structure. Thus, allowing the mechanism to condition on the
full reported posterior rather than only its mean creates no additional robust
value.

At the level of payoffs, signal design remains distinct from the direct MPC
problem. In the latter, the realized type is itself the primitive
payoff-relevant state. Under signal design, residual posterior variance enters
the principal's objective. The adversarial information designer can choose a
fully uninformative signal and eliminate the agent's informational advantage.
In the baseline model, by contrast, the agent always observes the state
perfectly, and nature varies only its distribution. Full pooling therefore
shuts down the commitment--flexibility tradeoff at the heart of delegation.
}

\subsection{The value of randomization}
\label{subsec:value-randomness}

\rev{\citet{HuLi2023} show that, under their compact-action-space conditions,
an optimal deterministic delegation set in the mean-constrained robust problem
can be chosen to be an interval. Our quadratic environment satisfies the
payoff condition behind their lower-convex-envelope argument: under full
discretion, the principal's payoff is concave in the state. After normalizing
intervals,\footnote{\rev{Our maintained action space $A=\mathbb R$ lies outside
their formal assumptions, but the short scope check in
Appendix~\ref{app:deterministic} shows that this creates no value gap.}} a
robustly optimal deterministic mechanism can therefore be chosen to be a cap
$[0,z]$.}

Let
$V_z^D(\mu):=\inf_{G\in\mathcal G_\mu}\phi_z(G)$ denote the robust payoff
generated by the deterministic cap $z$. Appendix~\ref{app:deterministic}
solves this benchmark in two steps. For a fixed cap, nature's problem admits an
extremal solution supported on at most two states. Optimizing the resulting
payoff over $z$ then gives
\begin{equation}\label{eq:optimal-det-cap-value}
V^D(\mu)
:=
\sup_{z\in[0,1]}V_z^D(\mu)
=
\begin{cases}
-b^2,
& \mu\leq1-2b,\\[6pt]
-\left[
b(1-\mu)-\dfrac{(1-\mu)^2}{4}
\right],
& \mu>1-2b.
\end{cases}
\end{equation}
If $\mu\leq1-2b$, every cap $z\in[1-2b,1]$ is optimal. If
$\mu>1-2b$, the optimal deterministic cap is unique and equals
\begin{equation}\label{eq:optimal-det-cap}
z_D^*
=
\frac{1+\mu}{2}-b.
\end{equation}

Two observations are worth emphasizing. \rev{A worst-case distribution can
always be chosen with support on at most two states.}
This is consistent with the extreme-point structure of a linear moment problem
with a single mean restriction. Second, comparing
\eqref{eq:optimal-det-cap-value} with the saddle-point value gives
\[
V^*
=
-b^2\left(
1-e^{-\frac{1-\mu}{b}}
\right)
>
V^D(\mu)
\qquad
\text{for }0<b<\mu<1.
\]
Thus, randomization strictly improves the principal's robust guarantee. A
deterministic cap leaves the principal exposed to an extremal distribution
tailored to that particular cap. A lottery over caps instead hedges across
these cap-specific attacks and reshapes the principal's state-by-state payoff
so that no feasible distribution can lower its expectation below $V^*$.
Randomization is valuable not because it improves performance under the
least-favorable distribution itself, but because it provides protection
against deviations from that distribution. Figure~\ref{fig:value-comparison}
compares the Bayesian value $V^U$ under the uniform prior, the robust
random-cap value $V^*$, and the robust deterministic-cap value $V^D$.

\begin{figure}[htbp]
\centering
\includegraphics[width=0.9\textwidth]{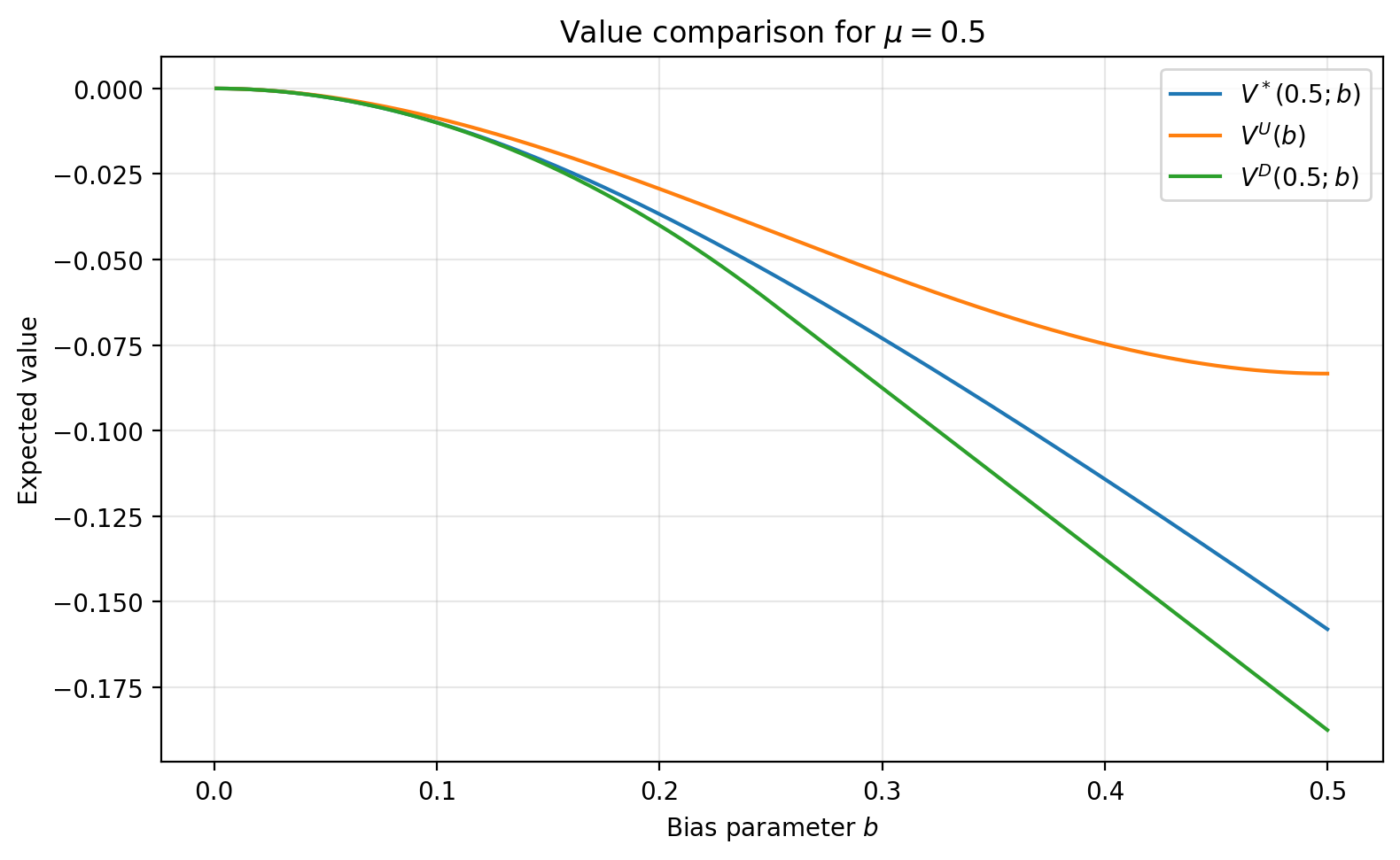}
\caption{Value comparison for $\mu=0.5$: the Bayesian uniform-prior
benchmark $V^U$, the robust random-cap value $V^*$, and the robust
deterministic-cap value $V^D$.}
\label{fig:value-comparison}
\end{figure}

\subsection{The general saddle point and the boundary regime}
\label{subsec:boundary}

The baseline analysis focuses on the interior regime $\mu>b$, in which the
lower endpoint $\mu-b$ of the double-indifference region lies in the state
space. The same saddle-point construction, however, extends to every
$\mu\in[0,1]$.\footnote{For $x\in\mathbb R$, write
$x_+:=\max\{x,0\}$ and $x_-:=\max\{-x,0\}$.}

Let $\pi_\mu^*$ denote the robustly optimal mechanism for any mean. It
retains the random-cap component of the baseline solution and introduces an
additional outcome lottery. More precisely, the mechanism proceeds in three
steps.

\begin{enumerate}
\item[\textbf{Step 1.}] Draw a cap \(Z\sim R_\mu^*\), independently of the
state, where \(R_\mu^*\) is the cdf
\begin{equation}\label{eq:Rstar-all-mu}
R_\mu^*(z)
=
\begin{cases}
0,
&z<(\mu-b)_+,\\[4pt]
\dfrac12 e^{\frac{z-1+b}{b}},
&(\mu-b)_+\leq z<1-b,\\[8pt]
1,
&z\geq1-b.
\end{cases}
\end{equation}

\item[\textbf{Step 2.}] After observing \(Z\), the agent chooses a base action
\(y\in[0,Z]\).

\item[\textbf{Step 3.}] Conditional on \(y\), draw the final action from the
lottery \(A_y^\mu\), defined by\footnote{When \((\mu-b)_-=0\), adopt the
convention \(A_y^\mu=\delta_y\).}
\begin{equation}\label{eq:boundary-A}
A_y^\mu
=
\frac{2y}{2y+(\mu-b)_-}\,\delta_y
+
\frac{(\mu-b)_-}{2y+(\mu-b)_-}\,
\delta_{-\left(y+(\mu-b)_-\right)}.
\end{equation}
\end{enumerate}

This construction has three notable features. First, when $\mu\geq b$, the
negative-part term vanishes. The outcome lottery in
\eqref{eq:boundary-A} becomes degenerate, the lower endpoint of the cap
distribution is $\mu-b$, and $\pi_\mu^*$ reduces exactly to the random-cap
mechanism studied in the baseline analysis.

Second, when $\mu<b$, the additional outcome lottery is essential rather
than merely an alternative implementation of a random cap. At the bottom
state, the agent chooses the base action $y=0$.
\rev{Equation~\eqref{eq:boundary-A} then implies that the final action equals
$\mu-b$ with probability one.} Consequently,
\[
m^{\pi_\mu^*}(0)=\mu-b<0.
\]
Proposition~\ref{prop:imple}, by contrast, requires $m(0)=0$ for every
mechanism that is payoff-equivalent to a random cap. The boundary mechanism
therefore cannot be implemented by a pure random cap.

Third, although the additional lottery changes the implemented action, it
does not change the agent's choice of the base action. It shifts the expected
action downward by $(\mu-b)_-$ while adding to the agent's expected
quadratic loss only a term that is independent of $y$. The agent's ranking
over base actions is therefore unchanged, and she continues to choose
\[
y=\min\{\theta,Z\},
\]
exactly as under an ordinary cap. In this sense, the additional outcome
lottery is incentive neutral.

The least-favorable distribution admits a similarly simple interpretation.
For $\mu\geq b$, it coincides with the baseline shifted exponential-tail
distribution $G^*$. For
$\mu<b$, let $G_b^*$ denote the equalizing distribution obtained by
evaluating the general construction at $\mu=b$. Then
\begin{equation}\label{eq:boundary-G-description}
G_\mu^*
=
\frac{\mu}{b}G_b^*
+
\left(
1-\frac{\mu}{b}
\right)\delta_0.
\end{equation}
Nature therefore retains the same exponential equalizing component but
reduces its weight to $\mu/b$, placing the remaining probability at the
bottom state. This additional atom lowers the mean without disturbing the
conditional-tail equalization that underlies the saddle-point construction.

Appendix~\ref{app:general-saddle} gives the unified formulas and proves
formally that $(\pi_\mu^*,G_\mu^*)$ forms a saddle point for every
$\mu\in[0,1]$. It also shows that pure random caps are strictly suboptimal
when $\mu<b$.

\subsection{The unrestricted-mean benchmark}
\label{subsec:no-mean}

{
The all-means saddle-point construction also allows us to solve the
unrestricted problem in which no mean is specified and nature may choose any
distribution on $[0,1]$.

\begin{corollary}[Unrestricted-mean saddle point]
\label{cor:no-mean}
Define
\[
\mu^\dagger
:=
b\left[
1-\frac12 e^{-\frac{1-b}{b}}
\right]
\in
\left(\frac b2,b\right).
\]
Then $(\pi_{\mu^\dagger}^*,G_{\mu^\dagger}^*)$ is a saddle point on
$\Pi^{IC}\times\Delta([0,1])$: for every $\pi\in\Pi^{IC}$ and
$G\in\Delta([0,1])$,
\begin{equation}
W(\pi,G_{\mu^\dagger}^*)
\leq
W(\pi_{\mu^\dagger}^*,G_{\mu^\dagger}^*)
=
-(\mu^\dagger)^2
\leq
W(\pi_{\mu^\dagger}^*,G).
\label{eq:no-mean-saddle}
\end{equation}
\end{corollary}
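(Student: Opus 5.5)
The plan is to obtain the unrestricted-mean saddle point from the all-means saddle point of Appendix~\ref{app:general-saddle}, applied at the particular mean \(\mu=\mu^\dagger\). The key point is that, at this mean, the affine minorant certifying \(\pi_{\mu^\dagger}^*\) is flat. Once the minorant has slope zero, the mean restriction no longer binds, so the guarantee extends from \(\mathcal G_{\mu^\dagger}\) to all of \(\Delta([0,1])\).

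\emph{Upper inequality.} Fix any \(\pi\in\Pi^{IC}\). Since \(\mu^\dagger\in(b/2,b)\subset[0,1]\), the general saddle-point result of Section~\ref{subsec:boundary} applies at \(\mu=\mu^\dagger\). It gives \(W(\pi,G_{\mu^\dagger}^*)\leq W(\pi_{\mu^\dagger}^*,G_{\mu^\dagger}^*)\) directly, with no further work. The remaining task on this side is to evaluate the saddle value at \(\mu^\dagger\) and show it equals \(-(\mu^\dagger)^2\). I would compute it from the unified value formula in the appendix. Alternatively, I would evaluate \(W(\pi_{\mu^\dagger}^*,G_{\mu^\dagger}^*)\) directly, using the decomposition \eqref{eq:boundary-G-description}: the atom at \(0\) contributes \(-(\mu-b)^2-\tfrac12 e^{-(1-b)/b}\cdot(\text{lottery variance})\) type terms, while the \(G_b^*\) component contributes the equalized payoff.

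\emph{Lower inequality.} This is the substantive step. The appendix proof of the all-means saddle point should, as in Proposition~\ref{prop:Gmin-under-pistar}, produce an affine function \(\ell_\mu(\theta)=\alpha(\mu)+\beta(\mu)\theta\) with \(V^{\pi_\mu^*}\geq\ell_\mu\) on \([0,1]\), with equality on the support of \(G_\mu^*\), which now includes \(0\) when \(\mu<b\). I would extract the slope \(\beta(\mu)\) for \(\mu<b\) explicitly and then show that \(\beta(\mu^\dagger)=0\).

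Heuristically, this is forced by an envelope argument. The slope \(\beta(\mu)\) is the multiplier on the mean constraint, so \(\beta(\mu)=dV^*(\mu)/d\mu\). The unrestricted problem is nature minimizing \(V^*(\mu)\) over \(\mu\), so \(\mu^\dagger\) should be the interior critical point. Concretely, I expect \(\beta(\mu)\) to equal the baseline slope contribution coming from the exponential tail, of order \(b\kappa\) with \(\kappa=e^{-(1-b)/b}\), minus a term linear in \(b-\mu\) generated by the incentive-neutral lottery at the bottom. Setting this expression to zero should yield \(\mu^\dagger=b[1-\tfrac12 e^{-(1-b)/b}]\).

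With \(\beta(\mu^\dagger)=0\), the minorant is the constant \(\ell\equiv\alpha(\mu^\dagger)=\ell(\mu^\dagger)=-(\mu^\dagger)^2\). Then for every \(G\in\Delta([0,1])\), regardless of its mean,
\[
W(\pi_{\mu^\dagger}^*,G)=\int V^{\pi_{\mu^\dagger}^*}\,dG\geq-(\mu^\dagger)^2 .
\]
Combining this with the upper inequality gives \eqref{eq:no-mean-saddle}.

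\emph{Main obstacle.} The hard step is the explicit computation of \(V^{\pi_\mu^*}\) in the boundary regime, where the outcome lottery \(A_y^\mu\) puts mass on the negative action \(-(y+(\mu-b)_-)\). From it one must read off \(\beta(\mu)\) and verify that \(\beta(\mu^\dagger)=0\) exactly.

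To do this, I would first compute the agent-independent pieces: the conditional mean \(m(\theta)=\mathbb E[\min\{\theta,Z\}]-(b-\mu)\), and the added variance term. I would then use Lemma~\ref{lem:EV-formula}-style integration on \([(\mu-b)_+,1-b]\), where \(V^{\pi_\mu^*}\) is affine by the choice of \(R_\mu^*\), to get the slope on that segment. Finally, I would check that the slope matches the chord from \((0,V(0))\).

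As a cross-check, I would confirm \(\mu^\dagger\in(b/2,b)\), which holds since \(0<\tfrac12e^{-(1-b)/b}<\tfrac12\). I would also confirm that global minorization away from the contact set is inherited unchanged from the appendix, so that it need not be reproved.
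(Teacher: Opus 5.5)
Your proposal is correct and is essentially the paper's own argument. Proposition~\ref{prop:general-saddle} at \(\mu=\mu^\dagger\) gives the upper bound with value \(V^*(\mu^\dagger)=-(\mu^\dagger)^2\). The slope \(be^{-(1-b)/b}-2(b-\mu)\) of \(L_\mu\) in \eqref{eq:affine-all-mu} vanishes exactly at \(\mu^\dagger\), and your envelope heuristic \(\beta(\mu)=dV^*/d\mu\) is exactly right. Hence \(L_{\mu^\dagger}\equiv-(\mu^\dagger)^2\), and \eqref{eq:global-minorant-all-mu} gives the lower bound for every \(G\in\Delta([0,1])\) regardless of its mean.

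One small correction to your optional direct computation. At \(\theta=0\) the base action is \(y=0\), so the outcome lottery is degenerate at \(\mu-b\). The atom at zero therefore contributes \((1-\mu/b)(-\mu^2)\), with no lottery-variance term. Your main route via \eqref{eq:value-all-mu} does not need this computation.
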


The unrestricted-mean problem was introduced in ongoing work by
\citet{HeLiLiYe2026}. Corollary~\ref{cor:no-mean} provides an alternative
derivation from our family of fixed-mean problems. Indeed, the collection
\(
\{\mathcal G_\mu:\mu\in[0,1]\}
\)
partitions $\Delta([0,1])$. Proposition~\ref{prop:general-saddle} therefore
implies
\begin{equation}
\inf_{G\in\Delta([0,1])}
\sup_{\pi\in\Pi^{IC}}W(\pi,G)
=
\min_{\mu\in[0,1]}V^*(\mu).
\label{eq:no-mean-min-value}
\end{equation}
Minimizing \eqref{eq:value-all-mu} gives the unique minimizer $\mu^\dagger$
and the minimum value
$V^*(\mu^\dagger)=-(\mu^\dagger)^2$. The fixed-mean saddle inequality then
gives
\[
W(\pi,G_{\mu^\dagger}^*)
\leq
-(\mu^\dagger)^2
\qquad
\text{for every }\pi\in\Pi^{IC}.
\]

For the reverse saddle inequality, equations~\eqref{eq:affine-all-mu} and
\eqref{eq:global-minorant-all-mu} imply\footnote{\rev{At
$\mu=\mu^\dagger<b$, the slope in \eqref{eq:affine-all-mu} is zero, so
$L_{\mu^\dagger}(\theta)=-(\mu^\dagger)^2$ for every $\theta$.}}
\[
V^{\pi_{\mu^\dagger}^*}(\theta)
\geq
-(\mu^\dagger)^2
\qquad
\text{for every }\theta\in[0,1],
\]
and therefore
\[
W(\pi_{\mu^\dagger}^*,G)
\geq
-(\mu^\dagger)^2
\qquad
\text{for every }G\in\Delta([0,1]).
\]
Together, these two bounds prove \eqref{eq:no-mean-saddle} and imply that the
unrestricted max--min and min--max values both equal $-(\mu^\dagger)^2$.
}

\section{Discussion}\label{sec:discussion}

\subsection{A financial interpretation: debt and costly state verification}
\label{subsec:CSV}

The random-cap characterization admits a natural financial-contracting
interpretation. Let \(\theta\) denote realized project cash flow and let
\(m(\theta)\) denote the lender's expected repayment. For a fixed face value
\(d\), the repayment rule
\[
r_d(\theta)=\min\{\theta,d\}
\]
is the repayment schedule of standard debt. When cash flow is below the
face value, the lender receives the available cash flow; when cash flow is
sufficiently high, repayment is capped at the promised amount \(d\). In
classical models of costly state verification, the same schedule is associated
with a default threshold: low cash-flow realizations trigger verification,
whereas sufficiently high realizations are settled through the fixed promised
payment \citep{townsend1979optimal,gale1985incentive}.

A random cap replaces the fixed face value by a random variable \(Z\), drawn
independently of project cash flow. Conditional on a realization of \(Z\), the
contract remains an ordinary debt contract, and realized repayment is
\[
\min\{\theta,Z\}.
\]
The mechanism can therefore be interpreted as a lottery over standard debt
contracts. This need not correspond to a literal coin flip within a single
loan. The same reduced-form repayment schedule can arise from assigning
different face values across otherwise comparable loans, projects, or
borrowers. What matters for the analogy is that the face-value assignment is
independent of realized cash flow.

At the level of expected repayment, Proposition~\ref{prop:imple} shows that
random caps generate schedules of the form
\begin{equation}\label{eq:debt-mixture-representation}
m(\theta)
=
\mathbb E\bigl[\min\{\theta,Z\}\bigr]
=
\int_{[0,1]}\min\{\theta,d\}\,dR(d).
\end{equation}
The familiar restrictions from security design remain present: repayment
cannot exceed available cash flow, and both the lender's claim and the
borrower's residual claim should be nondecreasing in cash flow
\citep{innes1990limited,Gui2025}. Implementability as a lottery over debt
contracts imposes an additional restriction: the expected repayment schedule
must be concave.

This concavity has a simple economic interpretation. The first dollars of
project cash flow are transferred to the lender under almost every positive
face value. As cash flow rises, an additional dollar is transferred only for
those debt contracts whose face values have not yet been reached.
The fraction of contracts for which this is true declines with cash flow.
Consequently, the lender's marginal repayment rate falls as the project
becomes more successful. Conversely, every repayment schedule satisfying the
conditions in Proposition~\ref{prop:imple} can be decomposed into a unique
distribution of face values. Deterministic debt contracts are therefore the
primitive building blocks of the entire class.

This interpretation also clarifies the sense in which the random-cap mechanism
remains simple. Although its expected repayment schedule may be smooth and
nonlinear, implementation requires randomization over only one parameter---the
face value. Each realization of the mechanism is still a standard debt
contract rather than a bespoke nonlinear security. Randomization changes the
location of the repayment threshold while preserving the familiar debt form
conditional on that threshold.

The connection is nevertheless a reduced-form one. Our model contains neither
an endogenous verification decision nor an audit cost, and it does not model
default reporting, enforcement, or the timing of verification. The results
therefore do not establish that lotteries over debt contracts are optimal in a
costly-state-verification environment. They show instead that random caps and
random face values generate the same state-contingent allocation structure.
Providing a literal financial foundation would require specifying why the
face value is randomized and how that randomization interacts with default,
verification, and ex ante investment incentives.

\subsection{Scope and limitations}
\label{subsec:limitations}

Our results rely on quadratic loss and constant bias, which play distinct
roles. Quadratic loss is primarily a tractability assumption: the consequences
of a stochastic mechanism depend only on the conditional mean and variance of
the action. With nonquadratic preferences, higher moments or the entire action
distribution may matter, making both incentive compatibility and the
adversary's problem substantially more complex.

Constant bias is more economically restrictive. The Bayesian delegation
literature allows more general, state-dependent biases while retaining
quadratic loss
\citep{melumad1991communication,kovavc2009stochastic,kleiner2021extreme}.
An affine specification, \(\beta(\theta)=b_0+b_1\theta\), is a natural first
extension because it allows the conflict of interest to vary while preserving
the ordering of ideal actions.

That ordering alone, however, does not preserve random-cap optimality. The
equalization conditions must generate a valid distribution over cap levels and
a global payoff guarantee. With state-dependent bias, the candidate cap
distribution may fail to be monotone or feasible, and a valid local solution
need not extend to a saddle point.

More generally, a single upper bound may be too restrictive. Robust
implementation may require randomizing over both upper and lower bounds, giving
rise to stochastic intervals or richer delegation sets. The random-cap result
should therefore be viewed as a structural benchmark rather than a universal
claim. A natural direction for future work is to characterize when this simple
random-threshold structure survives under more general preferences.

\clearpage
\appendix

\section{Incentive compatibility and random-cap representability}
\label{app:random-cap}

This appendix relates the random-cap characterization in
Proposition~\ref{prop:imple} to the general characterization of stochastic
incentive-compatible mechanisms in \citet{kovavc2009stochastic}.

\subsection{The Kov\'{a}\v{c}--Mylovanov characterization}

The following lemma restates Lemma~1 of
\citet{kovavc2009stochastic} in our notation. Because the action space is
\(A=\mathbb R\), any Borel pair consisting of a mean and a nonnegative
variance can be implemented by a Borel stochastic kernel, for example by a
two-point lottery.

\begin{lemma}[Kov\'{a}\v{c}--Mylovanov characterization]
\label{lem:IC}
Let \(m,q:[0,1]\to\mathbb R\) be Borel measurable, and define
\[
U(\theta):=-[\theta-m(\theta)]^2-q(\theta).
\]
There exists an incentive-compatible direct mechanism with conditional mean
\(m\) and conditional variance \(q\) if and only if:
\begin{enumerate}
    \item[(i)] \(m\) is nondecreasing;
    \item[(ii)] for every \(\theta\in[0,1]\),
    \begin{equation}\label{eq:KM-variance}
    q(\theta)
    =-U(0)-[m(\theta)-\theta]^2
    -2\int_0^\theta[m(t)-t]\,dt;
    \end{equation}
    \item[(iii)] \(q(\theta)\geq0\) for every \(\theta\in[0,1]\).
\end{enumerate}
Equivalently, condition \textnormal{(ii)} is the envelope identity
\begin{equation}\label{eq:KM-envelope}
U(\theta)
=U(0)+2\int_0^\theta[m(t)-t]\,dt.
\end{equation}
\end{lemma}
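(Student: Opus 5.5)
The plan is the standard envelope argument for quadratic preferences. The key device is to add \(\theta^2\) to the agent's payoff, which turns the problem into a statement about a convex function. For any direct mechanism with conditional mean \(m\) and variance \(q\), we have \(U^\pi(\theta,\hat\theta)+\theta^2=2\theta\,m(\hat\theta)-m(\hat\theta)^2-q(\hat\theta)\). This is affine in \(\theta\) for each fixed report \(\hat\theta\). Define \(\Phi(\theta):=U(\theta)+\theta^2\).

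\emph{Necessity.} Assume the mechanism satisfies \eqref{eq:IC}.
\begin{enumerate}
\item Incentive compatibility says exactly that \(\Phi(\theta)=\max_{\hat\theta}\bigl[2\theta m(\hat\theta)-m(\hat\theta)^2-q(\hat\theta)\bigr]\), with the maximum attained at \(\hat\theta=\theta\). Hence \(\Phi\) is a pointwise maximum of affine functions, and it is finite because every lottery has a finite second moment. So \(\Phi\) is convex on \([0,1]\).
\item The affine function indexed by \(\hat\theta\) touches \(\Phi\) at \(\hat\theta\). Therefore \(2m(\theta)\) is a subgradient of \(\Phi\) at \(\theta\), for every \(\theta\in[0,1]\).
\item Subgradient selections of a convex function are nondecreasing, which gives (i). Concretely, adding the two IC inequalities between \(\theta\) and \(\hat\theta\) yields \((\theta-\hat\theta)\bigl(m(\theta)-m(\hat\theta)\bigr)\ge0\).
\item Since \(m\) is finite at both endpoints and monotone, it is bounded on \([0,1]\). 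Hence \(\Phi\) is Lipschitz, thus absolutely continuous, with derivative \(2m\) almost everywhere.
\item It follows that \(\Phi(\theta)=\Phi(0)+2\int_0^\theta m(t)\,dt\). Subtracting \(\theta^2=2\int_0^\theta t\,dt\) gives the envelope identity \eqref{eq:KM-envelope}.
\item Condition (ii) follows by substituting \(U(\theta)=-[\theta-m(\theta)]^2-q(\theta)\) into \eqref{eq:KM-envelope} and solving for \(q\).
\item Condition (iii) holds because \(q\) is a variance.
\end{enumerate}

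\emph{Sufficiency.} Take Borel \(m,q\) satisfying (i)–(iii).
\begin{enumerate}
\item Define the kernel that puts probability \(1/2\) on each of \(m(\theta)\pm\sqrt{q(\theta)}\). This is well defined by (iii), Borel measurable, and has mean \(m\), variance \(q\), and finite second moment.
\item Since \(U(\theta)\) is defined as \(-[\theta-m(\theta)]^2-q(\theta)\), condition (ii) is algebraically equivalent to \eqref{eq:KM-envelope}. Hence \(\Phi(\theta)=\Phi(0)+2\int_0^\theta m\).
\item Using \(-m(\hat\theta)^2-q(\hat\theta)=\Phi(\hat\theta)-2\hat\theta m(\hat\theta)\), incentive compatibility is equivalent to
\[
\Phi(\theta)-\Phi(\hat\theta)\ge2(\theta-\hat\theta)\,m(\hat\theta).
\]
\item By step 2 this reads \(2\int_{\hat\theta}^{\theta}\bigl[m(t)-m(\hat\theta)\bigr]\,dt\ge0\). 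This holds by monotonicity of \(m\) whether \(\theta>\hat\theta\) or \(\theta<\hat\theta\): in the second case both the integrand and the orientation change sign.
\end{enumerate}

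The only delicate step is the regularity in necessity. We must justify that \(\Phi\) is absolutely continuous up to the endpoints and that the derivative identity holds without assuming differentiability of \(m\) or \(U\). The plan handles this through convexity of \(\Phi\) and the boundedness of the monotone selection \(m\), rather than through a direct differentiation argument. Everything else is algebra with the mean–variance form of quadratic payoffs.
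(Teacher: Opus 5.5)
Your proof is correct and matches the paper's route: the paper does not reprove this lemma but cites Lemma~1 of \citet{kovavc2009stochastic}, which rests on the same envelope-and-monotonicity argument you give (your convex potential \(\Phi=U+\theta^2\), whose subgradient inequality holds at every point including the endpoints, is a standard way to obtain absolute continuity and \(\Phi'=2m\) almost everywhere). For sufficiency, the paper notes only that, because \(A=\mathbb R\), a Borel two-point lottery implements any Borel pair \((m,q)\) with \(q\geq0\); that is exactly your \(m\pm\sqrt q\) kernel, so your write-up simply makes self-contained what the paper delegates to the citation.
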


When one starts from a direct mechanism rather than from an arbitrary pair
\((m,q)\), condition \textnormal{(iii)} is automatic. Its explicit appearance
in Lemma~\ref{lem:IC} records the feasibility restriction that a conditional
variance must be nonnegative.

\subsection{Proof of Proposition~\ref{prop:imple}}
\label{app:random-cap-proof}

\begin{proof}[Proof of Proposition~\ref{prop:imple}]
For necessity, suppose that \(Z\) is supported on \([0,1]\), and let
\(A_\theta:=\min\{\theta,Z\}\). For every fixed \(z\), the function
\(\theta\mapsto\min\{\theta,z\}\) is nondecreasing and concave, vanishes at
zero, and is bounded above by \(\theta\). Taking expectations proves
conditions \textnormal{(a)} and \textnormal{(b)}.

Let \(S_Z(t):=\Pr(Z>t)\). The layer-cake representation gives
\begin{equation}\label{eq:random-cap-first-moment}
m(\theta)
=\mathbb E[A_\theta]
=\int_0^\theta S_Z(t)\,dt
\end{equation}
and
\begin{align*}
\mathbb E[A_\theta^2]
&=2\int_0^\theta tS_Z(t)\,dt\\
&=2\theta m(\theta)-2\int_0^\theta m(t)\,dt,
\end{align*}
where the second equality follows from integration by parts. Subtracting
\(m(\theta)^2\) yields condition \textnormal{(c)}.

For sufficiency, suppose conditions \textnormal{(a)}--\textnormal{(c)} hold.
Concavity implies that the right derivative \(m'_+\) exists on \([0,1)\), is
nonincreasing and right-continuous, and satisfies
\[
0\leq m'_+(z)\leq m'_+(0)
=\lim_{\theta\downarrow0}\frac{m(\theta)}{\theta}\leq1.
\]
The first inequality follows from monotonicity, and the last follows from
\(m(\theta)\leq\theta\). Therefore,
\[
R(z):=
\begin{cases}
0, & z<0,\\[3pt]
1-m'_+(z), & 0\leq z<1,\\[3pt]
1, & z\geq1
\end{cases}
\]
is a cdf supported on \([0,1]\). Let \(Z\sim R\). Then
\(\Pr(Z>t)=m'_+(t)\) for every \(t\in[0,1)\), so absolute continuity of
\(m\) and condition \textnormal{(a)} imply
\[
\mathbb E[\min\{\theta,Z\}]
=\int_0^\theta m'_+(t)\,dt
=m(\theta).
\]
Moreover,
\begin{align*}
\operatorname{Var}(\min\{\theta,Z\})
&=2\int_0^\theta tm'_+(t)\,dt-m(\theta)^2\\
&=2\theta m(\theta)-2\int_0^\theta m(t)\,dt-m(\theta)^2\\
&=q(\theta),
\end{align*}
where the last equality is condition \textnormal{(c)}. Thus, the random cap
reproduces \((m,q)\).

Finally, \eqref{eq:random-cap-first-moment} implies
\[
1-R(z)=m'_+(z)
\qquad
\text{for every }z\in[0,1).
\]
Hence \(m\) uniquely determines \(R\), including its possible atoms at zero
and one.
\end{proof}

\subsection{Proof of Lemma~\ref{lem:EV-formula}}
\label{app:payoff-decomposition-proof}

\begin{proof}[Proof of Lemma~\ref{lem:EV-formula}]
Incentive compatibility gives the envelope identity
\[
\frac{d}{d\theta}U^\pi(\theta)
=2\bigl(m^\pi(\theta)-\theta\bigr)
\qquad
\text{for almost every }\theta\in[0,1].
\]
Moreover, quadratic preferences imply
\begin{equation}\label{eq:V-U-identity}
V^\pi(\theta)
=
U^\pi(\theta)
+2b\bigl(\theta-m^\pi(\theta)\bigr)-b^2.
\end{equation}

Because \(G\) is supported on \([x_0,1]\), the envelope identity and
Fubini's theorem yield
\[
\mathbb E_G[U^\pi(\theta)]
=
U^\pi(x_0)
+2\int_{x_0}^1
\bigl(m^\pi(x)-x\bigr)[1-G(x)]\,dx.
\]
Substituting this expression into \eqref{eq:V-U-identity} gives
\begin{align*}
W(\pi,G)
={}&U^\pi(x_0)
+2\int_{x_0}^1m^\pi(x)[1-G(x)]\,dx
-2b\int_{[x_0,1]}m^\pi(x)\,dG(x)\\
&\quad
-2\int_{x_0}^1x[1-G(x)]\,dx
+2b\,\mathbb E_G[\theta]-b^2.
\end{align*}
The support restriction also implies
\[
\mathbb E_G[\theta^2]
=
x_0^2
+2\int_{x_0}^1x[1-G(x)]\,dx.
\]
Therefore,
\begin{align*}
W(\pi,G)
={}&U^\pi(x_0)+x_0^2
+2\int_{[x_0,1]}m^\pi(x)
\bigl([1-G(x)]\,dx-b\,dG(x)\bigr)\\
&\quad
-\mathbb E_G[\theta^2]
+2b\,\mathbb E_G[\theta]-b^2\\
={}&U^\pi(x_0)+x_0^2
+2\int_{[x_0,1]}m^\pi(x)\,\nu_G(dx)
-\mathbb E_G[(\theta-b)^2],
\end{align*}
which proves \eqref{eq:EV-general-x0}.
\end{proof}

\subsection{Implications and the incentive-compatible case}

\begin{remark}[Implied properties]\label{rem:implied}
Conditions \textnormal{(a)}--\textnormal{(c)} of
Proposition~\ref{prop:imple} imply \(q\geq0\). To see this, define
\[
\varphi(\theta)
:=2\theta m(\theta)-2\int_0^\theta m(t)\,dt-m(\theta)^2.
\]
Conditions \textnormal{(a)} and \textnormal{(b)} imply that \(m\) is
\(1\)-Lipschitz and hence absolutely continuous. Moreover,
\(\varphi(0)=0\) and
\[
\varphi'(\theta)
=2m'(\theta)[\theta-m(\theta)]\geq0
\qquad
\text{for almost every }\theta,
\]
because \(m'\geq0\) almost everywhere and \(m(\theta)\leq\theta\). Thus,
condition \textnormal{(c)} gives \(q(\theta)=\varphi(\theta)\geq0\).

In addition, given \(m(0)=0\) and concavity, the restriction
\(m(\theta)\leq\theta\) for every \(\theta\) is equivalent to
\(m'_+(0)\leq1\). Hence this limited-liability-type slope restriction need
only be checked at the bottom of the state space.
\end{remark}

\begin{corollary}[Incentive-compatible mechanisms]
\label{cor:imple-IC}
Let \(\pi\in\Pi^{IC}\). Then \(\pi\) is payoff-equivalent to a random-cap
mechanism if and only if \(m^\pi\) is concave and \(U^\pi(0)=0\).
Equivalently, relative to Lemma~\ref{lem:IC}, random-cap representability adds
exactly two restrictions: concavity of the expected action and the boundary
normalization \(U^\pi(0)=0\).
\end{corollary}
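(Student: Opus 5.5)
We prove Corollary~\ref{cor:imple-IC} by combining Lemma~\ref{lem:IC} with Proposition~\ref{prop:imple}. Payoff equivalence means that a random cap reproduces \((m^\pi,q^\pi)\) at every state. By Proposition~\ref{prop:imple}, this holds if and only if \(m^\pi\) satisfies conditions (a)--(b) and \(q^\pi\) satisfies the variance identity (c). So the task is to show that, for \(\pi\in\Pi^{IC}\), conditions (a)--(c) together are equivalent to concavity of \(m^\pi\) plus \(U^\pi(0)=0\).

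\emph{Necessity.} Suppose \(\pi\) is payoff-equivalent to a random cap. Concavity of \(m^\pi\) is part of (b). At \(\theta=0\) we have \(\min\{0,Z\}=0\) almost surely, so \(m^\pi(0)=0\) and \(q^\pi(0)=0\). Hence \(U^\pi(0)=-(0-m^\pi(0))^2-q^\pi(0)=0\).

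\emph{Sufficiency.} Assume \(m^\pi\) is concave and \(U^\pi(0)=0\). Since \(U^\pi(0)=-m^\pi(0)^2-q^\pi(0)\) and \(q^\pi(0)\geq0\), we get \(m^\pi(0)=0\) and \(q^\pi(0)=0\), which is (a). Lemma~\ref{lem:IC}(i) gives that \(m^\pi\) is nondecreasing. The remaining part of (b) is \(m^\pi(\theta)\leq\theta\). For (c), substitute \(U(0)=0\) into \eqref{eq:KM-variance}:
\[
q^\pi(\theta)=-[m^\pi(\theta)-\theta]^2-2\int_0^\theta m^\pi(t)\,dt+\theta^2 .
\]
Expanding the square gives \(2\theta m^\pi(\theta)-m^\pi(\theta)^2-2\int_0^\theta m^\pi(t)\,dt\), which is exactly (c). This step uses only the envelope identity.

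\emph{The main obstacle: showing \(m^\pi(\theta)\leq\theta\).} We get this from nonnegativity of the variance, Lemma~\ref{lem:IC}(iii), applied to the expression for \(q^\pi\) just derived. Suppose instead that \(m^\pi(\theta_0)>\theta_0\) for some \(\theta_0\). Let \(s:=m'_+(0)\); concavity and \(m^\pi(0)=0\) give \(s>1\). Since \(m^\pi\) is continuous on the interior with \(m^\pi(0)=0\), there is \(\varepsilon>0\) with \(m^\pi(t)>t\) on \((0,\varepsilon]\). On this interval the differential form of the variance identity gives
\[
\frac{d}{d\theta}q^\pi(\theta)=2\,(m^\pi)'(\theta)\,\bigl[\theta-m^\pi(\theta)\bigr]\quad\text{a.e.},
\]
as computed in Remark~\ref{rem:implied}.

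This derivative is at most zero, and it is strictly negative on a set of positive measure. If it were zero almost everywhere on \((0,\varepsilon]\), then \((m^\pi)'=0\) almost everywhere there. Because \(m^\pi\) is concave, hence locally absolutely continuous on the interior, and continuous at \(0\), this would force \(m^\pi\equiv0\) on \([0,\varepsilon]\), contradicting \(m^\pi(t)>t\geq0\). Therefore \(q^\pi(\varepsilon)<q^\pi(0)=0\), contradicting Lemma~\ref{lem:IC}(iii). So \(m^\pi(\theta)\leq\theta\) for all \(\theta\), completing (b).

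Proposition~\ref{prop:imple} now yields a random cap that reproduces \((m^\pi,q^\pi)\). By quadratic payoffs this is state-by-state payoff equivalence.

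The delicate points are endpoint continuity of \(m^\pi\) at \(0\) and the almost-everywhere differentiation. Continuity at \(0\) follows from monotonicity together with \(m^\pi(0)=0\) and concavity: a concave nondecreasing function can only jump downward at the left endpoint, but monotonicity forbids a downward jump, so \(m^\pi\) is continuous there. The "equivalently" clause is then a restatement: relative to conditions (i)--(iii) of Lemma~\ref{lem:IC}, the two added restrictions are exactly concavity and \(U^\pi(0)=0\).
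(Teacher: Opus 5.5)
Your necessity argument and your derivation of (c) from the envelope identity with \(U^\pi(0)=0\) are correct and match the paper. The gap is in the step proving \(m^\pi(\theta)\leq\theta\), specifically your justification of continuity of \(m^\pi\) at \(0\). The claim that monotonicity rules out an endpoint jump is false. A concave function on \([0,1]\) can be discontinuous at \(0\) only by having \(m(0)<\lim_{t\downarrow0}m(t)\), and a nondecreasing function permits exactly this. For example, \(m(0)=0\) and \(m(t)=1/2\) for \(t>0\) is concave and nondecreasing. Two of your steps lean on this claim and are therefore unsupported as written. First, \((m^\pi)'=0\) a.e. on \((0,\varepsilon]\) only forces \(m^\pi\equiv m^\pi(0+)\) there, which could be positive, so no contradiction follows. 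Second, concluding \(q^\pi(\varepsilon)<q^\pi(0)\) by integrating \(q'\) from \(0\) requires \(q^\pi\) to be absolutely continuous on \([0,\varepsilon]\).

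The conclusion survives, but continuity at \(0\) comes from nonnegativity of the variance, not from monotonicity. In your formula \(q^\pi(\theta)=2\theta m^\pi(\theta)-2\int_0^\theta m^\pi(t)\,dt-m^\pi(\theta)^2\), letting \(\theta\downarrow0\) gives \(q^\pi(\theta)\to-m^\pi(0+)^2\). Lemma~\ref{lem:IC}(iii) then forces \(m^\pi(0+)=0\), and with this repair your contradiction argument goes through. The paper avoids the issue entirely. Concavity and \(m^\pi(0)=0\) give the chord bound \(m^\pi(t)\geq(t/\theta)\,m^\pi(\theta)\) on \([0,\theta]\), hence \(\int_0^\theta m^\pi(t)\,dt\geq\theta m^\pi(\theta)/2\). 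Substituting this into (c) yields \(0\leq q^\pi(\theta)\leq m^\pi(\theta)\bigl[\theta-m^\pi(\theta)\bigr]\). Since \(m^\pi(\theta)\geq0\), this gives \(m^\pi(\theta)\leq\theta\) pointwise, with no differentiation, no contradiction argument, and no endpoint regularity needed.
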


\begin{proof}
If \(\pi\) is payoff-equivalent to a random cap, Proposition~\ref{prop:imple}
implies that \(m^\pi\) is concave. At \(\theta=0\), every random cap induces
the action zero with probability one, so \(U^\pi(0)=0\).

Conversely, suppose that \(\pi\in\Pi^{IC}\), \(m^\pi\) is concave, and
\(U^\pi(0)=0\). Because
\[
U^\pi(0)=-m^\pi(0)^2-q^\pi(0),
\]
and \(q^\pi(0)\geq0\), we have
\begin{equation}\label{eq:IC-random-cap-origin}
m^\pi(0)=q^\pi(0)=0.
\end{equation}
Lemma~\ref{lem:IC} implies that \(m^\pi\) is nondecreasing and that
\eqref{eq:KM-envelope} holds. Combining the envelope identity with
\(U^\pi(0)=0\) and
\[
U^\pi(\theta)
=-[\theta-m^\pi(\theta)]^2-q^\pi(\theta)
\]
gives
\begin{equation}\label{eq:IC-random-cap-variance}
q^\pi(\theta)
=2\theta m^\pi(\theta)
-2\int_0^\theta m^\pi(t)\,dt
-m^\pi(\theta)^2.
\end{equation}

It remains only to verify \(m^\pi(\theta)\leq\theta\). Concavity and
\eqref{eq:IC-random-cap-origin} imply
\[
\int_0^\theta m^\pi(t)\,dt
\geq\frac{\theta m^\pi(\theta)}{2}.
\]
Using this inequality in \eqref{eq:IC-random-cap-variance} yields
\[
0\leq q^\pi(\theta)
\leq m^\pi(\theta)[\theta-m^\pi(\theta)].
\]
Monotonicity and \(m^\pi(0)=0\) imply \(m^\pi(\theta)\geq0\), and therefore
\(m^\pi(\theta)\leq\theta\). All the conditions of
Proposition~\ref{prop:imple} now hold, so \(\pi\) is payoff-equivalent to a
random-cap mechanism.
\end{proof}
\section{Proofs for the MPC extension}
\label{app:MPC-proofs}

Throughout this appendix, write
\begin{align*}
S_F(x)&:=1-F(x),
&C_F(x)&:=\int_x^1 S_F(t)\,dt,
&e_F(x)&:=\frac{C_F(x)}{S_F(x)}.
\end{align*}
Thus, \(C_F\) is the call function of \(F\), and \(e_F\) is its mean
residual life.

\begin{proof}[Proof of \cref{lem:MPC-feasibility}]
We divide the argument into two parts. The first extracts the two
reduced-form implications of the increasing-hazard-rate assumption that are
used in the construction. The second derives the result from those
properties alone.

\paragraph{Part 1: reduced-form implications of an increasing hazard rate}
Define
\[
Q_b(x):=e^{\frac{x}{b}}C_F(x).
\]
We first show that \(Q_b\) is strictly single-peaked. An increasing hazard
rate implies decreasing mean residual life. Indeed,
\[
e_F(x)
=\int_x^1
e^{-\int_x^t h_F(s)\,ds}dt
<\frac{1}{h_F(x)},
\]
and hence
\[
e_F'(x)=h_F(x)e_F(x)-1<0.
\]
Moreover,
\begin{equation}\label{eq:Qb-derivative}
Q_b'(x)
=\frac{e^{\frac{x}{b}}S_F(x)}{b}\,[e_F(x)-b].
\end{equation}
Because \(e_F(0)=\mu>b\) and \(e_F(x)\to0\) as \(x\uparrow1\), the
function \(Q_b\) first increases and then decreases.

The same argument yields the contact condition used below. Let
\[
\Gamma_b(x):=x+e_F(x)+b\log S_F(x).
\]
Using \(S_F'(x)=-h_F(x)S_F(x)\) and
\(e_F'(x)=h_F(x)e_F(x)-1\), we obtain
\begin{equation}\label{eq:Gamma-derivative}
\Gamma_b'(x)=h_F(x)[e_F(x)-b].
\end{equation}
Thus, \(\Gamma_b\) has the same increasing and decreasing regions as
\(Q_b\). Since
\[
\Gamma_b(0)=\mu,
\qquad
\lim_{x\uparrow1}\Gamma_b(x)=-\infty,
\]
there is a unique \(\tau_F\in(0,1)\) on the decreasing branch such that
\begin{equation}\label{eq:MPC-contact-condition}
\tau_F+e_F(\tau_F)+b\log S_F(\tau_F)=\mu,
\qquad
e_F(\tau_F)<b.
\end{equation}
The second inequality is the transversality, or marginal, condition. Hence
Assumption~\ref{ass:regu} implies the following two reduced-form properties:
\(Q_b\) is strictly single-peaked, and a cutoff \(\tau_F\) satisfying
\eqref{eq:MPC-contact-condition} exists uniquely.

\paragraph{Part 2: construction from single-peakedness and the marginal
condition}
The remainder uses only these two reduced-form properties. Define
\begin{equation}\label{eq:zbar-from-tau}
\bar z_F
:=\mu-b-b\log S_F(\tau_F).
\end{equation}
The equality in
\eqref{eq:MPC-contact-condition} also gives
\[
\bar z_F=\tau_F+e_F(\tau_F)-b.
\]
Equivalently,
\begin{equation}\label{eq:MPC-root-equivalent}
\tau_F-\bar z_F+e_F(\tau_F)=b.
\end{equation}
It follows that
\[
\mu-b<\bar z_F<\tau_F<1,
\]
and
\begin{equation}\label{eq:MPC-gap-bounds}
0<\tau_F-\bar z_F=b-e_F(\tau_F)<b.
\end{equation}
Moreover,
\[
S_F(\tau_F)
=e^{-\frac{\bar z_F-(\mu-b)}{b}}.
\]
Since \(S_F\) is strictly decreasing, this \(\tau_F\) coincides with the
cutoff defined in \eqref{eq:tau-def}.

We next verify the root condition. The preceding identities imply
\begin{align*}
H_F(\bar z_F)+b
&=\int_{\bar z_F}^{\tau_F}1\,dx
 +\frac{1}{S_F(\tau_F)}
  \int_{\tau_F}^1S_F(x)\,dx\\
&=\tau_F-\bar z_F+e_F(\tau_F)
=b.
\end{align*}
Thus, \(H_F(\bar z_F)=0\).

For completeness, this root is unique. For each \(z\in[\mu-b,1]\), let
\(y(z)\) be defined by
\[
S_F(y(z))
=e^{-\frac{z-(\mu-b)}{b}}
\]
and write \(J(z):=H_F(z)+b\). Differentiating separately according as
\(y(z)\leq z\) or \(y(z)>z\) gives
\[
\operatorname{sgn}J'(z)
=
\operatorname{sgn}
\left(e_F\bigl(\max\{z,y(z)\}\bigr)-b\right).
\]
By \eqref{eq:Qb-derivative} and the strict single-peakedness of \(Q_b\),
the expression on the right changes sign at most once, from positive to
negative. Hence \(J\) is strictly single-peaked. Finally,
\[
J(\mu-b)=C_F(\mu-b)
=b+\int_0^{\mu-b}F(t)\,dt>b,
\qquad
J(1)=0,
\]
so \(J(z)=b\), and therefore \(H_F(z)=0\), has exactly one solution.

The pieces in \eqref{eq:GF-def} now join to form a valid survival function.
The root condition also yields
\begin{equation}\label{eq:MPC-tail-mean}
\int_{\bar z_F}^1
\min\{S_F(\tau_F),S_F(x)\}\,dx
=bS_F(\tau_F).
\end{equation}
Consequently,
\begin{align*}
\mathbb E_{G^F}[\theta]
&=\mu-b
+\int_{\mu-b}^{\bar z_F}e^{-\frac{x-(\mu-b)}{b}}\,dx
+\int_{\bar z_F}^1
\min\{S_F(\tau_F),S_F(x)\}\,dx\\
&=\mu-b+b[1-S_F(\tau_F)]+bS_F(\tau_F)
=\mu.
\end{align*}

It remains to establish convex-order feasibility. For any distribution \(K\)
on \([0,1]\), let
\[
C_K(t):=\int_t^1[1-K(x)]\,dx,
\qquad
\Delta(t):=C_F(t)-C_{G^F}(t).
\]
For \(t\geq\tau_F\), the two call functions coincide. For
\(t\in[\bar z_F,\tau_F]\),
\[
\Delta(t)
=\int_t^{\tau_F}[S_F(x)-S_F(\tau_F)]\,dx
\geq0.
\]
For \(t\in[\mu-b,\bar z_F]\), \eqref{eq:MPC-tail-mean} gives
\[
C_{G^F}(t)
=be^{-\frac{t-(\mu-b)}{b}}.
\]
Therefore, \(C_{G^F}(t)\leq C_F(t)\) is equivalent to
\[
Q_b(t)\geq b e^{\frac{\mu-b}{b}}.
\]
This inequality holds at both endpoints:
\[
Q_b(\mu-b)
=e^{\frac{\mu-b}{b}}C_F(\mu-b)>b e^{\frac{\mu-b}{b}},
\]
and
\[
Q_b(\bar z_F)-b e^{\frac{\mu-b}{b}}
=e^{\frac{\bar z_F}{b}}
\int_{\bar z_F}^{\tau_F}
[S_F(x)-S_F(\tau_F)]\,dx>0.
\]
Because \(Q_b\) is single-peaked, its minimum on
\([\mu-b,\bar z_F]\) is attained at an endpoint. Hence the inequality holds
throughout this interval.

Finally, equal means imply \(\Delta(0)=0\), while for
\(t\in[0,\mu-b]\),
\[
\Delta'(t)=1-S_F(t)=F(t)\geq0.
\]
Thus, \(C_{G^F}(t)\leq C_F(t)\) for every \(t\in[0,1]\). The call-function
characterization of convex order yields
\(G^F\preceq_{\mathrm{cx}}F\).
\end{proof}

\begin{proof}[Proof of \cref{thm:MPC-saddle}]
Set
\[
\kappa_F
:=1-F(\tau_F)
=e^{-\frac{\bar z_F-(\mu-b)}{b}}.
\]
By \eqref{eq:MPC-cutoff-properties}, \(\eta_F\in(1/2,1)\). Hence
\(R^F\) is a valid cdf supported on \([\mu-b,\bar z_F]\), and the induced
mechanism \(\pi^F\) is incentive compatible.

We first show that \(G^F\) bounds the payoff of every
incentive-compatible mechanism from above. Define
\[
\nu_{G^F}(dx)
:=[1-G^F(x)]\,dx-b\,dG^F(x).
\]
Since \(\operatorname{supp}(G^F)\subseteq[\mu-b,1]\),
Lemma~\ref{lem:EV-formula}, applied with \(G=G^F\), gives
\begin{equation}\label{eq:MPC-payoff-decomposition}
\begin{aligned}
W(\pi,G^F)
={}&U^\pi(\mu-b)+(\mu-b)^2
+2\int_{[\mu-b,1]}m^\pi(x)\,\nu_{G^F}(dx)\\
&\quad-\mathbb E_{G^F}[(\theta-b)^2].
\end{aligned}
\end{equation}

The two components of \(\nu_{G^F}\) cancel on
\([\mu-b,\bar z_F]\). For \(x\in[\bar z_F,\tau_F]\),
\eqref{eq:MPC-root-equivalent} gives
\begin{align*}
\nu_{G^F}([x,1])
&=\kappa_F(\tau_F-x)+C_F(\tau_F)-b\kappa_F\\
&=\kappa_F(\bar z_F-x)
\leq0.
\end{align*}
For \(x\in[\tau_F,1]\),
\[
\nu_{G^F}([x,1])
=C_F(x)-b[1-F(x)]
=[1-F(x)][e_F(x)-b]
\leq0,
\]
where the inequality follows because \(Q_b\) is nonincreasing on
\([\tau_F,1]\); see \eqref{eq:Qb-derivative}. Thus,
\[
\nu_{G^F}([x,1])\leq0
\qquad
\text{for every }x\in[\mu-b,1],
\]
with equality at \(x=\mu-b\).

Incentive compatibility implies that \(m^\pi\) is nondecreasing.
Stieltjes integration by parts therefore yields
\[
\int_{[\mu-b,1]}m^\pi(x)\,\nu_{G^F}(dx)
=
\int_{(\mu-b,1]}
\nu_{G^F}([x,1])\,dm^\pi(x)
\leq0.
\]
Moreover, \(U^\pi(\mu-b)\leq0\). It follows from
\eqref{eq:MPC-payoff-decomposition} and
\(\mathbb E_{G^F}[\theta]=\mu=(\mu-b)+b\) that
\begin{align}
W(\pi,G^F)
&\leq
(\mu-b)^2-\mathbb E_{G^F}[(\theta-b)^2]
\nonumber\\
&=-\operatorname{Var}_{G^F}(\theta).
\label{eq:MPC-upper-bound}
\end{align}
Under \(\pi^F\), the action at \(\mu-b\) equals \(\mu-b\), so
\(U^{\pi^F}(\mu-b)=0\). Moreover, \(m^{\pi^F}\) is constant on
\([\bar z_F,1]\), because every cap lies weakly below \(\bar z_F\).
Hence the upper bound is attained:
\begin{equation}\label{eq:MPC-upper-value}
W(\pi^F,G^F)
=-\operatorname{Var}_{G^F}(\theta)
\geq W(\pi,G^F)
\qquad
\forall\,\pi\in\Pi^{IC}.
\end{equation}

We next show that \(G^F\) is a worst-case distribution against \(\pi^F\).
Define the affine function
\[
L_F(\theta)
:=
-b^2
+\kappa_F[b+e_F(\tau_F)]
(\theta-(\mu-b)).
\]
A direct calculation gives
\begin{equation}\label{eq:mpc-Vstar}
V^{\pi^F}(\theta)
=
\begin{cases}
-b^2,
&0\leq\theta\leq\mu-b,\\[4pt]
L_F(\theta),
&\mu-b\leq\theta\leq\bar z_F,\\[4pt]
L_F(\theta)
+(\theta-\bar z_F)(\tau_F-\theta),
&\bar z_F\leq\theta\leq1.
\end{cases}
\end{equation}
Indeed, the slope on \([\mu-b,\bar z_F]\) is
\[
2b\eta_F\kappa_F
=
\kappa_F[2b-(\tau_F-\bar z_F)]
=
\kappa_F[b+e_F(\tau_F)],
\]
where the last equality follows from
\eqref{eq:MPC-root-equivalent}.

Define
\begin{equation}\label{eq:mpc-psi}
\psi_F(\theta)
:=
\begin{cases}
L_F(\theta),
&0\leq\theta\leq\tau_F,\\[4pt]
V^{\pi^F}(\theta),
&\tau_F\leq\theta\leq1.
\end{cases}
\end{equation}
Equivalently,
\[
\psi_F(\theta)
=
L_F(\theta)
-(\theta-\bar z_F)(\theta-\tau_F)_+.
\]
Because
\[
(\theta-\bar z_F)(\theta-\tau_F)_+
\]
is convex, \(\psi_F\) is concave. Moreover,
\eqref{eq:mpc-Vstar} implies
\[
V^{\pi^F}(\theta)-\psi_F(\theta)
=
\begin{cases}
\kappa_F[b+e_F(\tau_F)](\mu-b-\theta),
&0\leq\theta<\mu-b,\\[4pt]
0,
&\mu-b\leq\theta\leq\bar z_F,\\[4pt]
(\theta-\bar z_F)(\tau_F-\theta),
&\bar z_F<\theta<\tau_F,\\[4pt]
0,
&\tau_F\leq\theta\leq1.
\end{cases}
\]
Thus, \(\psi_F\) is a concave minorant of \(V^{\pi^F}\).

For every \(G\preceq_{\mathrm{cx}}F\), concavity of \(\psi_F\) gives
\begin{equation}\label{eq:MPC-lower-bound}
\begin{aligned}
W(\pi^F,G)
&=\mathbb E_G[V^{\pi^F}(\theta)]\\
&\geq\mathbb E_G[\psi_F(\theta)]\\
&\geq\mathbb E_F[\psi_F(\theta)].
\end{aligned}
\end{equation}
It remains to identify the last expectation. Since
\[
\operatorname{supp}(G^F)
\subseteq
[\mu-b,\bar z_F]\cup[\tau_F,1],
\]
we have \(V^{\pi^F}=\psi_F\), \(G^F\)-almost surely. Furthermore, write
\[
\psi_F(\theta)=L_F(\theta)-q_F(\theta),
\qquad
q_F(\theta)
:=(\theta-\bar z_F)(\theta-\tau_F)_+.
\]
The distributions \(F\) and \(G^F\) have the same mean, so the affine
function \(L_F\) has the same expectation under both distributions.
Moreover, \(q_F\) vanishes below \(\tau_F\), while \(G^F\) coincides with
\(F\) on \([\tau_F,1]\). Therefore,
\[
\mathbb E_F[\psi_F(\theta)]
=
\mathbb E_{G^F}[\psi_F(\theta)]
=
W(\pi^F,G^F)
=
-\operatorname{Var}_{G^F}(\theta).
\]
Combining this equality with \eqref{eq:MPC-lower-bound} yields
\[
W(\pi^F,G)
\geq W(\pi^F,G^F)
\qquad
\forall\,G\preceq_{\mathrm{cx}}F.
\]
Together with \eqref{eq:MPC-upper-value}, this proves
\eqref{eq:MPC-saddle-ineq} and \eqref{eq:MPC-value}.
\end{proof}

\subsection{Proof of the signal-design theorem}
\label{app:proof-MPC-signal}

\begin{proof}[Proof of \cref{thm:MPC-signal}]
{
We first bound the unrestricted signal-design problem. Let
\(\widetilde\pi\in\widetilde\Pi^{IC}\) be arbitrary. Nature can choose the
fully uninformative signal \(\tau^{\mathrm{FP}}=\delta_F\). Its unique posterior
has mean \(\mu\) and variance \(\operatorname{Var}_F(\theta)\), so the
principal's payoff is
\[
-\operatorname{Var}_F(\theta)
-\mathbb E_{a\sim\widetilde\pi(\cdot\mid F)}[(\mu-b-a)^2]
\leq-\operatorname{Var}_F(\theta).
\]
This bounds the unrestricted robust value from above.

The report-independent rule \(a=\mu-b\) is incentive compatible in the
unrestricted posterior-report problem. Against any Bayes-plausible \(\tau\),
it gives the principal
\[
\mathbb E_\tau\!\left[-\sigma_s^2-(x_s-\mu)^2\right]
=-\operatorname{Var}_F(\theta),
\]
where the equality follows from the law of total variance and
\(\mathbb E_\tau[x_s]=\mu\). The unrestricted signal-design problem therefore
has value \(-\operatorname{Var}_F(\theta)\).

It remains to compute the value of \eqref{eq:signal-MPC}. For every
\(\pi\in\Pi^{IC}\), the feasible distribution \(\delta_\mu\preceq_{\mathrm{cx}}F\)
gives
\[
\mathcal J(\pi,\delta_\mu)
=-\operatorname{Var}_F(\theta)
-\mathbb E_{a\sim\pi(\cdot\mid\mu)}[(\mu-b-a)^2]
\leq-\operatorname{Var}_F(\theta).
\]
Conversely, the constant rule \(a(x)=\mu-b\) satisfies, for every
\(\bar G\preceq_{\mathrm{cx}}F\),
\[
\mathcal J(\pi,\bar G)
=-\operatorname{Var}_F(\theta)
+\operatorname{Var}_{\bar G}(x)
-\mathbb E_{\bar G}[(x-\mu)^2]
=-\operatorname{Var}_F(\theta).
\]
Thus, \eqref{eq:signal-maxmin} and \eqref{eq:signal-MPC} are value-equivalent.

Finally, let \(\pi^{\mathrm{cap}}\) denote the deterministic cap
\(z=\mu-b\), which induces \(a(x)=\min\{x,\mu-b\}\) at every posterior with
mean \(x\). This mechanism is feasible in the unrestricted class. Its action
loss satisfies
\[
(x-b-a(x))^2
=
\begin{cases}
b^2, & x\leq\mu-b,\\[3pt]
(x-\mu)^2, & x>\mu-b,
\end{cases}
\leq(x-\mu)^2.
\]
Thus, for every Bayes-plausible signal and its induced distribution \(\bar G\)
of posterior means,
\[
\mathcal J(\pi^{\mathrm{cap}},\bar G)
\geq
-\operatorname{Var}_F(\theta)
+\operatorname{Var}_{\bar G}(x)
-\mathbb E_{\bar G}[(x-\mu)^2]
=-\operatorname{Var}_F(\theta).
\]
The cap attains equality under \(\tau^{\mathrm{FP}}\). It is therefore robustly
optimal in both problems.
}
\end{proof}

\section{Proof of the deterministic benchmark}
\label{app:deterministic}

{
\paragraph{Action-space scope.}
\citet{HuLi2023} state their interval result for a compact action space spanned
by the agent's ideal actions, whereas we maintain $A=\mathbb R$. This difference
does not change the deterministic robust value. To see this, let $m$ be any
deterministic incentive-compatible rule, write
$V^m(\theta):=-[m(\theta)-(\theta-b)]^2$, and set $r:=m(1)$. Incentive
compatibility makes $m$ nondecreasing. If $r<0$, then $m\equiv r$, and cap zero
raises expected payoff by
\[
r^2-2r(\mu-b)>0.
\]
If $0\leq r<\mu$, type $r$ can obtain action $r$ by reporting one, so incentive
compatibility gives $m(r)=m(1)=r$. Nature may then choose
\[
G_r=\frac{1-\mu}{1-r}\delta_r
    +\frac{\mu-r}{1-r}\delta_1,
\]
which implies
\begin{equation}
\inf_{G\in\mathcal G_\mu}\mathbb E_G[V^m(\theta)]
\leq -b^2+(\mu-r)\bigl(r-(1-2b)\bigr).
\label{eq:action-range-bound}
\end{equation}
If $r\geq\mu$, an atomic mean-$\mu$ attack instead gives payoff at most
$-b^2$.\footnote{\rev{Any assigned action $a\in[0,\mu]$ is a fixed point,
$m(a)=a$, and can be paired with an upper fixed point (or with state one when
$r>1$). If there is no such action and $\delta_\mu$ does not already give the
bound, incentive compatibility implies $m(0)=m(\mu)=x<0$ and the first upward
jump is to some $y>\mu$ at $t=(x+y)/2\geq\mu$. A limiting mixture of state zero
and states just above $t$, with mean $\mu$, has loss above $b^2$ by
$x^2+\mu t+2b\mu+2x(b-\mu)-4b\mu x/t>0$.}}
Maximizing the right-hand side of \eqref{eq:action-range-bound}, together with
the other two cases, gives exactly \eqref{eq:optimal-det-cap-value}; the caps
identified below attain these bounds. Thus the unrestricted deterministic
problem and the cap benchmark are value-equivalent.
}

\paragraph{Problem formulation}
\rev{Combining the compact-benchmark interval reduction of \citet{HuLi2023}
with the scope check above, a robustly optimal deterministic rule can be chosen
to be an interval. It remains to normalize such intervals to the form $[0,z]$
with $z\in[0,1]$.}

To see the latter claim, let \(D=[\underline a,\bar a]\) be an interval. If
\(\underline a>0\), replacing \(\underline a\) by zero weakly improves the
principal's payoff state by state: for \(\theta<\underline a\), the agent's
action falls from \(\underline a\) to \(\theta\), which is closer to the
principal's ideal action \(\theta-b\). If \(\underline a\leq0\) and
\(\bar a\geq0\), the lower endpoint never binds. Upper endpoints above one are
irrelevant. Finally, if \(\bar a<0\), the interval induces the constant action
\(\bar a\), whereas the cap \(z=0\) raises the principal's expected payoff,
for every \(G\in\mathcal G_\mu\), by
\[
\bar a^2-2\bar a(\mu-b)>0.
\]
The deterministic benchmark therefore reduces to
\begin{equation}
V^D(\mu)
:=
\sup_{z\in[0,1]}V_z^D(\mu),
\qquad
V_z^D(\mu)
:=
\inf_{G\in\mathcal G_\mu}\phi_z(G).
\label{eq:deterministic-problem}
\end{equation}
The question is thus to determine nature's worst-case distribution against
each fixed cap and then optimize the resulting guarantee over \(z\).

\begin{theorem}[Deterministic benchmark]
\label{thm:deterministic-benchmark}
Under Assumption~\ref{ass:mu-range}, the robust payoff of a fixed cap
\(z\in[0,1]\) is
\begin{equation}
V_z^D(\mu)
=
\begin{cases}
-\Bigl((1-\mu)b^2+\mu(z+b-1)^2\Bigr),
& z\leq1-2b,\\[8pt]
-b^2,
& z>1-2b \ \text{and}\ \mu\leq z,\\[8pt]
-\left(
\dfrac{1-\mu}{1-z}b^2
+
\dfrac{\mu-z}{1-z}(z+b-1)^2
\right),
& 1-2b<z<\mu.
\end{cases}
\label{eq:fixed-cap-value}
\end{equation}
For each parameter region in \eqref{eq:fixed-cap-value}, a corresponding
worst-case distribution is
\[
G_z^D
=
\begin{cases}
(1-\mu)\delta_0+\mu\delta_1,
& z\leq1-2b,\\[8pt]
\delta_\mu,
& z>1-2b \ \text{and}\ \mu\leq z,\\[8pt]
\dfrac{1-\mu}{1-z}\delta_z
+
\dfrac{\mu-z}{1-z}\delta_1,
& 1-2b<z<\mu.
\end{cases}
\]

Consequently,
\begin{equation}
\argmax_{z\in[0,1]}V_z^D(\mu)
=
\begin{cases}
[1-2b,1],
& \mu\leq1-2b,\\[6pt]
\left\{\dfrac{\mu+1-2b}{2}\right\},
& \mu>1-2b,
\end{cases}
\label{eq:optimal-deterministic-cap}
\end{equation}
and
\begin{equation}
V^D(\mu)
=
\begin{cases}
-b^2,
& \mu\leq1-2b,\\[8pt]
-b^2+\dfrac{(\mu-(1-2b))^2}{4},
& \mu>1-2b.
\end{cases}
\label{eq:deterministic-value}
\end{equation}
Equivalently, in the second case,
\[
V^D(\mu)
=
-\left[
b(1-\mu)-\frac{(1-\mu)^2}{4}
\right].
\]
Moreover, the optimal random cap strictly dominates the deterministic
benchmark:
\[
V^D(\mu)<V^*.
\]
\end{theorem}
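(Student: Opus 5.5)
The plan is to treat each fixed cap as a one-dimensional linear moment problem for nature, solve it with a concave envelope, and then optimize the resulting guarantee over \(z\). Fix \(z\in[0,1]\) and write the principal's ex post loss as \(\lambda_z(\theta):=\bigl(\min\{\theta,z\}-(\theta-b)\bigr)^2\). This equals \(b^2\) on \([0,z]\) and \((\theta-z-b)^2\) on \((z,1]\), so \(V_z^D(\mu)=-\sup_{G\in\mathcal G_\mu}\mathbb E_G[\lambda_z]\). With a single mean restriction, that supremum is the upper concave envelope \(\widehat\lambda_z\) of \(\lambda_z\), evaluated at \(\mu\). It is attained by a distribution supported on at most two points, which is the usual Carath\'eodory/extreme-point fact for one linear constraint. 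The first step is therefore to compute \(\widehat\lambda_z(\mu)\) exactly.

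The shape of \(\lambda_z\) is simple. It is constant at \(b^2\) on \([0,z]\), then convex on \([z,1]\) starting from \(b^2\) at \(z\) and ending at \((1-z-b)^2\) at \(\theta=1\). Hence the maximum of \(\lambda_z\) is \(\max\{b^2,(1-z-b)^2\}\), and \((1-z-b)^2\ge b^2\) exactly when \(z\le1-2b\). I split into cases accordingly.

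\begin{itemize}
\item If \(z\le1-2b\), then \(\lambda_z\) on \((z,1]\) lies below the chord joining \((z,b^2)\) to \((1,(1-z-b)^2)\), by convexity. That chord is nondecreasing and is dominated by the chord from \((0,b^2)\) to \((1,(1-z-b)^2)\), because the constant part lets the left endpoint slide to \(0\). This latter chord is the concave envelope. Evaluating it at \(\mu\) gives the first line of \eqref{eq:fixed-cap-value}, with \(G_z^D=(1-\mu)\delta_0+\mu\delta_1\).
\item If \(z>1-2b\) and \(\mu\le z\), then \(\lambda_z\le b^2\) everywhere, again by convexity on \([z,1]\). The point mass \(\delta_\mu\) attains \(b^2\).
\item If \(1-2b<z<\mu\), the envelope on \([z,1]\) is the decreasing chord from \((z,b^2)\) to \((1,(1-z-b)^2)\), and it is constant \(b^2\) to the left of \(z\). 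I will check that this function is concave and majorizes \(\lambda_z\). The two-point law on \(\{z,1\}\) with mean \(\mu\) then gives the third line.
\end{itemize}

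Next I optimize over \(z\). When \(\mu\le1-2b\), the first regime gives at most \(-b^2\), with equality only at \(z=1-2b\). The second regime gives \(-b^2\) for all \(z\ge1-2b\), and the third regime is empty, so the argmax is \([1-2b,1]\). When \(\mu>1-2b\), the first two regimes give at most \(-b^2\), so I maximize the third-regime expression on \((1-2b,\mu)\).

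Writing \(s=1-z\), the loss in the third regime becomes
\[
\frac{(1-\mu)b^2+(s-(1-\mu))(s-b)^2}{s}.
\]
Simplifying this rational expression in \(s\) should reduce it to \(b^2-(1-\mu)\cdot(\text{linear in }s)+\) a quadratic part. The first-order condition should give \(z=(\mu+1-2b)/2\) uniquely, with value \(-b^2+(\mu-(1-2b))^2/4\). This simplification and the uniqueness and boundary comparison are the main computational obstacle. I would first expand the loss as \(b^2+(s-(1-\mu))(s-2b)\) and check that identity directly; the value is then a concave quadratic in \(s\) maximized at \(s=b+(1-\mu)/2\).

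Finally, for strict dominance, the case \(\mu\le1-2b\) is immediate, since \(-b^2<-b^2(1-\kappa^*)\). For \(\mu>1-2b\), set \(x:=(1-\mu)/b\in(0,2)\). The claim \(V^D<V^*\) becomes \(f(x):=x-x^2/4-1+e^{-x}>0\). Here \(f(0)=f'(0)=0\), \(f''(x)=e^{-x}-1/2\) changes sign once, so \(f\) rises and then possibly falls, and \(f(2)=e^{-2}>0\). Together these give \(f>0\) on \((0,2]\).
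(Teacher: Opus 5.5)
Your proposal is correct and follows essentially the same route as the paper. In each regime you use a concave (affine or piecewise-affine) majorant of the loss together with Jensen, with the same one- and two-point attacking distributions, and your factorization $b^2+(s-(1-\mu))(s-2b)$ of the loss is exactly the paper's reduction of the third-regime value to $-b^2+(\mu-z)\bigl(z-(1-2b)\bigr)$. The only difference is in the last step: the paper gets $e^{-x}>(1-x/2)^2$ directly by squaring $e^{-x/2}>1-x/2>0$, which is shorter than your sign analysis of $f''$, though your unimodality argument is also valid.
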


\begin{proof}
For a fixed cap \(z\), define the principal's loss
\[
u_z(\theta)
:=
\bigl(\min\{\theta,z\}-(\theta-b)\bigr)^2
=
\begin{cases}
b^2, & \theta\leq z,\\[4pt]
(z+b-\theta)^2, & \theta\geq z.
\end{cases}
\]
Since \(\phi_z(G)=-\mathbb E_G[u_z(\theta)]\), nature's problem is equivalent
to maximizing \(\mathbb E_G[u_z(\theta)]\) subject to
\(\mathbb E_G[\theta]=\mu\).

\medskip
\noindent
\emph{Step 1: The value of a fixed cap.}

Suppose first that \(z\leq1-2b\). Consider the affine function
\[
\ell_z(\theta)
:=
b^2+\theta\bigl((z+b-1)^2-b^2\bigr).
\]
For \(\theta\leq z\),
\[
\ell_z(\theta)-u_z(\theta)
=
\theta\bigl((z+b-1)^2-b^2\bigr)
\geq0.
\]
For \(\theta\geq z\),
\[
\ell_z(\theta)-u_z(\theta)
=
(1-\theta)\bigl(\theta-z(z+2b)\bigr)
\geq0,
\]
because \(z+2b\leq1\) implies \(z(z+2b)\leq z\leq\theta\). Thus,
\(u_z\leq\ell_z\) on \([0,1]\), and hence
\[
\mathbb E_G[u_z(\theta)]
\leq
\ell_z(\mu)
=
(1-\mu)b^2+\mu(z+b-1)^2.
\]
Equality is attained by \((1-\mu)\delta_0+\mu\delta_1\), since
\(\ell_z=u_z\) at both zero and one.

Now suppose that \(z>1-2b\). The case \(z=1\) is immediate, since
\(u_1(\theta)=b^2\) for every \(\theta\). For \(z<1\), define
\[
\overline u_z(\theta)
:=
\begin{cases}
b^2,
& \theta\leq z,\\[6pt]
b^2+
\dfrac{\theta-z}{1-z}
\bigl((z+b-1)^2-b^2\bigr),
& \theta\geq z.
\end{cases}
\]
The slope of \(\overline u_z\) to the right of \(z\) is
\[
\frac{(z+b-1)^2-b^2}{1-z}
=
1-z-2b<0.
\]
Thus, \(\overline u_z\) is concave. Moreover,
\[
\overline u_z(\theta)-u_z(\theta)
=
\begin{cases}
0, & \theta\leq z,\\[4pt]
(\theta-z)(1-\theta), & \theta\geq z,
\end{cases}
\]
so \(u_z\leq\overline u_z\) everywhere. Jensen's inequality therefore gives
\[
\mathbb E_G[u_z(\theta)]
\leq
\mathbb E_G[\overline u_z(\theta)]
\leq
\overline u_z(\mu).
\]

If \(\mu\leq z\), then \(\overline u_z(\mu)=b^2\), and equality is attained
by \(\delta_\mu\). If \(\mu>z\), then
\[
\overline u_z(\mu)
=
\frac{1-\mu}{1-z}b^2
+
\frac{\mu-z}{1-z}(z+b-1)^2,
\]
and equality is attained by
\[
\frac{1-\mu}{1-z}\delta_z
+
\frac{\mu-z}{1-z}\delta_1.
\]
This proves \eqref{eq:fixed-cap-value}.

\medskip
\noindent
\emph{Step 2: Optimization over caps.}

Let \(c:=1-2b\). For \(z\leq c\), equation
\eqref{eq:fixed-cap-value} implies
\[
V_z^D(\mu)\leq-b^2,
\]
with equality at \(z=c\). For \(z>c\) and \(z\geq\mu\), we also have
\(V_z^D(\mu)=-b^2\). Finally, for \(c<z<\mu\),
\begin{equation}
V_z^D(\mu)
=
-b^2+(\mu-z)(z-c).
\label{eq:fixed-cap-simplification}
\end{equation}

If \(\mu\leq c\), the interval \((c,\mu)\) is empty, so every
\(z\in[c,1]\) is optimal and the value is \(-b^2\). If \(\mu>c\), the
second term in \eqref{eq:fixed-cap-simplification} is a strictly concave
quadratic in \(z\), uniquely maximized at
\[
z_D^*
=
\frac{\mu+c}{2}
=
\frac{\mu+1-2b}{2}.
\]
Its maximum is \((\mu-c)^2/4\), proving
\eqref{eq:optimal-deterministic-cap} and
\eqref{eq:deterministic-value}.

It remains to compare the deterministic value with
\[
V^*
=
-b^2\left(1-e^{-(1-\mu)/b}\right).
\]
If \(\mu\leq1-2b\), then
\[
V^*-V^D(\mu)
=
b^2e^{-(1-\mu)/b}>0.
\]
If \(\mu>1-2b\), let \(x:=(1-\mu)/b\in(0,2)\). Then
\[
V^*-V^D(\mu)
=
b^2\left[
e^{-x}-\left(1-\frac{x}{2}\right)^2
\right]
>0,
\]
where the inequality follows from
\(e^{-x/2}>1-x/2>0\). Hence \(V^D(\mu)<V^*\) in both cases.
\end{proof}

\section{The general saddle point and the boundary regime}
\label{app:general-saddle}

\paragraph{Problem formulation}
Fix \(0<b<1\) and \(\mu\in[0,1]\). We consider the fixed-mean max--min and
min--max problems
\[
\sup_{\pi\in\Pi^{\mathrm{IC}}}
\inf_{G\in\mathcal G_\mu}W(\pi,G)
\qquad\text{and}\qquad
\inf_{G\in\mathcal G_\mu}
\sup_{\pi\in\Pi^{\mathrm{IC}}}W(\pi,G).
\]
The question is whether the mechanism \(\pi_\mu^*\) defined in
\eqref{eq:Rstar-all-mu}--\eqref{eq:boundary-A} and a suitable extension of the
shifted exponential-tail distribution form a saddle point for every mean,
including the boundary regime \(\mu<b\).

Define the cdf
\begin{equation}
H_\mu^*(\theta)
=
\begin{cases}
0,
&\theta<(\mu-b)_+,\\[4pt]
1-e^{-\frac{\theta-(\mu-b)_+}{b}},
&(\mu-b)_+\leq\theta<1-b,\\[8pt]
1-e^{-\frac{1-b-(\mu-b)_+}{b}},
&1-b\leq\theta<1,\\[8pt]
1,
&\theta\geq1,
\end{cases}
\label{eq:Hstar-all-mu}
\end{equation}
and let
\begin{equation}
G_\mu^*
=
\min\left\{\frac{\mu}{b},1\right\}H_\mu^*
+
\left(1-\min\left\{\frac{\mu}{b},1\right\}\right)\delta_0.
\label{eq:Gstar-all-mu}
\end{equation}
The distribution \(H_\mu^*\) has mean
\(b+(\mu-b)_+=\max\{\mu,b\}\), so \(G_\mu^*\) has mean \(\mu\). When
\(\mu\geq b\), the atom at zero vanishes and \(G_\mu^*=H_\mu^*\) is the
baseline shifted exponential-tail distribution. When \(\mu<b\),
\eqref{eq:Gstar-all-mu} becomes
\[
G_\mu^*
=
\frac{\mu}{b}G_b^*
+
\left(1-\frac{\mu}{b}\right)\delta_0,
\]
as stated in \eqref{eq:boundary-G-description}.

The candidate mechanism \(\pi_\mu^*\) draws a cap \(Z\), independently of
the state, from the cdf
\[
R_\mu^*(z)
=
\begin{cases}
0,
&z<(\mu-b)_+,\\[4pt]
\dfrac12e^{\frac{z-1+b}{b}},
&(\mu-b)_+\leq z<1-b,\\[8pt]
1,
&z\geq1-b.
\end{cases}
\]
After observing \(Z\), the agent chooses a base action \(y\in[0,Z]\), and
the final action is drawn from the lottery \(A_y^\mu\) in
\eqref{eq:boundary-A}.

\begin{proposition}[General saddle point]
\label{prop:general-saddle}
For every \(\mu\in[0,1]\), the distribution \(G_\mu^*\) belongs to
\(\mathcal G_\mu\), and the pair \((\pi_\mu^*,G_\mu^*)\) satisfies
\begin{equation}
W(\pi_\mu^*,G)
\geq
W(\pi_\mu^*,G_\mu^*)
\geq
W(\pi,G_\mu^*)
\label{eq:general-saddle}
\end{equation}
for every \(\pi\in\Pi^{\mathrm{IC}}\) and \(G\in\mathcal G_\mu\).
Consequently,
\[
\sup_{\pi\in\Pi^{\mathrm{IC}}}
\inf_{G\in\mathcal G_\mu}W(\pi,G)
=
\inf_{G\in\mathcal G_\mu}
\sup_{\pi\in\Pi^{\mathrm{IC}}}W(\pi,G)
=
V^*(\mu),
\]
where
\begin{equation}
V^*(\mu)
=
-\operatorname{Var}_{G_\mu^*}(\theta)
=
\begin{cases}
-\mu\left(
2b-\mu-b e^{-\frac{1-b}{b}}
\right),
&0\leq\mu<b,\\[8pt]
-b^2\left(
1-e^{-\frac{1-\mu}{b}}
\right),
&b\leq\mu\leq1.
\end{cases}
\label{eq:value-all-mu}
\end{equation}

Moreover,
\[
\argmin_{G\in\mathcal G_\mu}W(\pi_\mu^*,G)
=
\left\{
G\in\mathcal G_\mu:
\operatorname{supp}(G)
\subseteq
[(\mu-b)_+,1-b]\cup\{1\}
\right\}.
\]
Finally, if \(\mu<b\), every pure random-cap mechanism is strictly suboptimal.
More precisely,
\begin{equation}
\sup_{\pi\in\Pi_{\mathrm{RC}}}
\inf_{G\in\mathcal G_\mu}W(\pi,G)
\leq
V^*(\mu)-(b-\mu)^2
<
V^*(\mu).
\label{eq:pure-cap-strict}
\end{equation}
\end{proposition}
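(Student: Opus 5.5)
For \(\mu\geq b\), all objects coincide with the baseline ones (\(G_\mu^*=G^*\), \(R_\mu^*=R^*\), \(A_y^\mu=\delta_y\)). The saddle inequalities and the argmin characterization are then exactly Theorem~\ref{thm:saddle} and Proposition~\ref{prop:Gmin-under-pistar}. So the work is the regime \(\mu<b\). Write \(d:=b-\mu>0\). There \(R_\mu^*=R_b^*=R^*|_{\mu=b}\), so \(\pi_\mu^*\) is the baseline mechanism \(\pi_b^*\) followed by the lottery \(A_y^\mu\). That lottery has mean \(y-d\) (check: \(\tfrac{2y\cdot y-d(y+d)}{2y+d}=y-d\)) and variance \(2yd\). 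Conditional on the base-action lottery, the agent's expected loss from base action \(y\) is \((\theta-y+d)^2+2yd=(\theta-y)^2+2d\theta+d^2\). The extra term does not depend on \(y\), so the agent still picks \(y=\min\{\theta,Z\}\), and \(\pi_\mu^*\) is IC. The principal's loss is \((\theta-b-y+d)^2+2yd=(\theta-\mu-y)^2+2yd\). Averaging over \(Z\) then gives an explicit interim payoff \(V^{\pi_\mu^*}\) in terms of \(m^{\pi_b^*}\) and \(q^{\pi_b^*}\), namely
\[
V^{\pi_\mu^*}(\theta)=V^{\pi_b^*}(\theta)-d^2+2d\bigl(\theta-b\bigr)\cdot 0+\dots ,
\]
which I will compute directly rather than guess. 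Membership \(G_\mu^*\in\mathcal G_\mu\) is immediate from the stated mean of \(H_\mu^*\).

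\textbf{Upper bound against \(G_\mu^*\).} Since \(G_\mu^*\) has an atom at \(0\), I apply Lemma~\ref{lem:EV-formula} with \(x_0=0\). The measure \(\nu_{G_\mu^*}\) is \((\mu/b)\nu_{G_b^*}\) on \((0,1]\) plus an atom \(-b(1-\mu/b)\delta_0=-d\,\delta_0\). The \(\mu/b\) part is computed as in \eqref{eq:nu-Gstar}, except that the lower endpoint is now \(0\). As before, the integral of \(m^\pi\) against the \((0,1]\) part is \(\leq0\) by monotonicity. The remaining terms are \(U^\pi(0)-2d\,m^\pi(0)=-(m^\pi(0))^2-q^\pi(0)-2dm^\pi(0)\). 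This is maximized at \(m^\pi(0)=-d\), \(q^\pi(0)=0\), with value \(d^2\). The bound therefore becomes \(W(\pi,G_\mu^*)\leq d^2-\mathbb E_{G_\mu^*}[(\theta-b)^2]=-\operatorname{Var}_{G_\mu^*}(\theta)\), since \(\mathbb E(\theta-b)^2=\operatorname{Var}+d^2\). The mechanism \(\pi_\mu^*\) attains both equalities: at state \(0\) it plays \(\mu-b=-d\) surely, and \(m\) is constant on \([1-b,1]\). Evaluating the variance gives \eqref{eq:value-all-mu}.

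\textbf{Lower bound against all \(G\).} I will exhibit an affine minorant \(L_\mu(\theta)=V^*(\mu)+\beta_\mu(\theta-\mu)\) of \(V^{\pi_\mu^*}\) with contact set exactly \([0,1-b]\cup\{1\}\). Using the identity above, \(V^{\pi_\mu^*}-V^{\pi_b^*}\) should be affine in \(\theta\). Given that, the baseline difference formula \eqref{eq:Vminusell} at \(\mu=b\) (where \(\mu-b=0\), so the first branch is vacuous) transfers directly: the gap is \(0\) on \([0,1-b]\) and \((\theta-1+b)(1-\theta)\geq 0\) above. Then \(W(\pi_\mu^*,G)\geq L_\mu(\mu)=V^*(\mu)\), with equality iff \(\operatorname{supp}G\) lies in the contact set. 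This yields the argmin claim, and combining the two bounds gives the saddle and value equality as in the proof of Theorem~\ref{thm:saddle}. The main obstacle is verifying that the correction term is indeed affine, including the \(2yd\) variance piece, since \(\mathbb E[y]=m^{\pi_b^*}(\theta)\) enters both the square and the variance. I expect the \(-2d\,m\) cross term and the \(+2d\,m\) variance term to cancel, leaving \(2d(\ldots)\theta+\) const.

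\textbf{Strict suboptimality of pure random caps.} Any random cap has action \(0\) at \(\theta=0\), so \(U^\pi(0)=0\) and \(m^\pi(0)=0\). Plugging this into the decomposition at \(G_\mu^*\), the boundary term equals \(0\) instead of \(d^2\). Hence \(W(\pi,G_\mu^*)\leq V^*(\mu)-d^2\), and taking the infimum over \(G\) yields \eqref{eq:pure-cap-strict}.
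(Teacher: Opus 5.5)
Your proposal is correct. The IC check, the upper bound against $G_\mu^*$ (decomposition with $x_0=0$, the atom $-d\,\delta_0$ in $\nu_{G_\mu^*}$, completing the square in $m^\pi(0)$) and the strict-suboptimality step are essentially the paper's Steps 2--3 and its final step.

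You depart from the paper in the lower bound. The paper computes $V^{\pi_\mu^*}$ directly and uniformly in $\mu$ through the moments $M_\mu,S_\mu$ of $\min\{\theta,Z\}$. It finds a constant slope on $[(\mu-b)_+,1-b]$, evaluates the value at the lower endpoint, and obtains a quadratic on $[1-b,1]$. You instead reduce to the baseline mechanism $\pi_b^*$. The identity you were after does hold, though not in the form you displayed: the coefficient on $2d(\theta-b)$ is $-1$, not $0$. Expanding the principal's conditional loss gives $(\theta-\mu-y)^2+2yd=(\theta-b-y)^2+2d(\theta-b)+d^2$. The terms linear in $y$ cancel pointwise, as you anticipated, so $V^{\pi_\mu^*}(\theta)=V^{\pi_b^*}(\theta)-2d(\theta-b)-d^2$. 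Hence $L_\mu:=\ell_b-2d(\theta-b)-d^2=-\mu^2+(be^{-(1-b)/b}-2d)\theta$, which is exactly the paper's \eqref{eq:affine-all-mu}. The gap $V^{\pi_\mu^*}-L_\mu$ is literally the $\mu=b$ gap in \eqref{eq:Vminusell}, so the contact set $[0,1-b]\cup\{1\}$ carries over unchanged.

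Your route is more modular, and it explains why the incentive-neutral lottery does not disturb nature's problem. The lottery perturbs the principal's interim payoff only by an affine function of $\theta$, just as it perturbs the agent's payoff only by a report-independent term. The paper's route is self-contained and avoids the case split.

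One bookkeeping point: Theorem~\ref{thm:saddle} and the calculation behind \eqref{eq:Vminusell} are developed under Assumption~\ref{ass:mu-range}, which excludes $\mu=b$ and $\mu=1$. Yet you invoke them at exactly those values, both for the case $\mu\geq b$ and for the reduction at $\mu=b$. Their proofs go through verbatim at these endpoints, since nothing uses the strict inequalities, but you should say so explicitly.
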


\begin{proof}
We proceed in four steps.

\medskip\noindent
\emph{Step 1: Feasibility of \(G_\mu^*\).}

The function in \eqref{eq:Hstar-all-mu} is a valid cdf. By the tail-integral
formula,
\[
\begin{aligned}
\mathbb E_{H_\mu^*}[\theta]
&=
(\mu-b)_+
+
\int_{(\mu-b)_+}^{1-b}
e^{-\frac{x-(\mu-b)_+}{b}}\,dx
+
b e^{-\frac{1-b-(\mu-b)_+}{b}}\\
&=
(\mu-b)_++b.
\end{aligned}
\]
It follows from \eqref{eq:Gstar-all-mu} that
\[
\mathbb E_{G_\mu^*}[\theta]
=
\min\left\{\frac{\mu}{b},1\right\}
\bigl[b+(\mu-b)_+\bigr]
=
\mu.
\]
Hence \(G_\mu^*\in\mathcal G_\mu\).

\medskip\noindent
\emph{Step 2: Incentive compatibility of \(\pi_\mu^*\).}

The outcome lottery in \eqref{eq:boundary-A} satisfies
\begin{equation}
\mathbb E[A_y^\mu]
=y-(\mu-b)_-,
\qquad
\mathbb E[(A_y^\mu)^2]
=y^2+(\mu-b)_-^2,
\qquad
\operatorname{Var}(A_y^\mu)
=2y(\mu-b)_-.
\label{eq:boundary-lottery-moments}
\end{equation}
These identities also hold under the convention \(A_y^\mu=\delta_y\) when
\((\mu-b)_-=0\).

If the true state is \(\theta\) and the selected base action is \(y\), then
\eqref{eq:boundary-lottery-moments} gives
\[
\begin{aligned}
\mathbb E\left[(\theta-A_y^\mu)^2\right]
&=
\theta^2-2\theta\bigl[y-(\mu-b)_-\bigr]
+y^2+(\mu-b)_-^2\\
&=
(\theta-y)^2
+2(\mu-b)_-\theta
+(\mu-b)_-^2.
\end{aligned}
\]
The last two terms do not depend on \(y\). Conditional on a realized cap
\(Z\), the agent therefore minimizes \((\theta-y)^2\) over \(y\in[0,Z]\), and
hence chooses
\[
y=\min\{\theta,Z\}.
\]
Thus, the additional outcome lottery is incentive neutral and the induced
direct mechanism \(\pi_\mu^*\) belongs to \(\Pi^{\mathrm{IC}}\).

\medskip\noindent
\emph{Step 3: \(G_\mu^*\) bounds every incentive-compatible mechanism from
above.}

Fix \(\pi\in\Pi^{\mathrm{IC}}\), and write \(m=m^\pi\) and \(q=q^\pi\). Since
\(\operatorname{supp}(G_\mu^*)\subseteq[(\mu-b)_+,1]\), apply the payoff
decomposition with \(x_0=(\mu-b)_+\). On
\([(\mu-b)_+,1-b)\), the continuous parts of the associated signed measure
cancel. The possible atom at the lower endpoint, the flat upper survival
function, and the atom at one give
\begin{align}
\nu_{G_\mu^*}(dx)
={}&
-(\mu-b)_-\delta_{(\mu-b)_+}(dx)\notag\\
&+
\min\left\{\frac{\mu}{b},1\right\}
e^{-\frac{1-b-(\mu-b)_+}{b}}
\mathbf 1_{[1-b,1)}(x)\,dx\notag\\
&-
b\min\left\{\frac{\mu}{b},1\right\}
e^{-\frac{1-b-(\mu-b)_+}{b}}
\delta_1(dx).
\label{eq:nu-all-mu}
\end{align}
The payoff decomposition therefore becomes
\begin{align}
W(\pi,G_\mu^*)
={}&
U^\pi((\mu-b)_+)
+(\mu-b)_+^2
-2(\mu-b)_-m((\mu-b)_+)\notag\\
&\quad+
2\min\left\{\frac{\mu}{b},1\right\}
e^{-\frac{1-b-(\mu-b)_+}{b}}
\left[
\int_{1-b}^1m(x)\,dx-bm(1)
\right]\notag\\
&\quad-
\mathbb E_{G_\mu^*}\bigl[(\theta-b)^2\bigr].
\label{eq:upper-general}
\end{align}

Incentive compatibility implies that \(m\) is nondecreasing. Therefore,
\[
\int_{1-b}^1m(x)\,dx-bm(1)
=
\int_{1-b}^1[m(x)-m(1)]\,dx
\leq0.
\]
Moreover, using
\((\mu-b)_+-(\mu-b)_-=\mu-b\),
\[
\begin{aligned}
&U^\pi((\mu-b)_+)
+(\mu-b)_+^2
-2(\mu-b)_-m((\mu-b)_+)\\
&\qquad=
(\mu-b)^2
-\bigl[m((\mu-b)_+)-(\mu-b)\bigr]^2
-q((\mu-b)_+)\\
&\qquad\leq
(\mu-b)^2.
\end{aligned}
\]
Substituting these inequalities into \eqref{eq:upper-general} yields
\[
W(\pi,G_\mu^*)
\leq
(\mu-b)^2
-
\mathbb E_{G_\mu^*}\bigl[(\theta-b)^2\bigr]
=
-\operatorname{Var}_{G_\mu^*}(\theta).
\]
Thus,
\begin{equation}
W(\pi,G_\mu^*)
\leq
-\operatorname{Var}_{G_\mu^*}(\theta)
\qquad
\text{for every }\pi\in\Pi^{\mathrm{IC}}.
\label{eq:upper-all-mu}
\end{equation}

Under \(\pi_\mu^*\), the base action at
\(\theta=(\mu-b)_+\) equals \((\mu-b)_+\). Since
\((\mu-b)_+(\mu-b)_-=0\), the final action is degenerate at
\[
(\mu-b)_+-(\mu-b)_-=\mu-b.
\]
Consequently,
\[
m^{\pi_\mu^*}((\mu-b)_+)=\mu-b,
\qquad
q^{\pi_\mu^*}((\mu-b)_+)=0.
\]
In addition, all caps are weakly below \(1-b\), so
\(m^{\pi_\mu^*}\) is constant on \([1-b,1]\). Both inequalities above are
therefore equalities under \(\pi_\mu^*\), and
\begin{equation}
W(\pi_\mu^*,G_\mu^*)
=
-\operatorname{Var}_{G_\mu^*}(\theta).
\label{eq:upper-attained-all-mu}
\end{equation}

\medskip\noindent
\emph{Step 4: \(G_\mu^*\) is a worst case against \(\pi_\mu^*\).}

For \(Z\sim R_\mu^*\), define
\[
M_\mu(\theta)
:=\mathbb E[\min\{\theta,Z\}],
\qquad
S_\mu(\theta)
:=\mathbb E[\min\{\theta,Z\}^2].
\]
Consider the affine function
\begin{align}
L_\mu(\theta)
:={}&
-\bigl[\mu-(\mu-b)_+\bigr]^2\notag\\
&+
\left[
b e^{-\frac{1-b-(\mu-b)_+}{b}}
-2(\mu-b)_-
\right]
\bigl[\theta-(\mu-b)_+\bigr].
\label{eq:affine-all-mu}
\end{align}

By \eqref{eq:boundary-lottery-moments}, the interim payoff under
\(\pi_\mu^*\) is
\begin{equation}
V^{\pi_\mu^*}(\theta)
=
-\bigl[\theta-\mu+(\mu-b)_+\bigr]^2
+2(\theta-b)M_\mu(\theta)
-S_\mu(\theta).
\label{eq:interim-general}
\end{equation}
For \((\mu-b)_+\leq\theta<1-b\),
\[
R_\mu^*(\theta)
=
\frac12e^{\frac{\theta-1+b}{b}},
\]
and hence
\[
\begin{aligned}
M_\mu(\theta)
&=
\theta-bR_\mu^*(\theta)
+\frac b2e^{-\frac{1-b-(\mu-b)_+}{b}},\\
M_\mu'(\theta)
&=1-R_\mu^*(\theta),\\
S_\mu'(\theta)
&=2\theta[1-R_\mu^*(\theta)].
\end{aligned}
\]
Differentiating \eqref{eq:interim-general} therefore gives
\[
\frac{d}{d\theta}V^{\pi_\mu^*}(\theta)
=
b e^{-\frac{1-b-(\mu-b)_+}{b}}
-2(\mu-b)_-.
\]
At \(\theta=(\mu-b)_+\), the final action is degenerate at \(\mu-b\), so
\[
V^{\pi_\mu^*}((\mu-b)_+)
=
-\bigl[\mu-(\mu-b)_+\bigr]^2.
\]
It follows that
\[
V^{\pi_\mu^*}(\theta)=L_\mu(\theta)
\qquad
\text{for }\theta\in[(\mu-b)_+,1-b].
\]

If \(\theta<(\mu-b)_+\), then necessarily \(\mu\geq b\), every cap exceeds
\(\theta\), and the induced action is \(a=\theta\). Hence
\(V^{\pi_\mu^*}(\theta)=-b^2\), which gives
\[
V^{\pi_\mu^*}(\theta)-L_\mu(\theta)
=
b e^{-\frac{1-b-(\mu-b)_+}{b}}
\bigl[(\mu-b)_+-\theta\bigr].
\]

Finally, for \(\theta>1-b\), the base action equals \(Z\), so \(M_\mu\) and
\(S_\mu\) are constant. Moreover,
\[
M_\mu(1-b)
=
1-b-\frac b2
+\frac b2e^{-\frac{1-b-(\mu-b)_+}{b}}.
\]
The difference \(V^{\pi_\mu^*}-L_\mu\) satisfies
\[
\bigl[V^{\pi_\mu^*}-L_\mu\bigr](1-b)=0,
\qquad
\bigl[V^{\pi_\mu^*}-L_\mu\bigr]'(1-b+)=b,
\qquad
\bigl[V^{\pi_\mu^*}-L_\mu\bigr]''(\theta)=-2.
\]
Therefore,
\[
V^{\pi_\mu^*}(\theta)-L_\mu(\theta)
=
(\theta-(1-b))(1-\theta)
\qquad
\text{for }\theta\in[1-b,1].
\]

Combining the three regions,
\begin{equation}
V^{\pi_\mu^*}(\theta)-L_\mu(\theta)
=
\begin{cases}
b e^{-\frac{1-b-(\mu-b)_+}{b}}
\bigl[(\mu-b)_+-\theta\bigr],
&0\leq\theta<(\mu-b)_+,\\[6pt]
0,
&(\mu-b)_+\leq\theta\leq1-b,\\[4pt]
(\theta-(1-b))(1-\theta),
&1-b<\theta\leq1.
\end{cases}
\label{eq:global-minorant-all-mu}
\end{equation}
Thus, \(L_\mu\) is a global affine minorant of \(V^{\pi_\mu^*}\), with
contact set
\[
[(\mu-b)_+,1-b]\cup\{1\}.
\]

For every \(G\in\mathcal G_\mu\),
\[
W(\pi_\mu^*,G)
=
\mathbb E_G[V^{\pi_\mu^*}(\theta)]
\geq
\mathbb E_G[L_\mu(\theta)]
=
L_\mu(\mu).
\]
Since \(G_\mu^*\) is supported on the contact set,
\[
W(\pi_\mu^*,G_\mu^*)=L_\mu(\mu).
\]
Together with \eqref{eq:upper-attained-all-mu}, this gives
\[
L_\mu(\mu)
=
-\operatorname{Var}_{G_\mu^*}(\theta).
\]
Equation \eqref{eq:global-minorant-all-mu} also shows that equality holds if
and only if \(G\) is concentrated on
\([(\mu-b)_+,1-b]\cup\{1\}\), proving the stated characterization of
nature's best responses.

It remains only to evaluate the value. If \(\mu\geq b\), then
\((\mu-b)_+=\mu-b\) and \((\mu-b)_-=0\), so
\[
L_\mu(\mu)
=
-b^2+b^2e^{-\frac{1-\mu}{b}}.
\]
If \(\mu<b\), then \((\mu-b)_+=0\) and \((\mu-b)_-=b-\mu\), so
\[
\begin{aligned}
L_\mu(\mu)
&=
-\mu^2
+
\mu\left[
b e^{-\frac{1-b}{b}}-2(b-\mu)
\right]\\
&=
-\mu\left(
2b-\mu-b e^{-\frac{1-b}{b}}
\right).
\end{aligned}
\]
This proves \eqref{eq:value-all-mu}. Combining the upper and lower bounds
proves the saddle inequalities \eqref{eq:general-saddle} and the
max--min/min--max equality.

\medskip\noindent
\emph{Strict suboptimality of pure random caps when \(\mu<b\).}

Let \(\pi\in\Pi_{\mathrm{RC}}\) be any pure random-cap mechanism. At the
bottom state it implements action zero with probability one, so
\[
m^\pi(0)=0,
\qquad
U^\pi(0)=0.
\]
When \(\mu<b\), we have \((\mu-b)_+=0\) and \((\mu-b)_-=b-\mu\).
Substituting these boundary conditions into \eqref{eq:upper-general}, and
again using monotonicity of \(m^\pi\), gives
\[
W(\pi,G_\mu^*)
\leq
-\mathbb E_{G_\mu^*}\bigl[(\theta-b)^2\bigr].
\]
Since \(\mathbb E_{G_\mu^*}[\theta]=\mu\),
\[
-\mathbb E_{G_\mu^*}\bigl[(\theta-b)^2\bigr]
=
-\operatorname{Var}_{G_\mu^*}(\theta)
-(b-\mu)^2
=
V^*(\mu)-(b-\mu)^2.
\]
Therefore,
\[
\inf_{G\in\mathcal G_\mu}W(\pi,G)
\leq
W(\pi,G_\mu^*)
\leq
V^*(\mu)-(b-\mu)^2
<
V^*(\mu).
\]
Taking the supremum over \(\pi\in\Pi_{\mathrm{RC}}\) proves
\eqref{eq:pure-cap-strict}.
\end{proof}

\clearpage
\bibliographystyle{plainnat}
\bibliography{references}

\end{document}